\documentclass[lettersize,journal]{IEEEtran}
\usepackage[IEEEtran]{./research17}

\usepackage{multirow}    
\usepackage{amsmath,amssymb,amsfonts}
\usepackage{graphicx}
\usepackage{textcomp}
\usepackage{xcolor}
\def\BibTeX{{\rm B\kern-.05em{\sc i\kern-.025em b}\kern-.08em
		T\kern-.1667em\lower.7ex\hbox{E}\kern-.125emX}}

\usepackage[numbers,sort&compress]{natbib}
\usepackage{algorithm}
\usepackage{algorithmic}

\usepackage{fancyhdr}

\usepackage{graphicx}
\usepackage{subcaption}

\newtheorem{thm}{Theorem}
\newtheorem{lem}{Lemma}
\newtheorem{exam}{Example}
\newtheorem{rmk}{Remark}
\newtheorem{prop}{Proposition}

\makeatletter
\def\thanks#1{\protected@xdef\@thanks{\@thanks
		\protect\footnotetext{#1}}}
\makeatother

\usepackage[colorlinks=true,
linkcolor=blue,
citecolor=blue,
urlcolor=blue]{hyperref}

\begin{document}
\title{Capacity Bounds and High-SNR Characterization for  MIMO-OWC Channels Under Average-Power Constraint}

\author{Sufang~Yang, Liang Xia,
	Longguang~Li,~\IEEEmembership{Member,~IEEE,}
	{Jintao~Wang},~\IEEEmembership{Fellow,~IEEE,} \\
	Tao Jiang, Yuxin Wang, Ya Li, Hongjun He, Qixing Wang, and Guangyi Liu
	\thanks{{The two authors contributes equally: Sufang Yang, Liang Xia.}}
	\thanks{{This work was supported in part by the National Natural Science Foundation of China under Grant No. 62101192, in part by the Mobile Information Networks-National Science and Technology Major Project under Grant No.2025ZD1303200, and in part by the Science and Technology Research and Development Program of China State Railway Group Co., Ltd. under Grant No. L2024G005}. \emph{(Corresponding author: Longguang Li.)}}
	\thanks{Sufang~Yang, Tao Jiang, Yuxin Wang, Ya Li, Hongjun He, Qixing Wang, and Guangyi Liu are with the Future Research Laboratory, China Mobile Research Institute, Beijing 100053, China (e-mail: yangsufang@chinamobile.com).}
	\thanks{Liang Xia is with the School of Information Science and Engineering, Southeast University, Nanjing 210096, China, also with the Future Research Laboratory, China Mobile Research Institute, Beijing 100053, China (e-mail: xialiang@chinamobile.com).}
	\thanks{Longguang Li is with SJTU Paris Elite Institute of Technology, Shanghai Jiao Tong University, Shanghai 200240, China (e-mail: llg9012@sjtu.edu.cn).}
	\thanks{{Jintao~Wang is with the Department of Electronic Engineering, Tsinghua University, Beijing 100084, China (e-mail: wangjintao@tsinghua.edu.cn).}}
}

\date{}

\maketitle
\begin{abstract}
This paper investigates the capacity of multiple-input multiple-output (MIMO) optical wireless communication (OWC) channels under a total average-power constraint.
Since different nonnegative input vectors can be mapped to the same image vector and thus induce the same output distribution, we formulate a nonnegative basis pursuit (NN-BP) problem to identify the minimum-$\ell_1$-norm input vector for each image vector. Based on the NN-BP characterization, we derive an equivalent expression for the channel capacity in terms of the image-vector distribution. We then establish computable lower and upper capacity bounds for both $n_\textnormal{T}\geq n_\textnormal{R}$ and $n_\textnormal{T}< n_\textnormal{R}$ cases, and prove that the proposed bounds are asymptotically tight in the high signal-to-noise ratio (SNR) regime. Numerical results for indoor and outdoor OWC scenarios demonstrate that the proposed bounds improve upon existing ones and close the constant gap in the high-SNR regime.
\end{abstract}

\begin{IEEEkeywords}
Channel capacity, intensity modulation and direct detection, optical communication, average-power constraint, MIMO.
\end{IEEEkeywords}

\vspace{-4mm}
\section{Introduction}

\IEEEPARstart{O}{ptical} wireless communication (OWC) has attracted considerable attention as a promising technology for future networks, owing to its high data rates, large unlicensed bandwidth, immunity to electromagnetic interference, low cost, and flexible deployment \cite{Kaushal2017,Pathak2015,Khalighi2014}. Typical OWC systems include indoor visible light communication (VLC) and outdoor free-space optical communication (FSO) \cite{Karunatilaka2015,Zhu2002}. Unlike radio frequency (RF) systems, OWC systems usually adopt an \emph{intensity-modulation/direct-detection} scheme \cite{Tavan2012,Elzanaty2020}. The transmitter employs light-emitting diodes (LEDs) or laser diodes (LDs) to generate optical signal whose power varies at high speed, while the receiver uses photodetectors (PDs) to measure the incident optical power, from which the transmitted information is recovered. The IM/DD mechanism imposes distinctive constraints on OWC channel inputs. Since the transmitted signal represents optical power, the channel input is required to be real-valued and nonnegative, in contrast to the complex-valued counterpart in RF. Moreover, due to practical considerations such as illumination and safety requirements, the channel input is usually subject to an average-power constraint. These fundamentally alter the feasible input space and make classical capacity results for complex-valued RF channels inapplicable. Therefore, this paper investigates the capacity of OWC channels under nonnegativity and an average-power constraint from an information-theoretic perspective \cite{Shannon1949}.

Recent years have witnessed significant progress in the capacity analysis of single-input single-output (SISO) OWC channels. Existing studies have primarily focused on bounding the channel capacity and identifying its asymptotic behavior in the high and low signal-to-noise ratio (SNR) regimes. For instance, \cite{Lapidoth2009} adopted an exponential distribution as the input, which corresponds to the maxentropic source distribution under the average-power constraint. Combined with the entropy power inequality (EPI), this input choice yields a lower bound on channel capacity. On the other hand, by employing the duality approach with different auxiliary output distributions, upper bounds on channel capacity were also developed. Theoretical analysis showed that the resulting bounds are asymptotically tight in the high-SNR regime, whereas a non-negligible gap persists in the low-SNR regime. To tighten the bound at low SNR, recent studies \cite{Li2025,Jiang2024} constructed novel auxiliary output distributions and established upper bounds that are asymptotically optimal in this regime. Beyond the aforementioned classical bounding techniques, alternative approaches have also been developed for capacity analysis of SISO-OWC channels. Regarding upper bounds, \cite{Hranilovic2004,Farid2010,Wang2013, Chaaban2016,Jiang2016} employed a sphere-packing method to refine the bound. Regarding lower bounds, \cite{Hranilovic2004,Farid2010} explored alternative input distributions, such as continuous exponential distributions and discrete geometric distributions with feasible average power, and then optimized their parameters numerically to obtain lower bounds on channel capacity.

Despite the advances in SISO-OWC channels, fewer results are available for multiple-input multiple-output (MIMO) channels. Existing studies have mainly derived capacity bounds and asymptotic results for several representative configurations. For the case $n_\textnormal{T} < n_\textnormal{R}$, where the number of transmit apertures is smaller than that of receive apertures, \cite{Moser2017-ISIT,Chaaban2018-hsnr} showed that this MIMO channel can be transformed equivalently into a set of parallel SISO channels. This equivalent transformation enables the application of existing SISO capacity results, thereby simplifying the derivation of capacity bounds for this configuration.
For the case $n_\textnormal{T}\geq n_\textnormal{R}$, where the number of transmit apertures is not smaller than that of receive apertures, \cite{Moser2017-ISIT} studied the special case with $n_\textnormal{R}=1$. Similarly, this multiple-input single-output (MISO) channel was reduced to an equivalent SISO channel, for which the high-SNR asymptotic capacity was derived. Subsequently, the general MIMO case was considered in \cite{Chaaban2018-hsnr}. Therein, a lower bound on channel capacity was obtained by applying QR decomposition, and an upper bound was derived based on the subadditivity of entropy. Nevertheless, these bounds remain separated by a non-vanishing gap in the high-SNR regime. 

To the best of the authors' knowledge, a unified framework for characterizing the capacity of MIMO-OWC channels under an average-power constraint remains unavailable. Moreover, the constant gap between the existing upper and lower bounds in the high-SNR regime has not yet been closed. Motivated by these limitations, this paper develops an image-space partition based on nonnegative basis pursuit (NN-BP), which serves as a key tool for deriving tight capacity bounds for the considered MIMO-OWC channels. A closely related geometric approach is the minimum-energy signaling approach in \cite{Li2020}, developed for MIMO-OWC channels under both peak- and average-power constraints. In \cite{Li2020}, the peak-power constraint confines the input space to a bounded hypercube; consequently, the image space becomes a bounded zonotope that is partitioned into bounded parallelepipeds. However, that partition cannot be applied to the setting with only an average-power constraint, where the image space is an unbounded cone. The proposed NN-BP partition is tailored to this unbounded geometry and divides the image cone into unbounded conic regions.

The main contributions are summarized as follows.
\begin{itemize}
	\item \emph{Nonnegative Basis Pursuit:} To address the nonuniqueness of the input induced by the many-to-one mapping from nonnegative input vectors to image vectors, we formulate an NN-BP problem and provide a characterization of its optimal solution; see Lemma~\ref{lem 1}.	
	\item \emph{Equivalent Capacity Expression:} Leveraging the NN-BP characterization, we establish an equivalent expression for the capacity of MIMO-OWC channels under an average-power constraint;
	see Proposition~\ref{prop1}.
	\item \emph{Capacity Bounds:} Based on the equivalent capacity representation, we obtain a capacity lower bound by constructing a compound exponential distribution and develop capacity upper bounds via the duality approach; see Theorems~\ref{thm: lb}, \ref{thm: ub}, and \ref{thm: case II}.
	\item \emph{High-SNR Asymptotics:} By comparing the proposed lower and upper bounds, we establish the high-SNR asymptotic capacities in Theorems~\ref{thm: hsnr} and \ref{thm: case II}, thereby closing the constant gap left in the existing literature.
\end{itemize}




Sec.~\ref{sec: channel model} introduces the channel model.
Sec.~\ref{sec: NN-BP} presents the NN-BP results. Secs.~\ref{sec: nt>nr} and \ref{sec: nt<nr} characterize the channel capacities for $n_\textnormal{T}\geq n_\textnormal{R}$ and $n_\textnormal{T}< n_\textnormal{R}$, respectively. Numerical results are given in Sec.~\ref{sec: numerical results} and conclusions are drawn in Sec.~\ref{sec:conclusion}. Most proofs are deferred to the appendices.

\emph{Notation:} We use regular letters for scalars, e.g., $X$ denotes a random scalar and $x$ its realization. Boldfaced letters are used for vectors, e.g., $\mathbf{X}$ denotes a random vector and $\mathbf{x}$ its realization. Blackboard-bold letters are used for matrices, e.g., $\mathbb{H}$. Mutual information is denoted by $\const{I}(\cdot;\cdot)$, entropy by $\const{H}(\cdot)$, differential entropy by $\const{h}(\cdot)$, Kullback-Leibler divergence by $\textsf{D}(\cdot\|\cdot)$, and expectation by $\textsf{E}\{\cdot\}$. The $\ell_1$-norm is denoted by $\lVert\cdot\rVert_1$, $Q$-function by $Q(\cdot)$,
and natural logarithm by $\log(\cdot)$. $\mathfrak{R}$ denotes the set of real numbers. The $n$-dimensional volume of a set is denoted by $\textnormal{Vol}_n(\cdot)$, the closure by $\operatorname{cl}(\cdot)$, and the interior by $\operatorname{int}(\cdot)$. $\mathbb{I}_{n}$, $\mathbf{0}_n$ and $\mathbf{1}_n$ denote $n\times n$ identity matrix, $n$-dimensional all-zero and all-one vectors, respectively.
%

\section{Channel Model}\label{sec: channel model}
Consider an OWC channel with $n_\textnormal{T}$ transmit apertures and $n_\textnormal{R}$ receive apertures, the channel output is given by
\begin{IEEEeqnarray}{rCl}
	\mathbf{Y} = \mathbb{H} \mathbf{X} + \mathbf{Z}, \label{eq: channel model}
\end{IEEEeqnarray}
where the channel matrix $\mathbb{H}\in \mathfrak{R}^{n_\textnormal{R}\times n_\textnormal{T}}_+$, the channel input $\mathbf{X}\in\mathfrak{R}^{n_\textnormal{T}}$, the channel output $\mathbf{Y}\in\mathfrak{R}^{n_\textnormal{R}}$, and the channel noise $\mathbf{Z}\in\mathfrak{R}^{n_\textnormal{R}}$. The additive noise $\mathbf{Z}$ originates predominantly from shot noise and thermal noise, and approximately follows Gaussian distribution, i.e., $\mathbf{Z}\sim\mathcal{N}(\mathbf{0}_{n_\textnormal{R}},\sigma^2 \mathbb{I}_{n_\textnormal{R}})$.

Since the channel input represents optical power, it is constrained to be nonnegative \cite{Zhou2017}, i.e.,
\begin{IEEEeqnarray}{rCl}
	\mathrm{Pr}\{\mathbf{X} \in  \mathfrak{R}_+^{n_\textnormal{T}}\}=1. \label{eq: non-negative}
\end{IEEEeqnarray}
Moreover, due to safety and power-consumption requirements, the average power of the channel input is also constrained \cite{Chaaban2017}. We consider a total average-power constraint:
\begin{IEEEeqnarray}{rCl}
	\textsf{E} \{ \| \mathbf{X} \|_1 \}\leq \EE. \label{eq: ave cons}
\end{IEEEeqnarray}
By \cite{Cover2006}, the channel capacity is given by 
\begin{IEEEeqnarray}{rCl}
	\const{C} = \max_{p_\mathbf{X} \textnormal{satisfies } \eqref{eq: non-negative} \textnormal{ and } \eqref{eq: ave cons}} \const{I}(\mathbf{X};\mathbf{Y}). \label{eq: capacity express}	
\end{IEEEeqnarray}

In this paper, $\mathbb{H}$ is assumed to be deterministic and full-rank \cite{Moser2017-ISIT,Chaaban2018-hsnr}. The proposed results apply to two cases: {Case I}: $\ n_\textnormal{T}\geq n_\textnormal{R}$; {Case II}: $n_\textnormal{T}< n_\textnormal{R}$.
Since the method proposed in the next section is particularly suitable for $n_\textnormal{T} \geq n_\textnormal{R}$, the following analysis first concentrates on this case. We extend the results to the case $n_\textnormal{T} < n_\textnormal{R}$ in Sec.~\ref{sec: nt<nr}. For convenience, we denote the channel matrix $\mathbb{H}$ by
\begin{IEEEeqnarray}{rCl}
	\mathbb{H}=\left(\mathbf{h}_1,\cdots,\mathbf{h}_{n_\textnormal{T}}\right).
\end{IEEEeqnarray}
We further define the following operator for any $r\times s$ matrix $\mathbb{M}=\left(\mathbf{m}_1,\cdots,\mathbf{m}_s\right)$ with full row rank:
\begin{align}
	&\mathcal{S}(\mathbb{M}) = \left\{ \sum_{i=1}^{s} a_i \mathbf{m}_i:
	a_1,\cdots,a_s\in\mathfrak{R}_+\right\}. \label{eq: S(H)}
\end{align}
We define the set of all choices of $r$ columns of $\mathbb{M}$ that are linearly independent by
\begin{IEEEeqnarray}{rCl}
	\mathscr{R}(\mathbb{M})
	&=&\Bigl\{\mathcal{I}=\{ i_1,i_2,\cdots, i_r \}: \nonumber\\
	&&\,\,\{ i_1,i_2,\cdots, i_r \}\subseteq \{1,2,\cdots,s\},\textnormal{ and}\nonumber\\ 	
	&& \,\,\mathbf{m}_{i_1},\mathbf{m}_{i_2},\cdots,\mathbf{m}_{ i_r } \textnormal{ are linearly independent} \Bigr\}.\quad \label{eq: mathscr{R}(H)}
\end{IEEEeqnarray}
For $\mathcal{I}=\{i_1,\cdots,i_r\}\in\mathscr{R}(\mathbb{M})$, we define the submatrix:
\begin{align}
	\mathbb{M}_\mathcal{I} = \left(\mathbf{m}_{i_1},\mathbf{m}_{i_2},\cdots,\mathbf{m}_{ i_r }\right).
\end{align}

\section{Nonnegative Basis Pursuit}\label{sec: NN-BP}
This section proposes the NN-BP scheme for the case $n_\textnormal{T}\geq n_\textnormal{R}$. We first rewrite the channel model in \eqref{eq: channel model} as
\begin{IEEEeqnarray}{rCl}
	\mathbf{Y} = \bar{\mathbf{X}} + \mathbf{Z}, \label{eq: channel model new}
\end{IEEEeqnarray}
with the image vector $\bar{\mathbf{X}}$ defined by
\begin{IEEEeqnarray}{rCl}
	\bar{\mathbf{X}} = \mathbb{H} \mathbf{X}. \label{eq: bar{x} and x}
\end{IEEEeqnarray}
Applying \eqref{eq: S(H)} to the channel matrix $\mathbb{H}$, we obtain
\begin{align}
	\mathcal{S}(\mathbb{H}) = \left\{ \sum_{i=1}^{n_\textnormal{T}} a_i \mathbf{h}_i:
	a_1,\cdots,a_{n_\textnormal{T}}\in\mathfrak{R}_+\right\}.
\end{align}
Note that $\mathcal{S}(\mathbb{H})$ is a convex cone. Since $\mathbf{x}\in\mathfrak{R}^{n_\textnormal{T}}_+$, the image vector $\bar{\mathbf{X}}$ must fall within $\mathcal{S}(\mathbb{H})$. For a realization $\bar{\mathbf{x}}$ of $\bar{\mathbf{X}}$, the NN-BP problem is given by
\begin{align}
	\bm{P}:\,\,\mathbf{x}^\star = 
	\argmin_{ \mathbf{x}\in \mathfrak{R}^{n_\textnormal{T}}_+ } \,\,\, &\|\mathbf{x} \|_1 \\\label{eq: minimum-energy problem} 
	\textnormal{s.t.} \,\,\, &\mathbb{H} \mathbf{x} = \bar{\mathbf{x}},
\end{align}
where $\mathbf{x}^\star=(x_1^\star,\cdots,x_{n_\textnormal{T}}^\star)^\textsf{T}$ is the optimal input that yields
$\bar{\mathbf{x}}$ with minimum $\ell_1$-norm. Compared with the classical BP problem \cite{Chen1998,Hyder2016}, the NN-BP problem promotes a sparsest \emph{nonnegative} input by minimizing the $\ell_1$-norm under an exact reconstruction constraint. To solve the NN-BP problem, we first illustrate the structure of $\mathbf{x}^\star$ through examples and then formalize the general solution.
 
\begin{figure}[t!]
	\centering
	\includegraphics[width=2.05in]{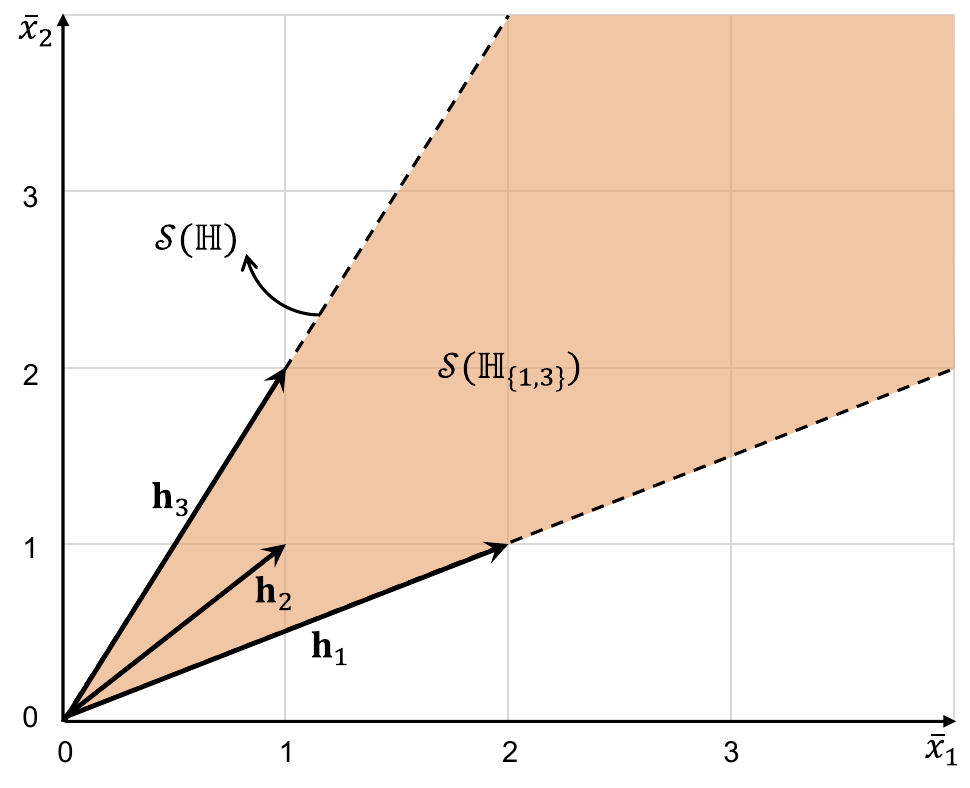}
	\caption{The cone $\mathcal{S}(\mathbb{H})$ when $\mathbb{H}=(2, 1, 1;1, 1, 2)$ and its partition into subcone $\mathcal{S}(\mathbb{H}_{\{1,3\}})$.}
	\label{fig: example1}
\end{figure}
\subsection{Examples of Typical Solutions}
\begin{exam}\label{example1}
Consider a $2\times3$ channel matrix
\begin{IEEEeqnarray}{rCl}
	\mathbb{H}=\left( \begin{array}{ccc}
		2 & 1& 1\\
		1 & 1& 2
	\end{array}\right), \label{eq: example 1}
\end{IEEEeqnarray}
where $\mathbf{h}_1 = (2, 1)^{\textsf{T}}$, $\mathbf{h}_2 = (1,1)^{\textsf{T}}$, and $\mathbf{h}_3 = (1,2)^{\textsf{T}}$. Fig.~\ref{fig: example1} illustrates the region covered by $\mathcal{S}(\mathbb{H})$ in the two-dimensional plane. As shown in Fig.~\ref{fig: example1}, the whole cone $\mathcal{S}(\mathbb{H})$ can also be characterized by $\mathcal{S}(\mathbb{H}_{\{1,3\}})$, i.e.,
\begin{IEEEeqnarray}{rCl}
	\mathcal{S}(\mathbb{H})=\mathcal{S}(\mathbb{H}_{\{1,3\}}). \label{eq: partition ex2}
\end{IEEEeqnarray}
	
To compute the NN-BP solution $\mathbf{x}^\star$ in \eqref{eq: minimum-energy problem}, we first establish the following relationships among $\mathbf{h}_1$, $\mathbf{h}_2$, and $\mathbf{h}_3$:
	\begin{subequations}
		\label{eq: h1h2h3 relation}
		\begin{IEEEeqnarray}{rCc C c}
			\mathbf{h}_1 &=& 3 \mathbf{h}_2 &-&  \mathbf{h}_3, \label{eq: h1toh2h3}\\
			\mathbf{h}_2 &=& \frac{1}{3} \mathbf{h}_1 &+&  \frac{1}{3} \mathbf{h}_3, \label{eq: h2toh1h3}\\
			\mathbf{h}_3 &=& - \mathbf{h}_1 &+&  3 \mathbf{h}_2.\label{eq: h3toh1h2}
		\end{IEEEeqnarray}
	\end{subequations}
For any $\bar{\mathbf{x}}\in\mathcal{S}(\mathbb{H})$, there exists $\mathbf{x}^{\{1,2,3\}}=(x_1,x_2,x_3)^{\textsf{T}}\in\mathfrak{R}^3_+$ such that
	\begin{IEEEeqnarray}{rCl}
		\bar{\mathbf{x}} = x_1 \mathbf{h}_1 + x_2 \mathbf{h}_2 + x_3 \mathbf{h}_3. \label{eq: x1x2x3}
	\end{IEEEeqnarray}
Using \eqref{eq: h1h2h3 relation}, $\bar{\mathbf{x}}$ can also be represented with respect to the bases $\{\mathbf{h}_1,\mathbf{h}_2\}$, $\{\mathbf{h}_2,\mathbf{h}_3\}$, and $\{\mathbf{h}_1,\mathbf{h}_3\}$ as follows:
	\begin{IEEEeqnarray}{r C c C c}
		\subnumberinglabel{eq: bar{x} rewritten}	
		\bar{\mathbf{x}} 
		&=& (x_1- x_3) \mathbf{h}_1 &+& (x_2 + 3 x_3) \mathbf{h}_2,	 \label{eq: x1x2}\\
		\bar{\mathbf{x}}
		&=& (x_2+3x_1) \mathbf{h}_2 &+& (x_3-x_1)  \mathbf{h}_3,  \label{eq: x1x3} \\
		\bar{\mathbf{x}}
		&=& (x_1+\frac{1}{3}x_2) \mathbf{h}_1 &+& (x_3+\frac{1}{3}x_2) \mathbf{h}_3.  \label{eq: x2x3}
	\end{IEEEeqnarray}
Therefore, $\mathbf{x}^\star$ can be either $\mathbf{x}^{\{1,2,3\}}$ or one of the following possible forms:
	\begin{IEEEeqnarray}{rCl}
		\subnumberinglabel{eq: x_min rewritten}
		\mathbf{x}^{\{1,2\}}
		&=& \bigl(x_1-x_3,x_2+ 3x_3,0 \bigr)^\textsf{T}, \label{eq: x_min^{1,2}}\\
		\mathbf{x}^{\{2,3\}}
		&=& \bigl( 0,x_2+3x_1,x_3-x_1 \bigr)^\textsf{T}, \label{eq: x_min^{2,3}}\\
		\mathbf{x}^{\{1,3\}}
		&=& \bigl( x_1+\frac{1}{3}x_2,0,x_3+\frac{1}{3}x_2 \bigr)^\textsf{T}. \label{eq: x_min^{1,3}}
	\end{IEEEeqnarray}
Since	
	\begin{align}
		\|\mathbf{x}^{\{1,3\}} \|_1 \leq \| \mathbf{x}^{\{1,2,3\}} \|_1 \leq  \min\{\| \mathbf{x}^{\{1,2\}} \|_1, \| \mathbf{x}^{\{2,3\}} \|_1\}, \label{eq: comparison1} 
	\end{align}
we obtain that for any $\bar{\mathbf{x}}\in\mathcal{S}(\mathbb{H})$, $\mathbf{x}^{\{1,3\}}$ has minimum $\ell_1$-norm among all feasible inputs. Hence, the NN-BP solution $\mathbf{x}^\star$ is given by $\mathbf{x}^{\{1,3\}}$.	
\hfill $\blacklozenge$
\end{exam}

\begin{figure}[t]
	\centering
	\includegraphics[width=2.05in]{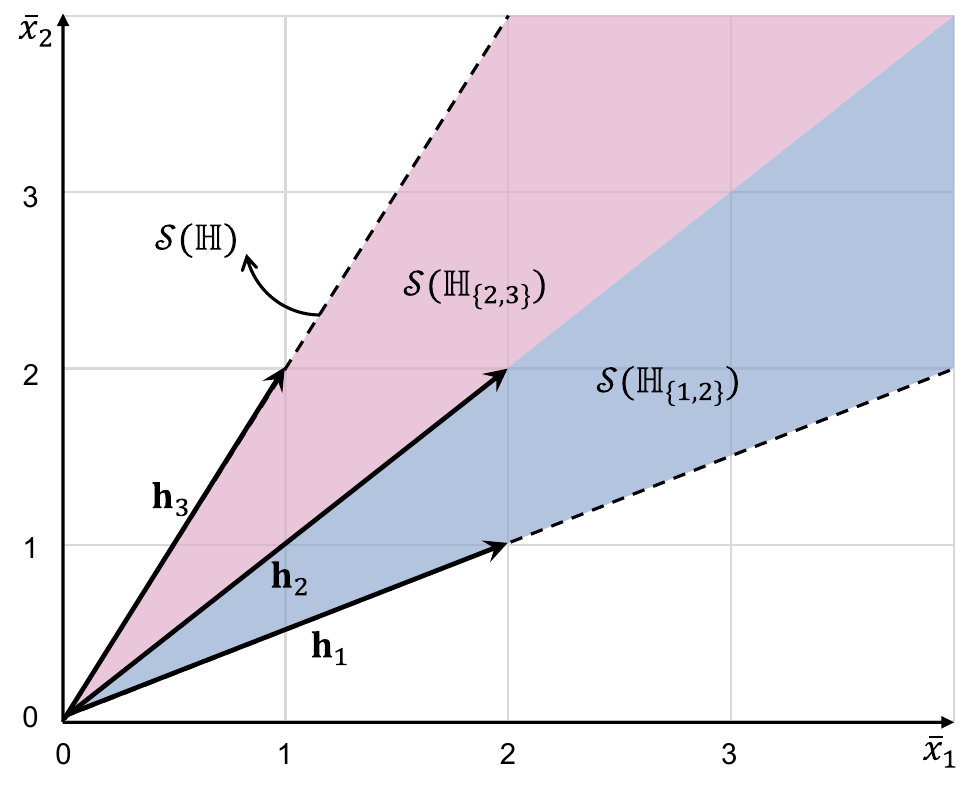}
	\caption{The cone $\mathcal{S}(\mathbb{H})$ when $\mathbb{H} = (2, 2, 1; 1, 2, 2)$ and its partition into subcone $\mathcal{S}(\mathbb{H}_{\{1,2\}})$ and $\mathcal{S}(\mathbb{H}_{\{2,3\}})$.}
	\label{fig: example2}
	\vspace{-1.2mm}
\end{figure}

\begin{exam}\label{example2} 
Consider another $2\times3$ channel matrix
\begin{IEEEeqnarray}{rCl}
	\mathbb{H}=\left( \begin{array}{ccc}
		2 & 2 & 1\\
		1 & 2 & 2
	\end{array}\right), \label{eq: example2}
\end{IEEEeqnarray}
where $\mathbf{h}_1$ and $\mathbf{h}_3$ are preserved from Example~\ref{example1}, but the length of $\mathbf{h}_2$ is deliberately increased. Fig.~\ref{fig: example2} illustrates the region covered by $\mathcal{S}(\mathbb{H})$, which is constituted by the the union of the pink and blue shaded regions. The relationships among $\mathbf{h}_1$, $\mathbf{h}_2$, and $\mathbf{h}_3$ are given by
\begin{IEEEeqnarray}{rCrCl}
	\subnumberinglabel{eq: h1h2h3 relation ex2}
	\mathbf{h}_1 &=& \dfrac{3}{2}\,\mathbf{h}_2 &-& \mathbf{h}_3, \label{eq: h1toh2h3 new}\\
	\mathbf{h}_2 &=& \dfrac{2}{3}\,\mathbf{h}_1 &+& \frac{2}{3}\,\mathbf{h}_3, \label{eq: h2toh1h3 new}\\
	\mathbf{h}_3 &=& -\mathbf{h}_1 &+& \frac{3}{2}\,\mathbf{h}_2. \label{eq: h3toh1h2 new}
\end{IEEEeqnarray}
Following a similar argument as in Example~\ref{example1}, we can show that $\mathbf{x}^\star$ can be either $\mathbf{x}^{\{1,2,3\}}$ or one of the following forms:
\begin{IEEEeqnarray}{rCl}	
	\subnumberinglabel{eq: x_min rewritten ex2}
	\mathbf{x}^{\{1,2\}}
	&=& \bigl(x_1-x_3,x_2+ \frac{3}{2}x_3,0 \bigr)^\textsf{T},\\
	\mathbf{x}^{\{2,3\}}
	&=& \bigl( 0,x_2+\frac{3}{2}x_1,x_3-x_1 \bigr)^\textsf{T},\\
	\mathbf{x}^{\{1,3\}}
	&=& \bigl( x_1+\frac{2}{3}x_2,0,x_3+\frac{2}{3}x_2 \bigr)^\textsf{T}.
\end{IEEEeqnarray}
We next identify the minimum-$\ell_1$-norm candidate in \eqref{eq: x_min rewritten ex2}, thereby determining $\mathbf{x}^\star$.
From the geometric relationships illustrated in Fig.~\ref{fig: example2}, if $\bar{\mathbf{x}}\in \mathcal{S}(\mathbb{H}_{\{1,2\}})$, then it can be shown that $\mathbf{x}^{\{1,2,3\}}$, $\mathbf{x}^{\{1,2\}}$, $\mathbf{x}^{\{1,3\}}\in\mathfrak{R}^3_+$, whereas $\mathbf{x}^{\{2,3\}}\notin\mathfrak{R}^3_+ $ since $x_3-x_1<0$. The $\ell_1$-norms of these feasible points satisfy 
\begin{align}
	\| \mathbf{x}^{\{1,2\}} \|_1  \leq \| \mathbf{x}^{\{1,2,3\}} \|_1 \leq 	\| \mathbf{x}^{\{1,3\}} \|_1. \label{eq: 29}
\end{align}
Consequently, for any $\bar{\mathbf{x}}\in\mathcal{S}(\mathbb{H}_{\{1,2\}})$, $\mathbf{x}^\star$ equals $\mathbf{x}^{\{1,2\}}$.

Similarly, if $\bar{\mathbf{x}}\in \mathcal{S}(\mathbb{H}_{\{2,3\}})$, then it can be shown that $\mathbf{x}^{\{1,2,3\}}$, $\mathbf{x}^{\{2,3\}}$, $\mathbf{x}^{\{1,3\}}\in\mathfrak{R}^3_+$, whereas $\mathbf{x}^{\{1,2\}}\notin\mathfrak{R}^3_+$ since $x_1-x_3<0$. The $\ell_1$-norms of these feasible points satisfy 
\begin{align}
	\| \mathbf{x}^{\{2,3\}} \|_1  \leq \| \mathbf{x}^{\{1,2,3\}} \|_1 \leq 	\| \mathbf{x}^{\{1,3\}} \|_1. \label{eq: 30}
\end{align}
Thus, for any $\bar{\mathbf{x}}\in\mathcal{S}(\mathbb{H}_{\{2,3\}})$, $\mathbf{x}^\star$ equals $\mathbf{x}^{\{2,3\}}$.

As illustrated in Fig.~\ref{fig: example2}, the whole cone $\mathcal{S}(\mathbb{H})$ can be partitioned into two subcones: $\mathcal{S}(\mathbb{H}_{\{1,2\}})$ and $\mathcal{S}(\mathbb{H}_{\{2,3\}})$, i.e.,
\begin{IEEEeqnarray}{rCl}
	\subnumberinglabel{eq: partition scheme ex1}
	&&\mathcal{S}(\mathbb{H})=\mathcal{S}(\mathbb{H}_{\{1,2\}})\cup\mathcal{S}(\mathbb{H}_{\{2,3\}}), \label{eq: partition ex1}\\
	&&\textnormal{Vol}_2\{\mathcal{S}(\mathbb{H}_{\{1,2\}})\cap\mathcal{S}(\mathbb{H}_{\{2,3\}})\}=0. \label{eq: partition two parts}
\end{IEEEeqnarray}
The above results indicate that, within each region $\mathcal{S}(\mathbb{H}_{\{1,2\}})$ and $\mathcal{S}(\mathbb{H}_{\{2,3\}})$, the NN-BP solution $\mathbf{x}^\star$ admits a region-specific but fixed form, given by $\mathbf{x}^{\{1,2\}}$ and $\mathbf{x}^{\{2,3\}}$, respectively.
\hfill $\blacklozenge$
\end{exam}


\begin{exam}\label{example3}
Consider another $2\times3$ channel matrix
\begin{IEEEeqnarray}{rCl}
	\mathbb{H}=\left( \begin{array}{ccc}
		2 & 1.5 & 1\\
		1 & 1.5 & 2
	\end{array}\right), \label{eq: example 3}
\end{IEEEeqnarray}
where the endpoints of $\mathbf{h}_1$, $\mathbf{h}_2$, and $\mathbf{h}_3$ are collinear, as illustrated in Fig.~\ref{fig3}. 
Their relationships can be expressed as
\begin{IEEEeqnarray}{rCcCc}
	\subnumberinglabel{eq: h1h2h3 relation ex3}
	\mathbf{h}_1 &=& 2 \mathbf{h}_2 &-&  \mathbf{h}_3,\\
	\mathbf{h}_2 &=& \frac{1}{2} \mathbf{h}_1 &+& \frac{1}{2} \mathbf{h}_3,\\
	\mathbf{h}_3 &=& - \mathbf{h}_1 &+& 2 \mathbf{h}_2.
\end{IEEEeqnarray}
It follows that $\mathbf{x}^\star$ can be either $\mathbf{x}^{\{1,2,3\}}$ or one of the following forms:
\begin{IEEEeqnarray}{rCl}
	\subnumberinglabel{eq: x_min rewritten ex3}
	\mathbf{x}^{\{1,2\}}
	&=& \bigl(x_1-x_3,x_2+ 2x_3,0 \bigr)^\textsf{T},\\
	\mathbf{x}^{\{2,3\}}
	&=& \bigl( 0,x_2+2x_1,x_3-x_1 \bigr)^\textsf{T},\\
	\mathbf{x}^{\{1,3\}}
	&=& \bigl( x_1+\frac{1}{2}x_2,0,x_3+\frac{1}{2}x_2 \bigr)^\textsf{T}.
\end{IEEEeqnarray}
If $\bar{\mathbf{x}}\in\mathcal{S}(\mathbb{H}_{\{1,2\}})$, then the feasible points in $\mathfrak{R}^3_+$ satisfy
\begin{align}
	\| \mathbf{x}^{\{1,2\}} \|_1= \| \mathbf{x}^{\{1,3\}} \|_1 = \| \mathbf{x}^{\{1,2,3\}} \|_1. \label{eq: 38}
\end{align}
If $\bar{\mathbf{x}}\in\mathcal{S}(\mathbb{H}_{\{2,3\}})$, then the feasible points in $\mathfrak{R}^3_+$ satisfy
\begin{align}
	\| \mathbf{x}^{\{2,3\}} \|_1= \| \mathbf{x}^{\{1,3\}} \|_1 = \| \mathbf{x}^{\{1,2,3\}} \|_1. \label{eq: 39}
\end{align}

\begin{figure}[t]
	\centering	
	\subfloat[Partition into the subcone $\mathcal{S}(\mathbb{H}_{\{1,3\}})$.]{
		\includegraphics[width=1.58in]{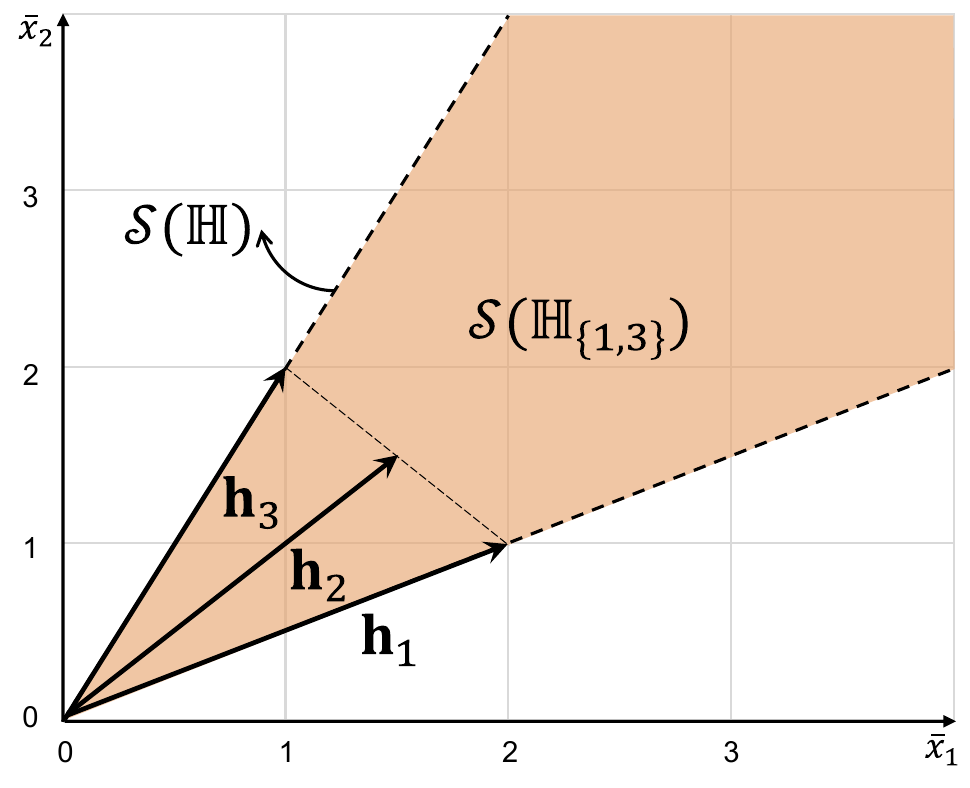}\label{fig3a}
	}\hspace{+2mm}	
	\subfloat[Partition into two subcones $\mathcal{S}(\mathbb{H}_{\{1,2\}})$ and $\mathcal{S}(\mathbb{H}_{\{2,3\}})$.]{
		\includegraphics[width=1.58in]{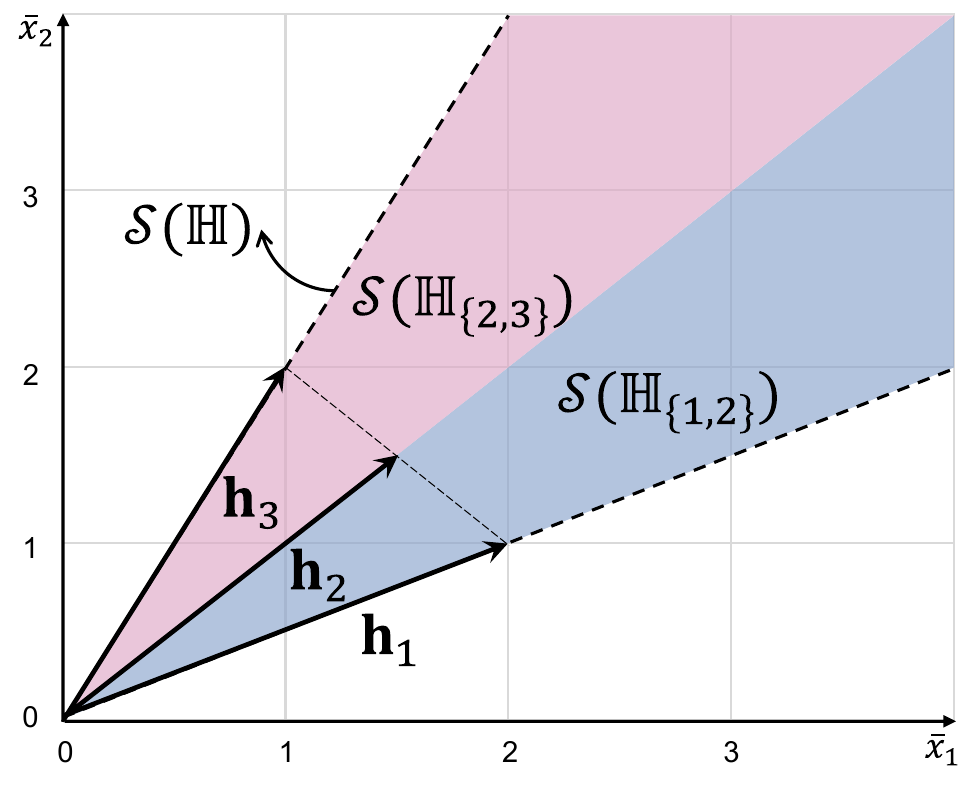}\label{fig3b}
	}
	\caption{The cone $\mathcal{S}(\mathbb{H})$ when $\mathbb{H} = (2, 1.5, 1; 1, 1.5, 2)$ and its partitions.}
	\label{fig3}
\end{figure}

On the one hand, Fig.~\ref{fig3a} illustrates the partition of $\mathcal{S}(\mathbb{H})$ by the subcone $\mathcal{S}(\mathbb{H}_{\{1,3\}})$. \eqref{eq: 38} and \eqref{eq: 39} show that the NN-BP solution $\mathbf{x}^\star$ can be represented by $\mathbf{x}^{\{1,3\}}$. On the other hand, Fig.~\ref{fig3b} presents an alternative partition of $\mathcal{S}(\mathbb{H})$ by two subcones $\mathcal{S}(\mathbb{H}_{\{1,2\}})$ and $\mathcal{S}(\mathbb{H}_{\{2,3\}})$. \eqref{eq: 38} and \eqref{eq: 39} show that, within each region $\mathcal{S}(\mathbb{H}_{\{1,2\}})$ and $\mathcal{S}(\mathbb{H}_{\{2,3\}})$, the corresponding $\mathbf{x}^\star$ can be $\mathbf{x}^{\{1,2\}} $ and $\mathbf{x}^{\{2,3\}} $, respectively. This example demonstrates that both partition schemes in \eqref{eq: partition ex2} and \eqref{eq: partition scheme ex1} are admissible. Consequently, the representation of $\mathbf{x}^\star$ is not necessarily unique for this channel.
\hfill $\blacklozenge$
\end{exam}

\vspace{-2mm}
\subsection{Characterization of General NN-BP Solutions}
Motivated by Examples~\ref{example1}, \ref{example2}, and \ref{example3}, we propose the following lemma for solving the NN-BP problem with an arbitrary full-row-rank channel matrix $\mathbb{H}$. The proof is deferred to Appendix~\ref{app: proof of lem 1}. Before stating the lemma, we introduce some necessary notation. For any $\mathcal{I}\in\mathscr{R}(\mathbb{H})$, define the following $n_\textnormal{T}-n_\textnormal{R}$ vectors: 
\begin{align}
	\mathbf{a}_{\mathcal{I},j} = \mathbb{H}_\mathcal{I}^{-1} \mathbf{h}_j, \quad j\in\mathcal{I}^c,
\end{align}
and $n_\textnormal{T}-n_\textnormal{R}$ scalars:
\begin{align}
	\gamma_{\mathcal{I},j} = \mathbf{1}_{n_\textnormal{R}}^\textsf{T} \mathbf{a}_{\mathcal{I},j},\quad j\in\mathcal{I}^c, \label{eq: gamma def}
\end{align}
where $\mathcal{I}^c$ is the complement of $\mathcal{I}$, i.e., $\mathcal{I}^c=\{1,\cdots,n_\textnormal{T}\} \setminus \mathcal{I}$.

\begin{lem}[NN-BP Result]\label{lem 1}

\begin{itemize}

\item \emph{Part I:}

The cone $\mathcal{S}(\mathbb{H})$ can be partitioned into a collection of subcones $\{\mathcal{S}(\mathbb{H}_\mathcal{I}):\mathcal{I}\in\mathscr{S}(\mathbb{H})\}$, i.e.,
\begin{align}
	& \mathcal{S}(\mathbb{H}) = \cup_{\mathcal{I}\in\mathscr{S}(\mathbb{H})} \mathcal{S}(\mathbb{H}_\mathcal{I}), \label{eq: lem5-1}\\
	& \textnormal{Vol}_{n_\textnormal{R}} \left\{ \mathcal{S}(\mathbb{H}_\mathcal{I}) \cap \mathcal{S}(\mathbb{H}_\mathcal{J})\right\} =0, \nonumber\\
	&\qquad\qquad\qquad\qquad\, \forall\mathcal{I},\mathcal{J}\in\mathscr{S}(\mathbb{H}),\, \mathcal{I}\neq\mathcal{J}, \label{eq: lem5-2}
\end{align}
where the set $\mathscr{S}(\mathbb{H})$ is constructed from $\mathscr{R}(\mathbb{H})$ and characterized as follows:
\begin{itemize}
	\item For any $\mathcal{I} \in \mathscr{R}(\mathbb{H})$ and $ j \in \mathcal{I}^c$, if	
	\begin{align}
		\gamma_{\mathcal{I},j} \neq 1, \label{eq: gamma_{I,j} neq 1}
	\end{align}
	then 
	\begin{align}
		\mathscr{S}(\mathbb{H}) = \{\mathcal{I}\in\mathscr{R}(\mathbb{H}): \gamma_{\mathcal{I},j} < 1, \, \forall j \in \mathcal{I}^c \}. \label{eq: mathscr{S}(H)}
	\end{align}
	\item If \eqref{eq: gamma_{I,j} neq 1} is violated, then $\mathscr{S}(\mathbb{H}) $ has multiple solutions. Algorithm~\ref{alg 1} simply picks one out of all possible solutions.
\end{itemize}

\item \emph{Part II:}\\
For any $\mathcal{I}\in\mathscr{S}(\mathbb{H}) $, if $\bar{\mathbf{x}}\in\mathcal{S}(\mathbb{H}_\mathcal{I})$, then the optimal solution $\mathbf{x}^\star$ to the NN-BP problem is given by
\begin{IEEEeqnarray}{rCl}
	\subnumberinglabel{eq: opt x express}
	&&\mathbf{x}_\mathcal{I}^\star = (\mathbb{H}_{\mathcal{I}})^{-1} \bar{\mathbf{x}},\\
	&&x_j^\star = 0, \quad\forall j\in \mathcal{I}^c,
\end{IEEEeqnarray}
where 
\begin{IEEEeqnarray}{ll}
	\mathbf{x}_\mathcal{I}^\star=(x_{i_1}^\star,\cdots,x_{i_{n_\textnormal{R}}}^\star)^\textsf{T}, &
	\quad\mathcal{I}=\{i_1,\cdots,i_{n_\textnormal{R}}\}.
\end{IEEEeqnarray}

\end{itemize}

\end{lem}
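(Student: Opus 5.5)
The plan is to treat the NN-BP problem $\bm{P}$ as a linear program and use LP duality together with the simplex/basic-feasible-solution structure. For fixed $\bar{\mathbf{x}}\in\mathcal{S}(\mathbb{H})$, the feasible set $\{\mathbf{x}\ge 0:\mathbb{H}\mathbf{x}=\bar{\mathbf{x}}\}$ is a nonempty polyhedron. The objective $\mathbf{1}^\textsf{T}\mathbf{x}$ is bounded below by $0$, so an optimum is attained at a basic feasible solution. Since $\mathbb{H}$ has full row rank, such a solution is supported on some basis $\mathcal{I}\in\mathscr{R}(\mathbb{H})$ with $\mathbf{x}_\mathcal{I}=\mathbb{H}_\mathcal{I}^{-1}\bar{\mathbf{x}}\ge 0$, that is, $\bar{\mathbf{x}}\in\mathcal{S}(\mathbb{H}_\mathcal{I})$. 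The reduced cost of a nonbasic column $j\in\mathcal{I}^c$ is $1-\mathbf{1}^\textsf{T}\mathbb{H}_\mathcal{I}^{-1}\mathbf{h}_j=1-\gamma_{\mathcal{I},j}$. The basis $\mathcal{I}$ is therefore dual feasible, with dual vector $\boldsymbol{\lambda}_\mathcal{I}^\textsf{T}=\mathbf{1}^\textsf{T}\mathbb{H}_\mathcal{I}^{-1}$, exactly when $\gamma_{\mathcal{I},j}\le 1$ for all $j\in\mathcal{I}^c$.

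\textbf{Part II.} Take $\mathcal{I}\in\mathscr{S}(\mathbb{H})$ and $\bar{\mathbf{x}}\in\mathcal{S}(\mathbb{H}_\mathcal{I})$. The candidate in \eqref{eq: opt x express} is feasible. For any feasible $\mathbf{x}$, weak duality gives
\[
\|\mathbf{x}\|_1=\sum_j x_j\ge \sum_j x_j\,\boldsymbol{\lambda}_\mathcal{I}^\textsf{T}\mathbf{h}_j=\boldsymbol{\lambda}_\mathcal{I}^\textsf{T}\bar{\mathbf{x}}=\|\mathbf{x}^\star\|_1 .
\]
The inequality uses $\boldsymbol{\lambda}_\mathcal{I}^\textsf{T}\mathbf{h}_j=\gamma_{\mathcal{I},j}\le 1$ for $j\notin\mathcal{I}$ and $=1$ for $j\in\mathcal{I}$. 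This settles Part II for any partition built from dual-feasible bases, including the one produced by Algorithm~\ref{alg 1} in the degenerate case.

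\textbf{Part I, covering.} For every $\bar{\mathbf{x}}\in\mathcal{S}(\mathbb{H})$, I want a basis that is simultaneously primal and dual feasible. This follows from the fundamental theorem of LP: the dual $\max\{\boldsymbol{\lambda}^\textsf{T}\bar{\mathbf{x}}:\boldsymbol{\lambda}^\textsf{T}\mathbf{h}_j\le 1\ \forall j\}$ is feasible (take $\boldsymbol{\lambda}=\mathbf{0}$) and the primal is bounded. The simplex method with an anticycling rule then terminates at an optimal basis $\mathcal{I}$ with $\gamma_{\mathcal{I},j}\le1$. Under the genericity condition \eqref{eq: gamma_{I,j} neq 1}, this means $\gamma_{\mathcal{I},j}<1$, so $\mathcal{I}\in\mathscr{S}(\mathbb{H})$ as defined in \eqref{eq: mathscr{S}(H)}, and $\bar{\mathbf{x}}\in\mathcal{S}(\mathbb{H}_\mathcal{I})$. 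This gives \eqref{eq: lem5-1}; the reverse inclusion is trivial.

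\textbf{Part I, disjointness.} This is the main obstacle. Suppose $\mathcal{I}\ne\mathcal{J}\in\mathscr{S}(\mathbb{H})$ and their cones intersect in a set of positive volume. Pick $\bar{\mathbf{x}}$ in the interior of the intersection, so that $\mathbb{H}_\mathcal{I}^{-1}\bar{\mathbf{x}}>0$ and $\mathbb{H}_\mathcal{J}^{-1}\bar{\mathbf{x}}>0$; this yields two nondegenerate optimal basic solutions. Strict dual feasibility of $\mathcal{I}$ makes its primal optimum unique: any other feasible $\mathbf{x}$ with support outside $\mathcal{I}$ has cost strictly larger by $\sum_{j\notin\mathcal{I}}(1-\gamma_{\mathcal{I},j})x_j>0$. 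Hence the two solutions coincide. Since both have full supports $\mathcal{I}$ and $\mathcal{J}$, we get $\mathcal{I}=\mathcal{J}$, a contradiction, which proves \eqref{eq: lem5-2}.

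In the degenerate case, where some $\gamma_{\mathcal{I},j}=1$, optimal bases are no longer unique, as Example~\ref{example3} shows. There I would argue that the set of dual-feasible bases still covers the cone, with possibly overlapping cones. Any choice yielding a triangulation is admissible, for instance a lexicographic or perturbation tie-break $\mathbf{1}\to\mathbf{1}+\epsilon\mathbf{c}$ that restores \eqref{eq: gamma_{I,j} neq 1}. I would identify this tie-break with Algorithm~\ref{alg 1}'s selection; making that identification precise is the main point I expect to require care.
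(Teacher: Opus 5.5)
Your argument is correct in the generic case, and it takes a genuinely different route from the paper's, although it uses the same multipliers. Your $\boldsymbol{\lambda}_\mathcal{I}$ is the paper's $-\boldsymbol{\lambda}^\star$, and your reduced costs $1-\gamma_{\mathcal{I},j}$ are its $\mu_j^\star$.

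The paper argues in the necessity direction:
\begin{itemize}
\item It applies the KKT conditions to an arbitrary minimizer and shows (Lemma~\ref{lem6}) that a full-rank support must equal some $\mathcal{I}\in\mathscr{S}(\mathbb{H})$. Rank-deficient supports are discarded as a null set.
\item It invokes Berge's maximum theorem to pass to closures and obtain the covering.
\item It proves the zero-volume overlaps by a separate geometric Lemma~\ref{lem7} on the unit simplices $\mathcal{S}_\textnormal{unit}(\mathbb{H}_\mathcal{I})$, whose contradiction step is only argued pictorially.
\item It reads off Part II on $\operatorname{int}(\mathcal{S}(\mathbb{H}_\mathcal{I}))$ and extends it to the boundary by Berge again.
\end{itemize}

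You argue in the sufficiency direction:
\begin{itemize}
\item Your weak-duality certificate proves Part II at every point of the closed subcone, with no limiting argument.
\item Existence of an optimal basis gives the covering for every $\bar{\mathbf{x}}$, with no exceptional set.
\item The identity $\|\mathbf{x}\|_1-\boldsymbol{\lambda}_\mathcal{I}^\textsf{T}\bar{\mathbf{x}}=\sum_{j\notin\mathcal{I}}(1-\gamma_{\mathcal{I},j})x_j$ gives uniqueness of the minimizer on all of $\mathcal{S}(\mathbb{H}_\mathcal{I})$, and hence disjointness, in two lines.
\end{itemize}
What the paper's route offers is a description of the support of an arbitrary minimizer, but only off a null set. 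Your uniqueness step recovers this everywhere, so your version is both shorter and tighter.

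On the degenerate case, your remark that Part II applies to Algorithm~\ref{alg 1}'s output is right. Its first pass ($j=1$) already deletes every $\mathcal{I}$ with some $\gamma_{\mathcal{I},k}>1$, so every retained set is dual feasible. Neither you nor the paper proves that Algorithm~\ref{alg 1}'s particular output is a partition. The paper only sketches rescaling the $\mathbf{h}_j$ by $1+\delta_j$ and omits the details. That rescaling amounts to a cost perturbation of your type: it rescales the cost of $x_j$ and leaves every $\mathcal{S}(\mathbb{H}_\mathcal{I})$ unchanged.

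Your framework in fact makes the existence claim rigorous. For generic $\mathbf{c}$ and small $\epsilon>0$, your generic-case argument applies verbatim to the cost $\mathbf{1}+\epsilon\mathbf{c}$. The family of bases with strictly positive perturbed reduced costs is then independent of $\epsilon$, and its subcones partition $\mathcal{S}(\mathbb{H})$. Each of its members satisfies $\gamma_{\mathcal{I},j}\le 1$, so Part II holds for it by your weak-duality step. Only the identification with Algorithm~\ref{alg 1} remains open, and it is open in both treatments.
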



\begin{algorithm}
	\caption{}
	\label{alg 1}
	\renewcommand{\algorithmicrequire}{\textbf{Input:}}
	\renewcommand{\algorithmicensure}{\textbf{Output:}}
\begin{algorithmic}[1]
	\REQUIRE $\mathbb{H}$
	\ENSURE $\mathscr{S}(\mathbb{H})$
	\STATE Initialize $\mathscr{S}(\mathbb{H})=\mathscr{R}(\mathbb{H})$ by \eqref{eq: mathscr{R}(H)};
	\FOR{$j \in \{1,\cdots,n_\textnormal{T}\}$}
	\FOR{ $\mathcal{I} \in \mathscr{R}(\mathbb{H})$ and $\mathcal{I} \subseteq \{j,\cdots,n_\textnormal{T}\}$}
	\IF{$j\in\mathcal{I}^c$ and $\gamma_{\mathcal{I},j}\geq1$}
	\STATE Update $\mathscr{S}(\mathbb{H})$:  $\mathscr{S}(\mathbb{H})=\mathscr{S}(\mathbb{H})\setminus \mathcal{I}$;
	\ELSE
	\FOR{$k\in\mathcal{I}^c \cap \{j+1,\cdots,n_\textnormal{T}\}$}
	\IF{$\gamma_{\mathcal{I},k}>1$ or \{$\gamma_{\mathcal{I},k}=1$ and the first component of $\mathbf{a}_{\mathcal{I},k}$ is negative\}}
	\STATE Update $\mathscr{S}(\mathbb{H})$:  $\mathscr{S}(\mathbb{H})=\mathscr{S}(\mathbb{H})\setminus \mathcal{I}$;
	\ENDIF
	\ENDFOR
	\ENDIF
	\ENDFOR
	\ENDFOR
\end{algorithmic}
\end{algorithm}

\begin{rmk}
The NN-BP result in the above lemma plays a role analogous to the minimum-energy signaling approach in \cite{Li2020}, but it leads to a different geometric partition. In \cite{Li2020}, the peak-power constraint confines the input space to $[0,\amp]^{n_\textnormal{T}}$. After selecting an active subset of $n_\textnormal{R}$ linearly independent channel columns, the remaining $n_\textnormal{T}-n_\textnormal{R}$ input components are assigned to the boundary values $0$ or $\amp$. As a result, the bounded image zonotope is covered by $|\mathscr{R}(\mathbb{H})|$ shifted bounded parallelepipeds, where $|\mathscr{R}(\mathbb{H})|$ denotes the number of linearly independent $n_\textnormal{R}$-column groups. This construction crucially depends on the finite amplitude boundary $\amp$.
	
However, in the setting with only an average-power constraint, no such finite boundary exists. The image space induced by nonnegative inputs is an unbounded cone rather than a bounded zonotope, and the result in \cite{Li2020} therefore does not extend to the present channel. Instead, the proposed NN-BP result divides the image cone into $|\mathscr{S}(\mathbb{H})|$ unbounded subcones, each associated with an active aperture subset selected by the minimum-$\ell_1$-norm representation. Compared with $|\mathscr{R}(\mathbb{H})|$, $|\mathscr{S}(\mathbb{H})|$ can be strictly smaller, as shown in Examples \ref{example1}--\ref{example3}, where $|\mathscr{R}(\mathbb{H})|=3$ whereas $|\mathscr{S}(\mathbb{H})|=1$ or $2$, depending on the NN-BP partition returned by Algorithm~\ref{alg 1}.

\hfill $\blacklozenge$
\end{rmk}


\section{Main Results For $n_\textnormal{T}\geq n_\textnormal{R}$}\label{sec: nt>nr}
This section applies the NN-BP result to characterize the channel capacity for $n_\textnormal{T}\geq n_\textnormal{R}$. We begin by defining a random variable $\widetilde{\mathcal{I}}$ over $\mathscr{S}(\mathbb{H})$, which specifies the subcone $\mathcal{S}(\mathbb{H}_{\widetilde{\mathcal{I}}})$ containing $\bar{\mathbf{X}}$, i.e.,
\begin{align}
	\bar{\mathbf{X}}\in\mathcal{S}(\mathbb{H}_\mathcal{I}) \Longleftrightarrow \widetilde{\mathcal{I}} = \mathcal{I}.
\end{align}
Hence, $\widetilde{\mathcal{I}}$ can be regarded as a deterministic function of $\bar{\mathbf{X}}$.\footnote{If $\bar{\mathbf{X}}$ lies on the common boundary of different subcones $\mathcal{S}(\mathbb{H}_\mathcal{I})$ and $\mathcal{S}(\mathbb{H}_\mathcal{J})$, then $\widetilde{\mathcal{I}}$ can be chosen arbitrarily from $\mathcal{I}$ and $\mathcal{J}$ without affecting the results.} 
Let $\mathbf{p}$ be a probability vector over $\mathscr{S}(\mathbb{H})$ with entries:
\begin{align}
	p_\mathcal{I} &= \operatorname{Pr}\{\widetilde{\mathcal{I}}=\mathcal{I}\}, \quad \mathcal{I}\in\mathscr{S}(\mathbb{H}).	\label{eq: p pdf}
\end{align}
Moreover, define a probability vector $\mathbf{q}$ over $\mathscr{S}(\mathbb{H})$ by
\begin{align}	
	q_{\mathcal{I}} &= \frac{\left|\operatorname{det}\left(\mathbb{H}_\mathcal{I}\right)\right|}{\sum\nolimits_{\mathcal{I} \in \mathscr{S}(\mathbb{H}) } \left|\operatorname{det}\left(\mathbb{H}_\mathcal{I}\right)\right|}, \quad \mathcal{I}\in\mathscr{S}(\mathbb{H}). \label{eq: q vec def}
\end{align}
Using $\widetilde{\mathcal{I}}$, we establish the following proposition, whose proof is provided in Appendix~\ref{app: proof of prop}.
\begin{prop}\label{prop1}
The channel capacity in \eqref{eq: capacity express} can be equivalently expressed as
\begin{IEEEeqnarray}{rCl}
	\const{C} = \max_{ p_{\bar{\mathbf{X}}} } \const{I}(\bar{\mathbf{X}};\mathbf{Y}), \label{eq: prop eq1}	
\end{IEEEeqnarray}
with $p_{\bar{ \mathbf{X} }} $ being the distribution over $\mathcal{S}( \mathbb{H})$ and satisfying:
\begin{align}
	\textsf{E}_{\widetilde{\mathcal{I}}}\Bigl\{ \bigl\|\mathbb{H}_{\widetilde{\mathcal{I}}}^{-1} \textsf{E}_{\bar{\mathbf{X}}} \bigl\{ \bar{\mathbf{X}}|\widetilde{\mathcal{I}} \bigr\} \bigr\|_1 \Bigl\} \leq \EE. \label{eq: prop eq2}
\end{align}
\end{prop}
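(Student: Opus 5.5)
The plan is to prove the two inequalities $\const{C}\le\max_{p_{\bar{\mathbf{X}}}}\const{I}(\bar{\mathbf{X}};\mathbf{Y})$ and $\const{C}\ge\max_{p_{\bar{\mathbf{X}}}}\const{I}(\bar{\mathbf{X}};\mathbf{Y})$ separately. Both rest on two facts. First, $\mathbf{X}\to\bar{\mathbf{X}}\to\mathbf{Y}$ is a Markov chain and $\bar{\mathbf{X}}$ is a deterministic function of $\mathbf{X}$, so $\const{I}(\mathbf{X};\mathbf{Y})=\const{I}(\bar{\mathbf{X}};\mathbf{Y})$. Second, Lemma~\ref{lem 1} (Part II) identifies the minimum-$\ell_1$-norm preimage of each $\bar{\mathbf{x}}$ explicitly as $\mathbb{H}_{\mathcal{I}}^{-1}\bar{\mathbf{x}}$ (padded with zeros) whenever $\bar{\mathbf{x}}\in\mathcal{S}(\mathbb{H}_\mathcal{I})$, $\mathcal{I}\in\mathscr{S}(\mathbb{H})$. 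Part I guarantees that $\widetilde{\mathcal{I}}$ is well defined. On the measure-zero overlaps between subcones, the formula is consistent because both choices give a minimizer, so both give the same minimum norm.

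For ``$\le$'', take any feasible $p_\mathbf{X}$ satisfying \eqref{eq: non-negative} and \eqref{eq: ave cons}, and let $\bar{\mathbf{X}}=\mathbb{H}\mathbf{X}$, which lies in $\mathcal{S}(\mathbb{H})$. By optimality of the NN-BP solution, $\|\mathbf{X}\|_1\ge\|\mathbf{x}^\star(\bar{\mathbf{X}})\|_1=\|\mathbb{H}_{\widetilde{\mathcal{I}}}^{-1}\bar{\mathbf{X}}\|_1$ pointwise. On the event $\widetilde{\mathcal{I}}=\mathcal{I}$ we have $\bar{\mathbf{X}}\in\mathcal{S}(\mathbb{H}_\mathcal{I})$, so the vector $\mathbb{H}_\mathcal{I}^{-1}\bar{\mathbf{X}}$ is componentwise nonnegative. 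Its $\ell_1$-norm is therefore the linear functional $\mathbf{1}^\textsf{T}\mathbb{H}_\mathcal{I}^{-1}\bar{\mathbf{X}}$. Conditioning on $\widetilde{\mathcal{I}}$ and using linearity of expectation then gives
\begin{align*}
\textsf{E}\{\|\mathbf{X}\|_1\}\ge \textsf{E}_{\widetilde{\mathcal{I}}}\bigl\{\mathbf{1}^\textsf{T}\mathbb{H}_{\widetilde{\mathcal{I}}}^{-1}\textsf{E}\{\bar{\mathbf{X}}|\widetilde{\mathcal{I}}\}\bigr\}=\textsf{E}_{\widetilde{\mathcal{I}}}\bigl\{\|\mathbb{H}_{\widetilde{\mathcal{I}}}^{-1}\textsf{E}\{\bar{\mathbf{X}}|\widetilde{\mathcal{I}}\}\|_1\bigr\}.
\end{align*}
The last equality holds because the conditional mean stays in the convex cone $\mathcal{S}(\mathbb{H}_\mathcal{I})$, so its image under $\mathbb{H}_\mathcal{I}^{-1}$ is still nonnegative. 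Hence the induced $p_{\bar{\mathbf{X}}}$ satisfies \eqref{eq: prop eq2} and achieves the same mutual information.

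For ``$\ge$'', take any $p_{\bar{\mathbf{X}}}$ on $\mathcal{S}(\mathbb{H})$ satisfying \eqref{eq: prop eq2}. Define $\mathbf{X}=\mathbf{x}^\star(\bar{\mathbf{X}})$ via Lemma~\ref{lem 1}, Part II. This map is measurable, being piecewise linear on finitely many closed cones with a fixed tie-breaking rule. The resulting $\mathbf{X}$ is nonnegative, satisfies $\mathbb{H}\mathbf{X}=\bar{\mathbf{X}}$, and by the same linearity computation run in reverse satisfies $\textsf{E}\|\mathbf{X}\|_1$ equal to the left side of \eqref{eq: prop eq2}, hence at most $\EE$. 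It therefore yields $\const{I}(\mathbf{X};\mathbf{Y})=\const{I}(\bar{\mathbf{X}};\mathbf{Y})$.

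The main obstacle is the step where the $\ell_1$-norm commutes with the conditional expectation. This works only because nonnegativity of $\mathbb{H}_\mathcal{I}^{-1}\bar{\mathbf{x}}$ on each subcone, which comes from Lemma~\ref{lem 1}, turns the norm into a linear functional there. A secondary technical point is that the maximum is attained (or is at least a supremum) on both sides. Since the two feasible sets map onto each other with equal objective values, the suprema coincide, and attainment transfers directly.
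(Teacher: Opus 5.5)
Your proposal is correct and follows essentially the same route as the paper. The Markov chain $\mathbf{X}\to\bar{\mathbf{X}}\to\mathbf{Y}$ gives equal mutual informations, and Lemma~\ref{lem 1} lets you replace any input by its NN-BP minimizer $\mathbb{H}_{\widetilde{\mathcal{I}}}^{-1}\bar{\mathbf{X}}$ without increasing $\|\cdot\|_1$, with componentwise nonnegativity on each subcone letting the expectation pass inside the $\ell_1$-norm; you spell out both inequality directions and the tie-breaking and measurability details, whereas the paper compresses the converse into the remark that one may restrict to $\mathbf{X}^\star$ without loss of optimality.
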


We now proceed to derive capacity bounds based on Proposition~\ref{prop1}. The main results are summarized in the following theorems, with proofs provided in Appendix~\ref{app: proof of case i}.

First, we choose the distribution of $\widetilde{\mathcal{I}}$ according to \eqref{eq: q vec def}, i.e.,
\begin{align}
	p_{\mathcal{I}}=q_{\mathcal{I}}, \quad \mathcal{I}\in\mathscr{S}(\mathbb{H}). \label{eq: q=p}
\end{align}
For any $\mathcal{I}\in\mathscr{S}(\mathbb{H})$, let $\bar{\mathbf{X}}$ be exponentially distributed over $\mathcal{S}\left(\mathbb{H}_\mathcal{I}\right)$, i.e.,
\begin{align}
	&p_{\bar{\mathbf{X}}|\widetilde{\mathcal{I}}=\mathcal{I}} (\bar{\mathbf{x}})\nonumber\\
	&=\frac{n_\textnormal{R}^{n_\textnormal{R}}}{\left|\operatorname{det}\left(\mathbb{H}_\mathcal{I}\right)\right|\EE^{n_\textnormal{R}}} \exp \left(-\frac{n_\textnormal{R}}{\EE}\left\|\mathbb{H}_\mathcal{I}^{-1} \bar{\mathbf{x}}\right\|_1\right),\,\bar{\mathbf{x}} \in \mathcal{S}\left(\mathbb{H}_\mathcal{I}\right). \label{eq: exp pdf}
\end{align}
It can be shown that
{\setlength\abovedisplayskip{4.5pt} 
	\setlength\belowdisplayskip{4.5pt}
\begin{align}	
	\textsf{E}_{\bar{\mathbf{X}}}
	\left\{\bar{\mathbf{X}} \mid \widetilde{\mathcal{I}}=\mathcal{I}\right\}
		&=\int_{\bar{\mathbf{x}}\in\mathcal{S}(\mathbb{H}_\mathcal{I})} \bar{\mathbf{x}} \times p_{\bar{\mathbf{X}}|\widetilde{\mathcal{I}}=\mathcal{I}} (\bar{\mathbf{x}}) d \bar{x}_1\cdots d\bar{x}_{n_\textnormal{R}}\\
	&=\frac{\EE}{n_\textnormal{R}} \mathbb{H}_\mathcal{I} \mathbf{1}_{n_\textnormal{R}}.
\end{align}}
Then, the constraint \eqref{eq: prop eq2} is satisfied because
\begin{align}
	&\textsf{E}_{\widetilde{\mathcal{I}}}
	\left\{\left\|\mathbb{H}_{\widetilde{\mathcal{I}}}^{-1} \textsf{E}_{\bar{\mathbf{X}}}\{\bar{\mathbf{X}} \mid \widetilde{\mathcal{I}}\}\right\|_1\right\} \\
	& =\sum_{\mathcal{I} \in \mathscr{S}(\mathbb{H})} p_\mathcal{I} \times\left\|\mathbb{H}_\mathcal{I}^{-1} \textsf{E}_{\bar{\mathbf{X}}}\{\bar{\mathbf{X}} \mid \widetilde{\mathcal{I}}=\mathcal{I}\}\right\|_1 \\
	& =\sum_{\mathcal{I} \in \mathscr{S}(\mathbb{H})} p_\mathcal{I}\times\left\|\mathbb{H}_\mathcal{I}^{-1} \times\frac{\EE}{n_\textnormal{R}} \mathbb{H}_\mathcal{I} \mathbf{1}_{n_\textnormal{R}}\right\|_1 \\
	& =\sum_{\mathcal{I} \in \mathscr{S}(\mathbb{H})} p_\mathcal{I} \times \EE \\
	& =\EE.
\end{align}
The following theorem provides a lower bound on channel capacity based on the achievable rate induced by this compound exponential distribution.
\begin{thm}[Lower Bound]\label{thm: lb}
\begin{align}
	\const{C} \geq \frac{n_\textnormal{R}}{2} \log \left\{ 1+\biggl(\sum_{\mathcal{I} \in \mathscr{S}(\mathbb{H}) } \bigl|\operatorname{det}\left(\mathbb{H}_\mathcal{I}\right)\bigr|\biggr)^{\frac{2}{n_\textnormal{R}}} \frac{e \EE^2}{2 \pi n_\textnormal{R}^2 \sigma^2}\right\}. \label{eq: theorem C_lb}
\end{align}	
\end{thm}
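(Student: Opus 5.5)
We evaluate the mutual information for the compound exponential input already constructed above, namely $p_{\mathcal{I}}=q_{\mathcal{I}}$ together with the conditional density \eqref{eq: exp pdf}. The plan is to combine the entropy power inequality with an exact computation of $\const{h}(\bar{\mathbf{X}})$. This input satisfies the constraint \eqref{eq: prop eq2}, as verified above, so Proposition~\ref{prop1} gives $\const{C}\geq \const{I}(\bar{\mathbf{X}};\mathbf{Y})=\const{h}(\bar{\mathbf{X}}+\mathbf{Z})-\const{h}(\mathbf{Z})$. The vector EPI for independent $\bar{\mathbf{X}}$ and $\mathbf{Z}$ then yields
\begin{align}
e^{\frac{2}{n_\textnormal{R}}\const{h}(\mathbf{Y})}\geq e^{\frac{2}{n_\textnormal{R}}\const{h}(\bar{\mathbf{X}})}+2\pi e\sigma^2 .
\end{align}
Subtracting $\const{h}(\mathbf{Z})=\frac{n_\textnormal{R}}{2}\log(2\pi e\sigma^2)$ gives $\const{I}\geq\frac{n_\textnormal{R}}{2}\log\bigl(1+e^{\frac{2}{n_\textnormal{R}}\const{h}(\bar{\mathbf{X}})}/(2\pi e\sigma^2)\bigr)$. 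What remains is to show that $e^{\frac{2}{n_\textnormal{R}}\const{h}(\bar{\mathbf{X}})}=\bigl(\sum_{\mathcal{I}}|\det\mathbb{H}_\mathcal{I}|\bigr)^{2/n_\textnormal{R}}e^2\EE^2/n_\textnormal{R}^2$, since then $e^2/(2\pi e)=e/(2\pi)$ reproduces \eqref{eq: theorem C_lb}.

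To compute $\const{h}(\bar{\mathbf{X}})$, we use that by Lemma~\ref{lem 1}, Part~I, the subcones $\mathcal{S}(\mathbb{H}_\mathcal{I})$, $\mathcal{I}\in\mathscr{S}(\mathbb{H})$, overlap only on sets of zero volume. Hence $\widetilde{\mathcal{I}}$ is almost surely a deterministic function of $\bar{\mathbf{X}}$, and $\const{h}(\bar{\mathbf{X}})=\const{H}(\widetilde{\mathcal{I}})+\const{h}(\bar{\mathbf{X}}|\widetilde{\mathcal{I}})$. Equivalently, the density of $\bar{\mathbf{X}}$ is the mixture $\sum_\mathcal{I}q_\mathcal{I}p_{\bar{\mathbf{X}}|\widetilde{\mathcal{I}}=\mathcal{I}}$ with almost disjoint supports, so the entropy splits exactly.

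Conditioned on $\widetilde{\mathcal{I}}=\mathcal{I}$, we have $\bar{\mathbf{X}}=\mathbb{H}_\mathcal{I}\mathbf{U}$, where $\mathbf{U}$ has i.i.d.\ exponential entries with mean $\EE/n_\textnormal{R}$. Therefore
\begin{align}
\const{h}(\bar{\mathbf{X}}|\widetilde{\mathcal{I}}=\mathcal{I})=n_\textnormal{R}\log\frac{e\EE}{n_\textnormal{R}}+\log|\det\mathbb{H}_\mathcal{I}|.
\end{align}
Averaging over $\mathcal{I}$ with weights $q_\mathcal{I}$ and adding $\const{H}(\widetilde{\mathcal{I}})=-\sum_\mathcal{I} q_\mathcal{I}\log q_\mathcal{I}$, write $\log q_\mathcal{I}=\log|\det\mathbb{H}_\mathcal{I}|-\log\sum_{\mathcal{J}}|\det\mathbb{H}_\mathcal{J}|$. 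The $\log|\det|$ terms then cancel, leaving
\begin{align}
\const{h}(\bar{\mathbf{X}})=n_\textnormal{R}\log\frac{e\EE}{n_\textnormal{R}}+\log\sum_{\mathcal{I}}|\det\mathbb{H}_\mathcal{I}|.
\end{align}
This cancellation is exactly why the choice $p=q$ in \eqref{eq: q=p} is made: $q$ maximizes $\const{H}(\widetilde{\mathcal{I}})+\sum_\mathcal{I} p_\mathcal{I}\log|\det\mathbb{H}_\mathcal{I}|$. Substituting into the EPI bound completes the argument.

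The main obstacle is rigorously justifying that the entropy splits as $\const{h}(\bar{\mathbf{X}})=\const{H}(\widetilde{\mathcal{I}})+\const{h}(\bar{\mathbf{X}}|\widetilde{\mathcal{I}})$. This needs the zero-volume overlap \eqref{eq: lem5-2} (and the tie-breaking in the footnote), so that the mixture density equals $q_\mathcal{I}p_{\bar{\mathbf{X}}|\widetilde{\mathcal{I}}=\mathcal{I}}$ almost everywhere on each subcone. I would handle this by computing $-\textsf{E}\log p_{\bar{\mathbf{X}}}(\bar{\mathbf{X}})$ directly from the mixture density rather than through the chain rule.
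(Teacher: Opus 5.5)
Your proposal is correct and follows essentially the same route as the paper. The EPI reduces the bound to computing $\const{h}(\bar{\mathbf{X}})$, which is split as $\const{H}(\widetilde{\mathcal{I}})+\const{h}(\bar{\mathbf{X}}\mid\widetilde{\mathcal{I}})$ via $\const{H}(\widetilde{\mathcal{I}}\mid\bar{\mathbf{X}})=0$, and the choice $\mathbf{p}=\mathbf{q}$ makes the paper's $-\textsf{D}(\mathbf{p}\|\mathbf{q})$ term vanish, which is exactly your cancellation of the $\log|\det(\mathbb{H}_\mathcal{I})|$ terms. Your concern about the zero-volume overlap from Lemma~\ref{lem 1} is also the point the paper relies on implicitly when it sets $\const{H}(\widetilde{\mathcal{I}}\mid\bar{\mathbf{X}})=0$.
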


Second, the following theorem presents two duality-based upper bounds.
\begin{thm}[Upper Bound]\label{thm: ub}
\begin{align}
	\const{C} 
	&\leq \sup _{\mathbf{p}} \inf_{\delta>0, \beta>0} \Biggl\{
	\log \biggl(\sum\nolimits_{\mathcal{I} \in \mathscr{S}(\mathbb{H}) } \left|\operatorname{det}\left(\mathbb{H}_\mathcal{I}\right)\right|\biggr)\nonumber\\
	&\quad -\textsf{D}(\mathbf{p}  \| \mathbf{q} )-\frac{n_\textnormal{R}}{2} \log \left(2 \pi e \sigma^2\right)+\frac{n_\textnormal{R} \delta+\EE}{\beta} \nonumber\\
	&\quad+\sum\limits_{\mathcal{I} \in \mathscr{S}(\mathbb{H}) } p_\mathcal{I} \sum_{l=1}^{n_\textnormal{R}} \Bigg[ \log \biggl(\beta e^{-\frac{\delta^2}{2 \bar{\sigma}_{\mathcal{I}, l}^2}}+\sqrt{2 \pi} \bar{\sigma}_{\mathcal{I}, l} Q\Bigl(\frac{\delta}{\bar{\sigma}_{\mathcal{I}, l}}\Bigr)\biggr) \nonumber\\
	&\quad+\frac{1}{2} Q\left(\frac{\delta}{\bar{\sigma}_{\mathcal{I}, l}}\right) + \frac{\delta}{2 \sqrt{2 \pi} \bar{\sigma}_{\mathcal{I}, l}} e^{-\frac{\delta^2}{2 \bar{\sigma}_{\mathcal{I}, l}^2}}+\frac{\delta^2}{2 \bar{\sigma}_{\mathcal{I}, l}^2}  \nonumber\\
	&\quad+ \frac{\bar{\sigma}_{\mathcal{I}, l}}{\sqrt{2 \pi} \beta} e^{-\frac{\delta^2}{2 \bar{\sigma}_{\mathcal{I}, l}^2}}
	\Biggr] \Biggr\}, \label{eq: thm ub1}\\
	\const{C} 
	&\leq \sup _{\mathbf{p}} \inf_{\nu>0} \Biggl\{
	\log \biggl(\sum\nolimits_{\mathcal{I} \in \mathscr{S}(\mathbb{H}) } \left|\operatorname{det}\left(\mathbb{H}_\mathcal{I}\right)\right|\biggr)\nonumber\\
	&\quad -\textsf{D}(\mathbf{p}  \| \mathbf{q} )-\frac{n_\textnormal{R}}{2} \log \left(2 \pi e \sigma^2\right)+\frac{\EE}{\nu} \nonumber\\
	&\quad+\sum\limits_{\mathcal{I} \in \mathscr{S}(\mathbb{H}) } p_\mathcal{I} \sum_{l=1}^{n_\textnormal{R}} \Bigg[ \frac{\bar{\sigma}_{\mathcal{I}, l}}{\sqrt{2\pi}\nu} 
	+ \log\left(\nu + \sqrt{\frac{\pi e}{2}} \bar{\sigma}_{\mathcal{I}, l}\right)
	\Biggr] \Biggr\}, \label{eq: thm ub2}
\end{align}
where $\mathbf{p}$ and $\mathbf{q}$ denote probability vectors in \eqref{eq: p pdf} and \eqref{eq: q vec def}, respectively, and $\bar{\sigma}_{\mathcal{I},l}$ is given by
\begin{align}
	\bar{\sigma}_{\mathcal{I}, l}=\sigma \sqrt{\left[\mathbb{H}_\mathcal{I}^{-1} \mathbb{H}_\mathcal{I}^{-\textsf{T}}\right]_{l, l}}, \quad l\in\{1,\cdots,n_\textnormal{R}\},
\end{align}
with $[\mathbb{H}_\mathcal{I}^{-1} \mathbb{H}_\mathcal{I}^{-\textsf{T}}]_{l, l}$ denoting the $(l,l)$-th entry of $\mathbb{H}_\mathcal{I}^{-1} \mathbb{H}_\mathcal{I}^{-\textsf{T}}$.
\end{thm}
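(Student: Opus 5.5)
We prove Theorem~\ref{thm: ub} with the duality upper bound $\const{I}(\bar{\mathbf{X}};\mathbf{Y})\le \textsf{E}\{\textsf{D}(W(\cdot|\bar{\mathbf{X}})\|R(\cdot))\}$, valid for any output density $R$, combined with the chain rule through $\widetilde{\mathcal{I}}$. We start from Proposition~\ref{prop1}. Since $\widetilde{\mathcal{I}}$ is a function of $\bar{\mathbf{X}}$, we have $\const{I}(\bar{\mathbf{X}};\mathbf{Y})=\const{h}(\mathbf{Y})-\frac{n_\textnormal{R}}{2}\log(2\pi e\sigma^2)$. Rather than work with $\const{h}(\mathbf{Y})$ directly, we use a genie argument: $\const{h}(\mathbf{Y})\le \const{h}(\mathbf{Y},\widetilde{\mathcal{I}})=\const{H}(\widetilde{\mathcal{I}})+\sum_{\mathcal{I}}p_\mathcal{I}\,\const{h}(\mathbf{Y}|\widetilde{\mathcal{I}}=\mathcal{I})$.

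For each $\mathcal{I}$ we change variables $\widetilde{\mathbf{Y}}_\mathcal{I}=\mathbb{H}_\mathcal{I}^{-1}\mathbf{Y}=\mathbf{x}^\star_\mathcal{I}+\mathbb{H}_\mathcal{I}^{-1}\mathbf{Z}$. This gives $\const{h}(\mathbf{Y}|\widetilde{\mathcal{I}}=\mathcal{I})=\log|\det\mathbb{H}_\mathcal{I}|+\const{h}(\widetilde{\mathbf{Y}}_\mathcal{I}|\widetilde{\mathcal{I}}=\mathcal{I})$. By Lemma~\ref{lem 1} Part II, $\mathbf{x}^\star_\mathcal{I}\in\mathfrak{R}_+^{n_\textnormal{R}}$. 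Subadditivity then gives $\const{h}(\widetilde{\mathbf{Y}}_\mathcal{I}|\cdot)\le\sum_l \const{h}(\widetilde{Y}_{\mathcal{I},l}|\cdot)$, where each $\widetilde{Y}_{\mathcal{I},l}=X_l+\bar Z_l$ is a scalar SISO output with nonnegative input and Gaussian noise of standard deviation $\bar{\sigma}_{\mathcal{I},l}$. Collecting the determinant terms, we get
\[
\const{H}(\widetilde{\mathcal{I}})+\sum_\mathcal{I} p_\mathcal{I}\log|\det\mathbb{H}_\mathcal{I}|=\log\Bigl(\sum_{\mathcal{I}\in\mathscr{S}(\mathbb{H})}|\det\mathbb{H}_\mathcal{I}|\Bigr)-\textsf{D}(\mathbf{p}\|\mathbf{q}),
\]
which is exactly the first two terms of both bounds.

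Next we bound each scalar entropy $\const{h}(\widetilde{Y}_{\mathcal{I},l}|\widetilde{\mathcal{I}}=\mathcal{I})$. Write $\mathcal{E}_{\mathcal{I},l}=\textsf{E}\{X_l|\widetilde{\mathcal{I}}=\mathcal{I}\}$. The inequality $\const{h}(\widetilde Y)\le \textsf{E}\{-\log R(\widetilde Y)\}$ holds for any density $R$. For \eqref{eq: thm ub1} we choose the auxiliary density used in the SISO average-power literature \cite{Lapidoth2009}: a Gaussian-shaped piece for $y<-\delta$ glued to an exponential piece $\propto e^{-y/\beta}$ for $y\ge-\delta$. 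We normalize it, which produces the $\beta e^{-\delta^2/(2\bar\sigma^2)}+\sqrt{2\pi}\bar\sigma Q(\delta/\bar\sigma)$ term. We then bound $\textsf{E}\{-\log R(X+\bar Z)\}$ using $X\ge0$. The tail regions produce the Gaussian moment integrals giving the $Q(\cdot)$, $\delta e^{-\delta^2/2\bar\sigma^2}$ and $\delta^2/(2\bar\sigma^2)$ terms. The term $\frac{\bar\sigma}{\sqrt{2\pi}\beta}e^{-\delta^2/2\bar\sigma^2}$ comes from $\textsf{E}\{(X+\bar Z+\delta)^+\}/\beta\le(\mathcal{E}_{\mathcal{I},l}+\delta)/\beta+$ a Gaussian correction. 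For \eqref{eq: thm ub2} we instead use a single-parameter density: an exponential with mean $\nu$ on the positive side and a Gaussian-like tail on the negative side, giving $\log(\nu+\sqrt{\pi e/2}\bar\sigma)$ and $\textsf{E}\{|\bar Z|\}/\nu$-type terms, that is, $\bar\sigma/(\sqrt{2\pi}\nu)$. The parameters $\delta,\beta,\nu$ must be common to all $(\mathcal{I},l)$ so that the linear terms aggregate.

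Aggregating the linear terms uses $\sum_\mathcal{I}p_\mathcal{I}\sum_l\mathcal{E}_{\mathcal{I},l}=\textsf{E}_{\widetilde{\mathcal{I}}}\|\mathbb{H}_{\widetilde{\mathcal{I}}}^{-1}\textsf{E}\{\bar{\mathbf{X}}|\widetilde{\mathcal{I}}\}\|_1\le\EE$, which is constraint \eqref{eq: prop eq2}. This yields the $(n_\textnormal{R}\delta+\EE)/\beta$ and $\EE/\nu$ terms. Since the resulting bound depends on $p_{\bar{\mathbf{X}}}$ only through $\mathbf{p}$, we take the infimum over the free parameters and then the supremum over $\mathbf{p}$.

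The main obstacle is the scalar step: computing $\textsf{E}\{-\log R\}$ exactly enough to match the stated closed forms. In particular, the bound on $\textsf{E}\{(X+\bar Z+\delta)^+\}$ and the Gaussian-region contributions must be handled uniformly in the unknown input law, using only $X\ge0$ and its mean.
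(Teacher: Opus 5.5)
Your proposal is correct and is essentially the paper's proof. Your genie step $\textsf{h}(\mathbf{Y})\le \textsf{H}(\widetilde{\mathcal{I}})+\textsf{h}(\mathbf{Y}\mid\widetilde{\mathcal{I}})$ is the paper's $\textsf{I}(\bar{\mathbf{X}};\mathbf{Y})\le\textsf{I}(\bar{\mathbf{X}};\mathbf{Y},\widetilde{\mathcal{I}})$. Subadditivity followed by $\textsf{h}\le\textsf{E}\{-\log R\}$ per coordinate is exactly the duality bound with a product auxiliary output density, using the Lapidoth--Moser--Wigger and Jiang--Li densities with parameters common to all $(\mathcal{I},l)$, and aggregating the conditional means through \eqref{eq: prop eq2}.

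The one detail to pin down concerns \eqref{eq: thm ub2}: the Gaussian piece on $y\le 0$ must carry an extra factor $\sqrt{e}$ relative to the exponential piece. This produces a $-\tfrac{1}{2}$ that offsets $Y^2/(2\bar{\sigma}_{\mathcal{I},l}^2)$ on $\{Y\le 0\}$, so the net contribution is nonpositive for every $x\ge 0$, yielding exactly $\log\bigl(\nu+\sqrt{\pi e/2}\,\bar{\sigma}_{\mathcal{I},l}\bigr)$.
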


Third, by combining Theorems~\ref{thm: lb} and \ref{thm: ub}, we obtain the asymptotic capacity in the high-SNR regime.
\begin{thm}[High-SNR Asymptotics]\label{thm: hsnr}
\begin{align}
	&\lim _{\frac{\EE}{\sigma} \rightarrow+\infty}\left\{\const{C}-n_\textnormal{R} \log \left(\frac{\EE}{\sigma}\right)\right\}\nonumber\\
	&\qquad=\log \biggl(\sum_{\mathcal{I}\in\mathscr{S}(\mathbb{H})}\left|\operatorname{det}\left(\mathbb{H}_\mathcal{I}\right)\right|\biggr)+\frac{n_\textnormal{R}}{2} \log \left(\frac{e}{2 \pi n_\textnormal{R}^2}\right). \label{eq: hsnr capacity}
\end{align}
\end{thm}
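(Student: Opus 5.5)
Theorem~\ref{thm: hsnr} follows by sandwiching $\const{C}-n_\textnormal{R}\log(\EE/\sigma)$ between Theorem~\ref{thm: lb} and Theorem~\ref{thm: ub}, and showing that both sides converge to the same constant.

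\emph{Liminf.} Set $\Delta=\sum_{\mathcal{I}\in\mathscr{S}(\mathbb{H})}|\det(\mathbb{H}_\mathcal{I})|$. Theorem~\ref{thm: lb} gives
\begin{align}
\const{C}-n_\textnormal{R}\log\frac{\EE}{\sigma}\geq \frac{n_\textnormal{R}}{2}\log\Bigl\{\frac{\sigma^2}{\EE^2}+\Delta^{\frac{2}{n_\textnormal{R}}}\frac{e}{2\pi n_\textnormal{R}^2}\Bigr\}.
\end{align}
As $\EE/\sigma\to\infty$ the right-hand side tends to $\log\Delta+\frac{n_\textnormal{R}}{2}\log\frac{e}{2\pi n_\textnormal{R}^2}$.

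\emph{Limsup.} I would use the simpler bound \eqref{eq: thm ub2} and choose $\nu=\EE/n_\textnormal{R}$; this is a heuristic choice matching the exponential input with mean $\EE/n_\textnormal{R}$ per coordinate. Then $\EE/\nu=n_\textnormal{R}$. Each bracket becomes
\begin{align}
\frac{n_\textnormal{R}\bar\sigma_{\mathcal{I},l}}{\sqrt{2\pi}\,\EE}+\log\frac{\EE}{n_\textnormal{R}}+\log\Bigl(1+\frac{n_\textnormal{R}}{\EE}\sqrt{\frac{\pi e}{2}}\,\bar\sigma_{\mathcal{I},l}\Bigr).
\end{align}
Since $\bar\sigma_{\mathcal{I},l}/\EE=c_{\mathcal{I},l}\,\sigma/\EE$ with constants $c_{\mathcal{I},l}$ depending only on $\mathbb{H}$, and $\mathscr{S}(\mathbb{H})$ is finite, the first and third terms vanish uniformly in $\mathbf{p}$. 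Summing over $l$ and weighting by $p_\mathcal{I}$ gives $n_\textnormal{R}\log(\EE/n_\textnormal{R})+o(1)$, uniformly in $\mathbf{p}$.

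The only remaining $\mathbf{p}$-dependence is $-\textsf{D}(\mathbf{p}\|\mathbf{q})\leq0$, so the supremum over $\mathbf{p}$ is attained at $\mathbf{p}=\mathbf{q}$. This uniformity is the only delicate point, and it is handled by the finiteness of $\mathscr{S}(\mathbb{H})$. Hence
\begin{align}
\const{C}\leq \log\Delta-\frac{n_\textnormal{R}}{2}\log(2\pi e\sigma^2)+n_\textnormal{R}+n_\textnormal{R}\log\frac{\EE}{n_\textnormal{R}}+o(1).
\end{align}
Subtracting $n_\textnormal{R}\log(\EE/\sigma)$ leaves the constant
\begin{align}
\log\Delta+n_\textnormal{R}-\frac{n_\textnormal{R}}{2}\log(2\pi e)-n_\textnormal{R}\log n_\textnormal{R}=\log\Delta+\frac{n_\textnormal{R}}{2}\log\frac{e^2}{2\pi e\,n_\textnormal{R}^2}=\log\Delta+\frac{n_\textnormal{R}}{2}\log\frac{e}{2\pi n_\textnormal{R}^2}.
\end{align}
This matches the liminf, so the limit exists and equals \eqref{eq: hsnr capacity}.

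The main thing to check is this final constant bookkeeping: $n_\textnormal{R}=\frac{n_\textnormal{R}}{2}\log e^2$ combines with $-\frac{n_\textnormal{R}}{2}\log(2\pi e)$ to give $\frac{n_\textnormal{R}}{2}\log\frac{e}{2\pi}$, and $-n_\textnormal{R}\log n_\textnormal{R}$ supplies the $n_\textnormal{R}^2$ in the denominator. If the choice $\nu=\EE/n_\textnormal{R}$ did not yield the matching constant, I would instead optimize $\nu$ exactly in the limit. The first-order condition of $\EE/\nu+n_\textnormal{R}\log\nu$ gives $\nu=\EE/n_\textnormal{R}$, so this choice is already optimal asymptotically.
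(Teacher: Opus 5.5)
Your proof is correct and follows the paper's argument. Achievability comes from Theorem~\ref{thm: lb}, and the converse comes from a duality upper bound in which $-\textsf{D}(\mathbf{p}\|\mathbf{q})\leq 0$ absorbs the dependence on $\mathbf{p}$. Your choice $\nu=\EE/n_\textnormal{R}$ in \eqref{eq: thm ub2} is exactly the alternative in the paper's footnote; the main text instead uses the first bound with $\beta=\EE/n_\textnormal{R}$ and $\delta_{\mathcal{I},l}=\bar{\sigma}_{\mathcal{I},l}\sqrt{\log(\EE/\bar{\sigma}_{\mathcal{I},l})}$. Your route is somewhat cleaner: a single $\nu$ plugs directly into the stated bound, whereas the paper picks $\delta$ separately for each $(\mathcal{I},l)$ while the bound in Theorem~\ref{thm: ub} is stated with one common $\delta$, and your $o(1)$ terms are explicit and uniform in $\mathbf{p}$.
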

\begin{rmk}
The term $|\operatorname{det}\left(\mathbb{H}_\mathcal{I}\right)|$ measures the effective volume spanned by the subset of transmit apertures indexed by $\mathcal{I}$ in the receive signal space. Hence, $\sum_{\mathcal{I}\in\mathscr{S}(\mathbb{H})}\left|\operatorname{det}\left(\mathbb{H}_\mathcal{I}\right)\right|$ aggregates the geometric contributions of all active aperture subsets that can support minimum-$\ell_1$-norm image-domain representations. The high-SNR power offset is therefore determined not only by the channel gains but also by the geometry of the image cone induced by the placement of transmit and receive apertures. As will be shown later, for the case $n_\textnormal{T}<n_\textnormal{R}$, the corresponding geometric contribution becomes $|\operatorname{det}(\mathbb{H}^\textsf{T} \mathbb{H})|^{\frac{1}{2}}$, i.e., the square root of the Gram determinant of the channel vectors in the receive signal space.
\hfill $\blacklozenge$
\end{rmk}
\section{Extensions to $n_\textnormal{T}< n_\textnormal{R}$}\label{sec: nt<nr}
This section also employs the NN-BP result to characterize the channel capacity for $n_\textnormal{T}< n_\textnormal{R}$. To begin with, we apply singular value decomposition (SVD) to $\mathbb{H}$:
\begin{align}
	\mathbb{H}=\mathbb{U} \mathbb{S} \mathbb{V}^\textsf{T},
\end{align}
where $\mathbb{S}$ is an $n_\textnormal{R}\times n_\textnormal{T}$ rectangular diagonal matrix, $\mathbb{U}$ and $\mathbb{V}$ are orthogonal matrices. Since $\mathbb{H}$ has full column rank, it follows that $\mathbb{S}$ also has full column rank. For convenience, we denote the first $n_\textnormal{T}$ rows of $\mathbb{S}$ as $\mathbb{S}_{1:n_\textnormal{T}}$. Based on the SVD, the mutual information can be simplified as 
\begin{IEEEeqnarray}{rCl}
	\const{I}( \mathbf{X};\mathbf{Y} ) 
	&=& \const{I}( \mathbf{X}; \mathbb{H} \mathbf{X} + \mathbf{Z}) \\
	&=& \const{I}( \mathbf{X}; \mathbb{U} \mathbb{S} \mathbb{V}^\textsf{T} \mathbf{X} + \mathbf{Z}) \\
	&=& \const{I} \left( \mathbf{X}; \mathbb{S} \mathbb{V}^\textsf{T} \mathbf{X} + \mathbf{Z}^\prime \right) \\
	&=& \const{I} \left( \mathbf{X}; \left[\begin{array}{l}
		\mathbb{S}_{1:n_\textnormal{T}}\\
		\mathbf{0}_{(n_\textnormal{R}-n_\textnormal{T})\times n_\textnormal{T}}
	\end{array} \right] \mathbb{V}^\textsf{T} \mathbf{X} + \left[\begin{array}{l}
	\mathbf{Z}_{1:n_\textnormal{T}}^\prime\\
	\mathbf{Z}_{n_\textnormal{T}+1:n_\textnormal{R}}^\prime
	\end{array} \right] \right) \quad\quad\\
	&=& \const{I} \left( \mathbf{X}; \mathbb{S}_{1:n_\textnormal{T}} \mathbb{V}^\textsf{T} \mathbf{X} + \mathbf{Z}_{1:n_\textnormal{T}}^\prime \right) \\
	&=& \const{I} \left( \mathbf{X}; \widetilde{\mathbb{H}} \mathbf{X} + \widetilde{\mathbf{Z}}  \right), \label{eq: equal channel}
\end{IEEEeqnarray}
where $\mathbf{Z}^\prime=\mathbb{U}^\textsf{T} \mathbf{Z} $, $\mathbf{Z}_{1:n_\textnormal{T}}^\prime$ and $\mathbf{Z}_{n_\textnormal{T}+1:n_\textnormal{R}}^\prime$ are the first $n_\textnormal{T}$ and last $n_\textnormal{R}-n_\textnormal{T}$ components of $\mathbf{Z}^\prime$, respectively. Moreover, we denote $\widetilde{\mathbb{H}}=\mathbb{S}_{1:n_\textnormal{T}} \mathbb{V}^\textsf{T}$ and $\widetilde{\mathbf{Z}}=\mathbf{Z}_{1:n_\textnormal{T}}^\prime$. It can be shown that $\widetilde{\mathbb{H}}$ is invertible and $\widetilde{\mathbf{Z}}$ follows $\mathcal{N}(\mathbf{0}_{n_\textnormal{T}},\sigma^2\mathbb{I}_{n_\textnormal{T}})$.

The above transformation reveals that the original  $n_\textnormal{R}\times n_\textnormal{T}$ channel model $\mathbf{Y}=\mathbb{H}\mathbf{X}+\mathbf{Z}$ is equivalent to an $n_\textnormal{T}\times n_\textnormal{T}$ channel model $\widetilde{\mathbf{Y}}=\widetilde{\mathbb{H}}\mathbf{X}+\widetilde{\mathbf{Z}}$ in \eqref{eq: equal channel}. Hence, it suffices to analyze the capacity of this equivalent $n_\textnormal{T}\times n_\textnormal{T}$ channel. By substituting  $\widetilde{\mathbb{H}}$ for $\mathbb{H}$ in Theorems~\ref{thm: lb} and \ref{thm: ub}, we obtain the following results with proofs given in Appendix~\ref{app: proof of case ii}.
\begin{thm}\label{thm: case II}
For $n_\textnormal{T}<n_\textnormal{R}$, the channel capacity can be bounded as
	\begin{align}
		\const{C} 
		&\geq \frac{n_\textnormal{T}}{2} \log \left\{ 1+ \bigl|\operatorname{det}(\mathbb{H}^\textsf{T}\mathbb{H})\bigr|^{\frac{1}{n_\textnormal{T}}} \frac{e \EE^2}{2 \pi n_\textnormal{T}^2 \widetilde{\sigma}^2}\right\}, \label{}	\\
		\const{C} 
		&\leq \inf_{\delta>0, \beta>0} \Biggl\{
		\frac{1}{2}\log (|\operatorname{det}(\mathbb{H}^\textsf{T}\mathbb{H})|) - \frac{n_\textnormal{T}}{2} \log \left(2 \pi e \sigma^2\right) \nonumber\\
		&\quad +\frac{n_\textnormal{T} \delta+\EE}{\beta} + \sum_{l=1}^{n_\textnormal{T}} \Bigg[ \log \biggl(\beta e^{-\frac{\delta^2}{2 \widetilde{\sigma}_{l}^2}}+\sqrt{2 \pi} \widetilde{\sigma}_{l} Q\Bigl(\frac{\delta}{\widetilde{\sigma}_{l}}\Bigr)\biggr)  \nonumber\\
		&\quad +\frac{1}{2} Q\left(\frac{\delta}{\widetilde{\sigma}_{l}}\right) + \frac{\delta}{2 \sqrt{2 \pi} \widetilde{\sigma}_{l}} e^{-\frac{\delta^2}{2 \widetilde{\sigma}_{l}^2}}+\frac{\delta^2}{2 \widetilde{\sigma}_{l}^2} \nonumber\\
		&\quad + \frac{\widetilde{\sigma}_{l}}{\sqrt{2 \pi} \beta} e^{-\frac{\delta^2}{2 \widetilde{\sigma}_{l}^2}}
		\Biggr] \Biggr\}, \label{}\\
		\const{C} 
		&\leq \inf_{\nu>0} \Biggl\{
		\frac{1}{2} \log (|\operatorname{det}(\mathbb{H}^\textsf{T}\mathbb{H})|) -\frac{n_\textnormal{T}}{2} \log \left(2 \pi e \sigma^2\right) +\frac{\EE}{\nu} \nonumber\\
		&\quad+ \sum_{l=1}^{n_\textnormal{T}} \Bigg[ \frac{\widetilde{\sigma}_{l}}{\sqrt{2\pi}\nu} 
		+ \log\left(\nu + \sqrt{\frac{\pi e}{2}} \widetilde{\sigma}_{l}\right)
		\Biggr] \Biggr\}, \label{}
	\end{align}
	where $\widetilde{\sigma}_{l}$ is given by
	\begin{align}
		\widetilde{\sigma}_{l}=\sigma \sqrt{[( \mathbb{H}^{\textsf{T}} \mathbb{H})^{-1}]_{l, l}}, \quad l\in\{1,\cdots,n_\textnormal{T}\},
	\end{align}
	with $[( \mathbb{H}^{\textsf{T}} \mathbb{H})^{-1}]_{l, l}$ denoting the $(l,l)$-th entry of $( \mathbb{H}^{\textsf{T}} \mathbb{H})^{-1}$. 
	
	At high SNR, the asymptotic capacity is given by
	\begin{align}
		&\lim _{\frac{\EE}{\sigma} \rightarrow+\infty}\left\{\const{C}-n_\textnormal{T} \log \left(\frac{\EE}{\sigma}\right)\right\}\nonumber\\
		&\qquad=\frac{1}{2}\log (|\operatorname{det}(\mathbb{H}^\textsf{T} \mathbb{H})|)+\frac{n_\textnormal{T}}{2} \log \left(\frac{e}{2 \pi n_\textnormal{T}^2}\right). \label{}
	\end{align}
\end{thm}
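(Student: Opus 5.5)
The plan is to reduce everything to the square case via \eqref{eq: equal channel}, and then specialize Theorems~\ref{thm: lb}, \ref{thm: ub} and \ref{thm: hsnr} to the $n_\textnormal{T}\times n_\textnormal{T}$ invertible matrix $\widetilde{\mathbb{H}}=\mathbb{S}_{1:n_\textnormal{T}}\mathbb{V}^\textsf{T}$. The input constraints \eqref{eq: non-negative} and \eqref{eq: ave cons} act on $\mathbf{X}$ only and are unchanged. The mutual information is preserved, and the noise $\widetilde{\mathbf{Z}}\sim\mathcal{N}(\mathbf{0}_{n_\textnormal{T}},\sigma^2\mathbb{I}_{n_\textnormal{T}})$. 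Hence the capacity of the original channel equals that of $\widetilde{\mathbf{Y}}=\widetilde{\mathbb{H}}\mathbf{X}+\widetilde{\mathbf{Z}}$, which falls into Case I with $n_\textnormal{R}$ replaced by $n_\textnormal{T}$.

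Next we identify the NN-BP structure for a square invertible matrix. Here $\mathscr{R}(\widetilde{\mathbb{H}})=\{\{1,\dots,n_\textnormal{T}\}\}$ and $\mathcal{I}^c=\emptyset$. Condition \eqref{eq: gamma_{I,j} neq 1} is vacuous, so $\mathscr{S}(\widetilde{\mathbb{H}})$ consists of the single full index set. Consequently $\mathbf{p}=\mathbf{q}=1$ and $\textsf{D}(\mathbf{p}\|\mathbf{q})=0$, so the supremum over $\mathbf{p}$ in Theorem~\ref{thm: ub} disappears. The determinant sum reduces to $|\det\widetilde{\mathbb{H}}|$. Since $\widetilde{\mathbb{H}}^\textsf{T}\widetilde{\mathbb{H}}=\mathbb{V}\mathbb{S}_{1:n_\textnormal{T}}^\textsf{T}\mathbb{S}_{1:n_\textnormal{T}}\mathbb{V}^\textsf{T}=\mathbb{V}\mathbb{S}^\textsf{T}\mathbb{S}\mathbb{V}^\textsf{T}=\mathbb{H}^\textsf{T}\mathbb{H}$ (using $\mathbb{U}^\textsf{T}\mathbb{U}=\mathbb{I}$), we get $|\det\widetilde{\mathbb{H}}|=|\det(\mathbb{H}^\textsf{T}\mathbb{H})|^{1/2}$. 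Thus $\log$ of the sum becomes $\frac12\log|\det(\mathbb{H}^\textsf{T}\mathbb{H})|$, and $(\cdot)^{2/n_\textnormal{T}}$ becomes $|\det(\mathbb{H}^\textsf{T}\mathbb{H})|^{1/n_\textnormal{T}}$ in the lower bound.

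The effective noise parameters follow similarly: $\widetilde{\mathbb{H}}^{-1}\widetilde{\mathbb{H}}^{-\textsf{T}}=(\widetilde{\mathbb{H}}^\textsf{T}\widetilde{\mathbb{H}})^{-1}=(\mathbb{H}^\textsf{T}\mathbb{H})^{-1}$, so $\bar{\sigma}_{\mathcal{I},l}=\widetilde{\sigma}_l$. Substituting these identities into \eqref{eq: theorem C_lb}, \eqref{eq: thm ub1} and \eqref{eq: thm ub2} yields the three stated bounds verbatim. One point needs checking: the stated lower bound uses $\widetilde{\sigma}^2$ in the denominator, whereas Theorem~\ref{thm: lb} gives $\sigma^2$. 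I would read this as the equivalent noise variance of $\widetilde{\mathbf{Z}}$, which equals $\sigma^2$, and write the bound accordingly.

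Finally, the high-SNR limit follows directly from Theorem~\ref{thm: hsnr} applied to $\widetilde{\mathbb{H}}$, with $\sum|\det|$ replaced by $|\det(\mathbb{H}^\textsf{T}\mathbb{H})|^{1/2}$ and $n_\textnormal{R}$ replaced by $n_\textnormal{T}$. The only real obstacle is verifying that the SVD reduction is an exact equivalence rather than merely a bound. This holds because $\mathbf{Z}^\prime_{n_\textnormal{T}+1:n_\textnormal{R}}$ is independent of both $\mathbf{X}$ and $\mathbf{Z}^\prime_{1:n_\textnormal{T}}$, by rotational invariance of the white Gaussian noise. Once that is settled, the rest is bookkeeping of the determinant and Gram identities above.
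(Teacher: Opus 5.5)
Your proposal is correct and follows the paper's proof essentially verbatim. The paper makes the same SVD reduction to the square channel $\widetilde{\mathbb{H}}$ and observes that $\mathscr{S}(\widetilde{\mathbb{H}})$ is a singleton, so $\mathbf{p}=\mathbf{q}$. It then substitutes $|\det\widetilde{\mathbb{H}}|=|\det(\mathbb{H}^\textsf{T}\mathbb{H})|^{1/2}$ and $\widetilde{\mathbb{H}}^{-1}\widetilde{\mathbb{H}}^{-\textsf{T}}=(\mathbb{H}^\textsf{T}\mathbb{H})^{-1}$ into Theorems~\ref{thm: lb}--\ref{thm: hsnr}.

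Reading the undefined $\widetilde{\sigma}^2$ in the lower bound as $\sigma^2$ is right: it is what the substitution produces, and it is what makes the bound consistent with the stated high-SNR limit. One point that both you and the paper leave implicit is that $\widetilde{\mathbb{H}}$ generally has negative entries; this is harmless, because the square-case arguments never use nonnegativity of the channel matrix.
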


\section{Numerical Results}\label{sec: numerical results}
\begin{table}[t]
	\centering
	\caption{Indoor VLC system parameters}
	\label{tab:rx_channel}
	\renewcommand{\arraystretch}{1}
	\setlength{\tabcolsep}{5pt}
	\begin{tabular}{c|c}
		\hline
		\rule[-1pt]{0pt}{10pt} Parameter & Value \\
		\hline
		Rx position
		&
		\begin{tabular}{@{}c@{}}
			PD 1: $(0.6776, 0.25, 0.75)\,\mathrm{m}$ \\
			PD 2: $(0.5044, 0.15, 0.75)\,\mathrm{m}$
		\end{tabular}
		\\
		\hline
		Channel matrix
		&
		\begin{tabular}{@{}c@{}}
			$\mathbb{H}_\textnormal{a}={\scriptsize 10^{-6} 
				\begin{pmatrix}
					4.90983 & 3.55895 & 2.26255 & 1.80856 \\
					4.43089 & 3.66528 & 2.48780 & 2.15458
			\end{pmatrix}}$
		\end{tabular}
		\\
		\hline
	\end{tabular}
\end{table}

\begin{figure}[t]
	\centering
	\includegraphics[width=2.5in]{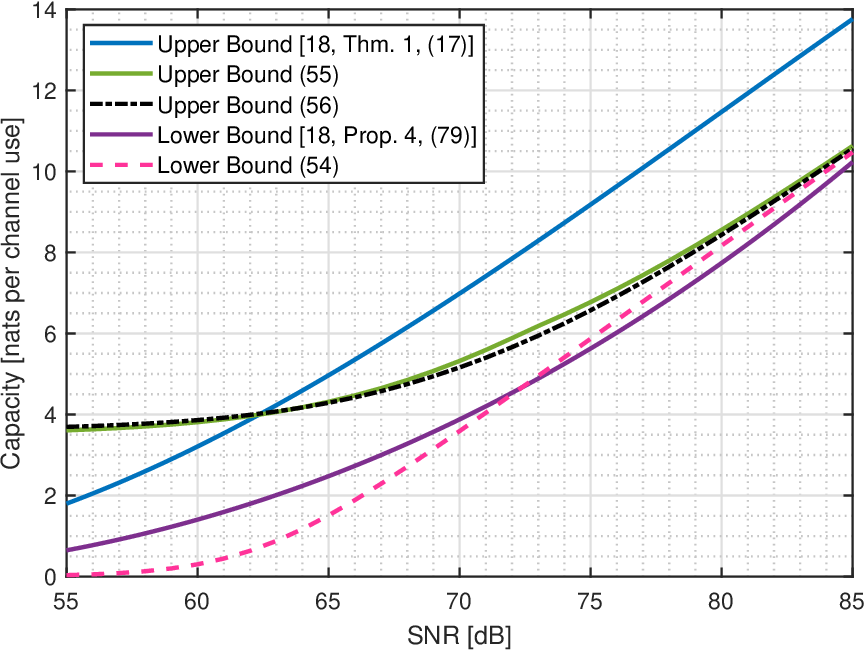}
	\caption{Capacity bounds of a $2\times4$ VLC channel with $\mathbb{H}=\mathbb{H}_\textnormal{a}$.}
	\label{fig 2x4 U1}
\end{figure}
\begin{figure}[t]
	\centering
	\vspace{-2mm}
	\subfloat[{$\mathscr{S}(\mathbb{H}_\textnormal{b})=\{1,3\}$.}]{
		\includegraphics[width=2.5in]{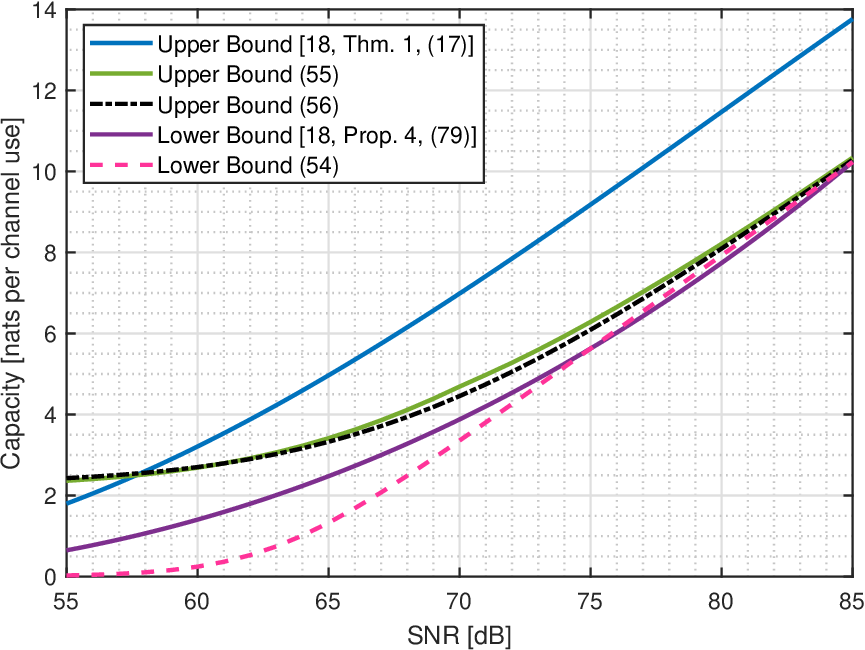}\label{fig:sub1}
	}\hspace{-2mm}	
	\subfloat[{$\mathscr{S}(\mathbb{H}_\textnormal{b})=\{\{1,2\},\{2,3\}\}$.}]{
		\includegraphics[width=2.5in]{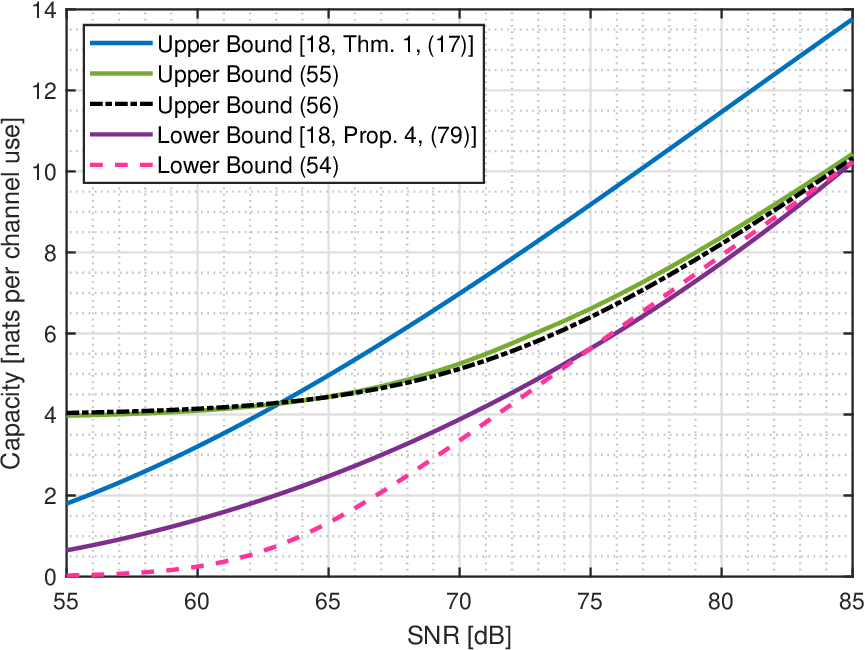}\label{fig:sub2}
	}
	\caption{Capacity bounds of a $2\times3$ VLC channel with $\mathbb{H}=\mathbb{H}_\textnormal{b}$.}
	\label{fig 2x3 U2}
\end{figure}



This section provides numerical examples to illustrate the proposed channel capacity bounds. We first consider an indoor VLC scenario, followed by an outdoor fading FSO scenario.


\subsection{Indoor VLC Scenario}
We start with an indoor VLC system consisting of four LEDs and two PDs deployed in a $4.0\,\mathrm{m} \times 4.0\,\mathrm{m} \times 3.0\,\mathrm{m}$ room. The LEDs are placed on the ceiling and point vertically downward, whereas the PDs are placed at a height of $0.75\,\mathrm{m}$ (a typical table height), and point vertically upward toward the ceiling. The four LEDs are located at $(\pm 1 , \pm 1, 3)\,$m, and the two PDs are separated by a distance of $0.2\,\mathrm{m}$. Since optical wireless links are typically dominated by the line-of-sight (LOS) component, we consider only the LOS channel gain, which is calculated according to the Lambertian model \cite[Eq. (3)]{Fath2013}. The Lambertian order of the LED is set to 1, the field-of-view semi-angle and active area of the PDs are set to $60^\circ$ and 1 $\mathrm{cm}^2$, respectively. The receiver location generated under the above geometric constraints and the resulting channel matrix are summarized in Table~\ref{tab:rx_channel}.

\begin{figure}[t]
	\centering
	\subfloat[$\mathbb{H}=\mathbb{H}_\textnormal{a}(1,:)$.]{
		\includegraphics[width=1.65in]{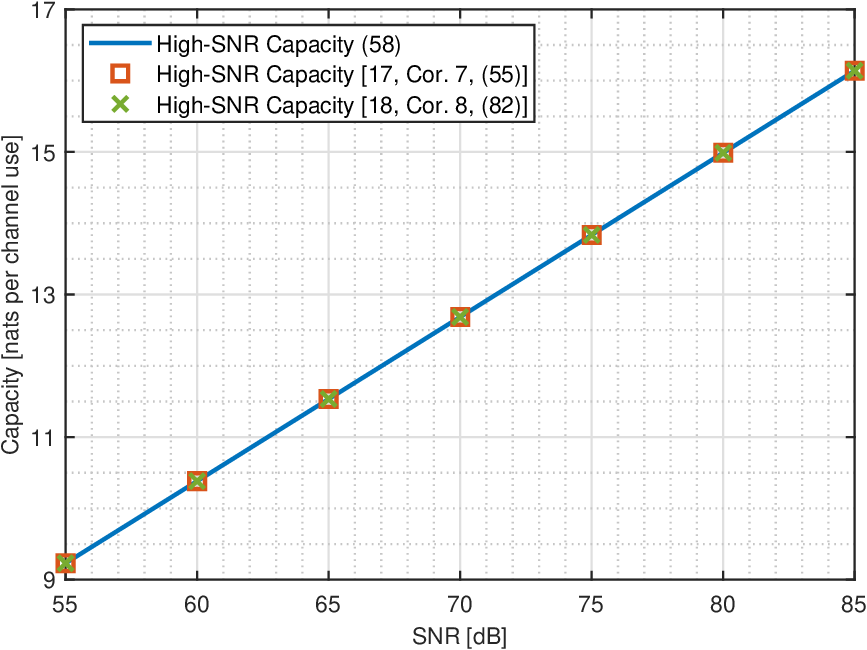}\label{}
	}\hspace{-2mm}	
	\subfloat[$\mathbb{H}=\mathbb{H}_\textnormal{a}(2,:)$.]{
		\includegraphics[width=1.65in]{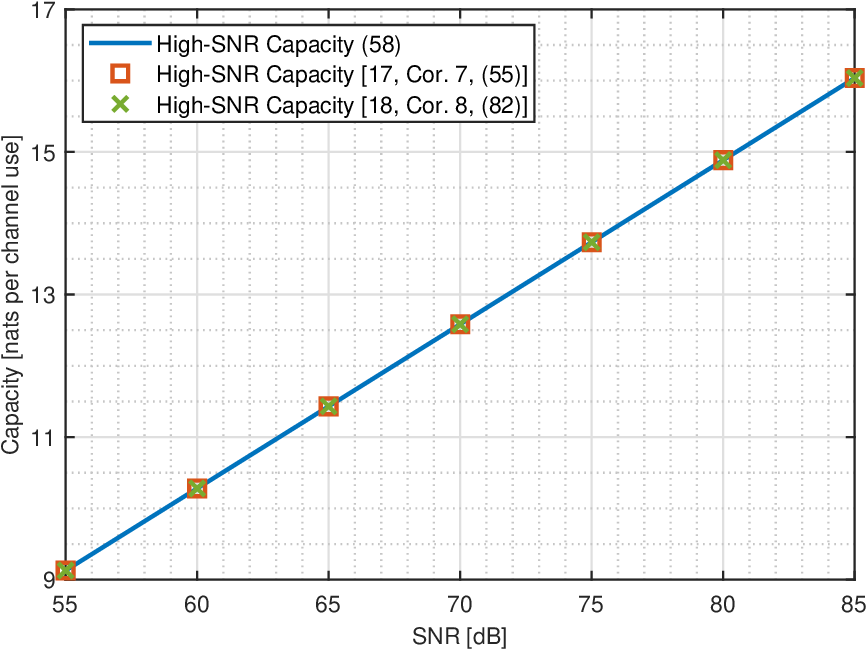}\label{2}
	}
	\caption{High-SNR capacity for $1\times4$ VLC channels.}
	\label{fig: high snr case I}
	\vspace{-3mm}
\end{figure}

Fig.~\ref{fig 2x4 U1} illustrates the proposed lower and upper bounds for the VLC channel. For comparison, the existing bounds in \cite{Chaaban2018-hsnr} are also included. It can be observed that the proposed lower and upper bounds become increasingly tight and eventually coincide as the SNR increases, which confirms their asymptotic optimality in the high-SNR regime. In contrast, the bounds in \cite{Chaaban2018-hsnr} are much looser than the proposed bounds. Moreover, the gap between the upper and lower bounds in \cite{Chaaban2018-hsnr} does not necessarily decrease monotonically with the SNR. 

For Fig.~\ref{fig 2x4 U1}, substituting the corresponding channel matrix into Lemma~\ref{lem 1} shows that the associated set $\mathscr{S}(\mathbb{H})$ is unique. Specifically, $\mathscr{S}(\mathbb{H}_\textnormal{a})=\{\{1,2\},\{2,4\}\}$. To further demonstrate the situation where $\mathscr{S}(\mathbb{H})$ is not unique, we turn off LED 2 and construct a new channel matrix $\mathbb{H}_\textnormal{b}$ by extracting the 1st, 3rd and 4th columns of $\mathbb{H}_\textnormal{a}$. Here, the corresponding $\mathscr{S}(\mathbb{H}_\textnormal{b})$ is not unique. Two possible choices are given by $\{1,3\}$ and $\{\{1,2\},\{2,3\}\}$. We therefore evaluate the proposed bounds under these two choices and present the corresponding results in Figs. \eqref{fig:sub1} and \eqref{fig:sub2}, respectively. It can be observed that, regardless of the choice of $\mathscr{S}(\mathbb{H}_\textnormal{b})$, the proposed bounds remain tight and asymptotically coincide as the SNR increases, thereby validating the correctness of our derivations.




We also plot the high-SNR asymptotic capacity in Fig.~\ref{fig: high snr case I}. It can be observed that all curves exhibit the same growth rate with respect to the SNR, which confirms the proposed high-SNR capacity characterization. It is worth noting that the existing results in \cite{Chaaban2018-hsnr,Moser2017-ISIT} are applicable only to the special case of $n_\textnormal{R} = 1$. In contrast, this paper characterizes the channel capacity for general MIMO configurations.


\begin{figure}[t]
	\centering
	\includegraphics[width=2.5in]{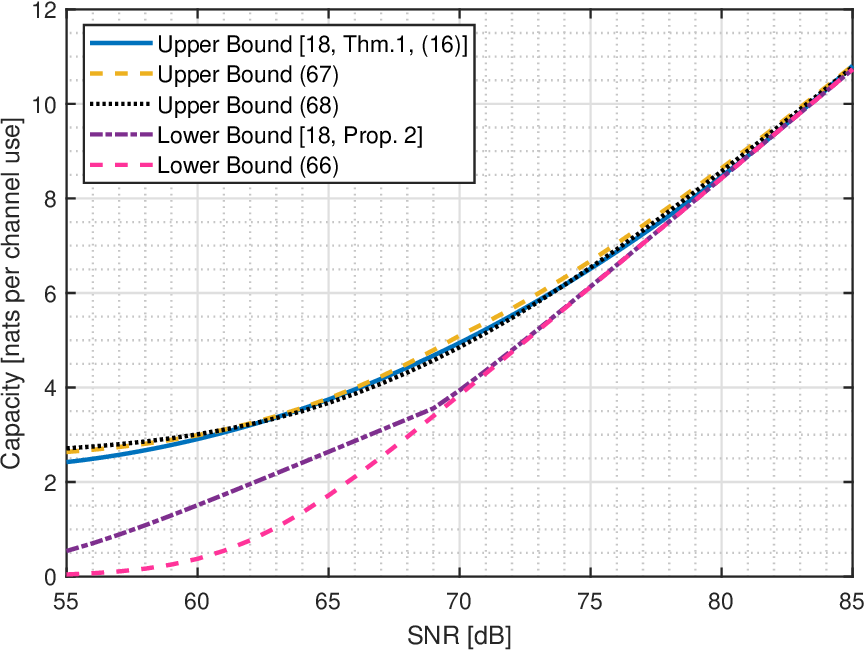}
	\caption{Capacity bounds of a $4\times2$ VLC channel with $\mathbb{H}=\mathbb{H}_\textnormal{a}^\textsf{T}$.}
	\label{fig 2x4 U1 case II}
	\vspace{-3mm}
\end{figure}

\begin{figure}[t]
	\centering
	\includegraphics[width=2.5in]{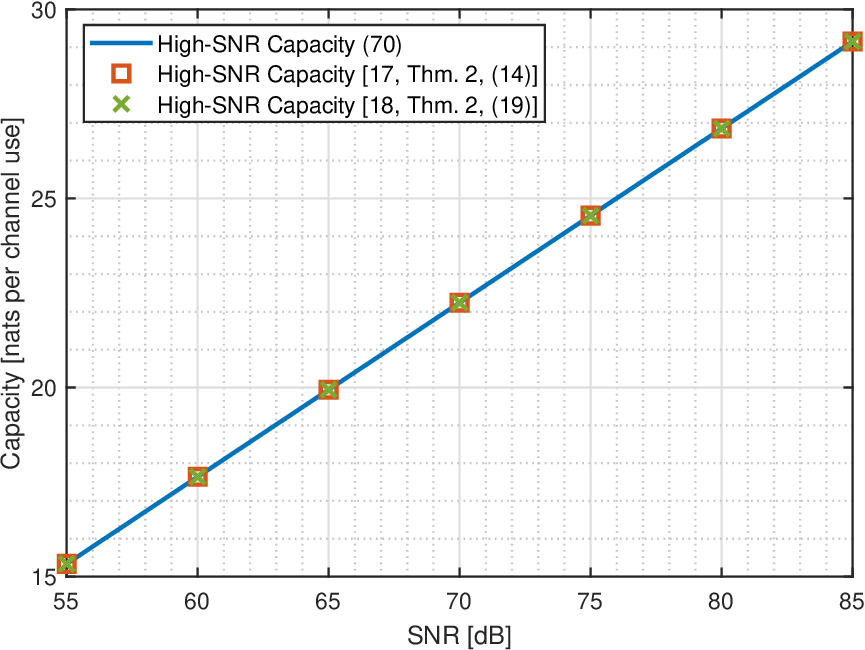}
	\caption{High-SNR capacity for a $4\times2$ VLC channel with $\mathbb{H}=\mathbb{H}_\textnormal{a}^\textsf{T}$.}
	\label{fig high SNR case II}
	\vspace{-3mm}
\end{figure}

For the case $n_\textnormal{T}< n_\textnormal{R}$, we set $\mathbb{H}=\mathbb{H}_\textnormal{a}^\textsf{T}$ for simplicity. Fig.~\ref{fig 2x4 U1 case II} depicts the proposed lower and upper bounds, in comparison with the existing bounds from \cite{Chaaban2018-hsnr}. Meanwhile, Fig.~\ref{fig high SNR case II} plots the high-SNR asymptotic capacity alongside the corresponding results in \cite{Chaaban2018-hsnr,Moser2017-ISIT}. Similar to the case $n_\textnormal{T}\geq n_\textnormal{R}$, the proposed lower and upper bounds become increasingly tight as the SNR increases, and the high-SNR capacity exhibits the same growth rate as the existing results.


\begin{table}[b]
	\centering
	\caption{Outdoor FSO system parameters}
	\label{tab:simulation_parameters}
	\begin{tabular}{lll}
		\hline
		Parameter & Symbol & Value \\
		\hline	
		Deterministic path loss & $h_\ell$ & 0.008\\
		Rytov variance & $\sigma_\textnormal{R}^2$ & 0.1\\
		Receiver diameter & $2a$ & 20 cm \\
		Corresponding Beam radius at 1 km & $w_z$ & $2.5$ m \\
		Jitter standard deviation & $\sigma_s$ & $30$ cm \\
		\hline
	\end{tabular}
\end{table}

\subsection{Outdoor Fading FSO Channel}
We next evaluate the proposed bounds in an outdoor FSO system, where the channel undergoes fading due to atmospheric turbulence and pointing error. To focus on the impact of random channel fading on the capacity bounds, the entries $h_{ij}$ of the channel matrix are assumed to be independent and identically distributed. Following the FSO channel model in \cite{Farid2007}, each $h_{ij}$ is determined by the product of three components: the deterministic path-loss term $h_\ell$, the random atmospheric-turbulence fading coefficient $h_\textnormal{a}$, and the random pointing-error fading coefficient $h_\textnormal{p}$. Specifically, $h_\textnormal{a}$ follows a log-normal distribution, whereas $h_\textnormal{p}$ follows a bounded power-law distribution characterized by the Gaussian beam profile at the receiver plane and the Rayleigh-distributed radial displacement.

We consider a light-fog scenario with a propagation distance of 1 km. The system parameters are summarized in Table~\ref{tab:simulation_parameters} and substituted into \cite[Eq. (14)]{Farid2007} to obtain the distribution of $h_{ij}$. Using this distribution, we generate independent FSO channel realizations and evaluate the proposed lower and upper bounds for each realization. The ergodic capacity bounds are then obtained by averaging over all channel realizations. Figs.~\ref{fig 2x3 ergodic capacity} and \ref{fig 3x2 ergodic capacity} present the ergodic capacity results for the FSO channels. Similar to the indoor VLC results, the proposed bounds remain tighter than the existing bounds in \cite{Chaaban2018-hsnr}. Moreover, the gap between the proposed lower and upper bounds decreases with the SNR, and the two bounds become asymptotically tight in the high-SNR regime.


\begin{figure}[t]
	\centering
	\includegraphics[width=2.5in]{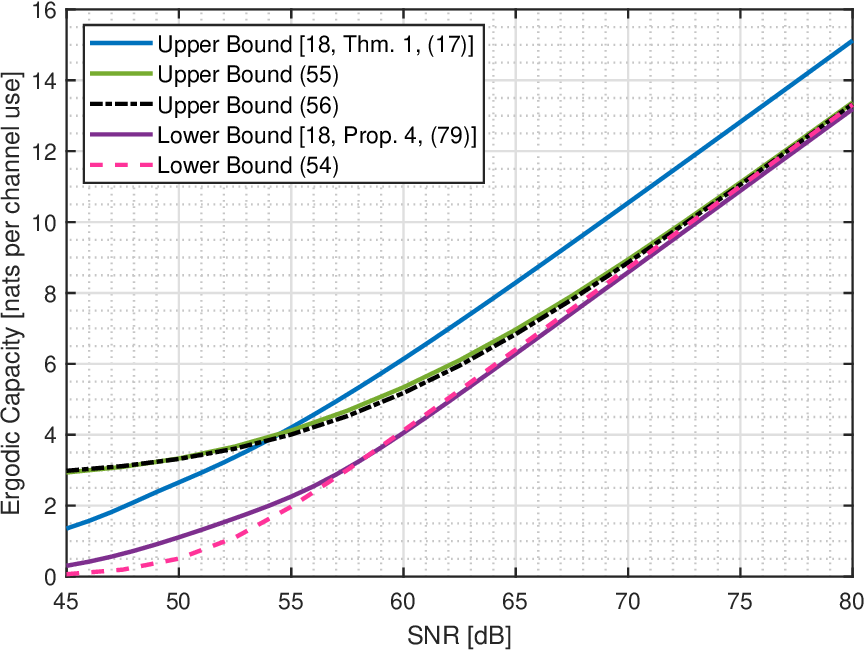}
	\caption{Ergodic capacity bounds of a $2\times4$ FSO channel.}
	\label{fig 2x3 ergodic capacity}
	\vspace{-3mm}
\end{figure}

\begin{figure}[t]
	\centering
	\includegraphics[width=2.5in]{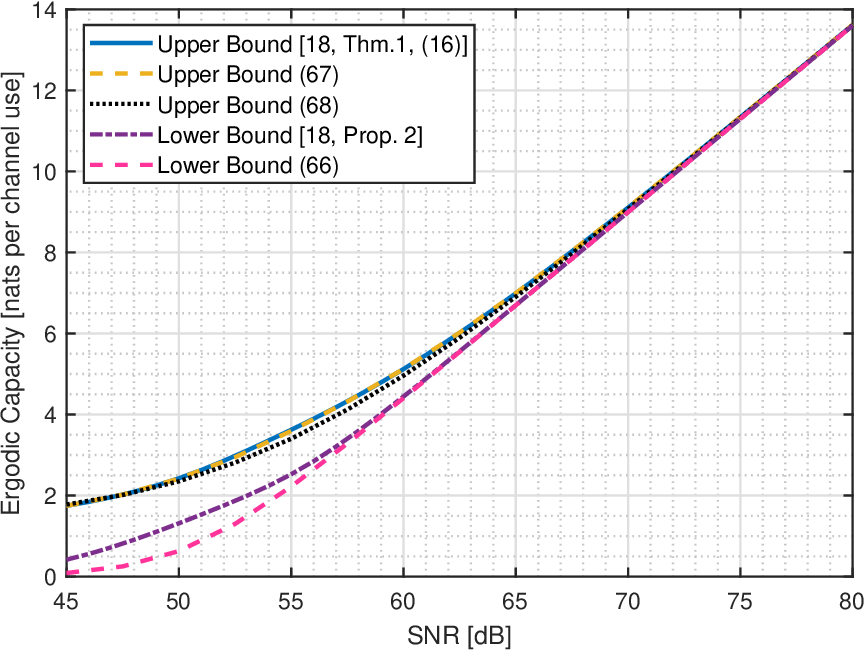}
	\caption{Ergodic capacity bounds of a $4\times2$ FSO channel.}
	\label{fig 3x2 ergodic capacity}
	\vspace{-3mm}
\end{figure}

\section{Conclusions} \label{sec:conclusion}
In this paper, we characterize the capacity of the MIMO-OWC channels under a total average-power constraint. Two typical aperture configurations are investigated: one in which the number of transmit apertures is greater than or equal to that of receive apertures, and the other in which the number of transmit apertures is smaller than that of receive apertures. By exploiting the Markov chain among the input vector $\mathbf{X}$, the image vector $\bar{\mathbf{X}}$, and the output vector $\mathbf{Y}$, the channel capacity can be reformulated as a maximization problem over the distribution of $\bar{\mathbf{X}}$. The key difficulty in this reformulation lies in translating the total average-power constraint imposed on $\mathbf{X}$ into an equivalent constraint on $\bar{\mathbf{X}}$. To address this issue, we formulate an NN-BP problem, which identifies the optimal input vector with minimum $\ell_1$-norm for any image vector. We show that the cone $\mathcal{S}(\mathbb{H})$, which contains all feasible image vectors, can be partitioned into a collection of subcones $\mathcal{S}(\mathbb{H}_{\mathcal{I}})$. Within each subcone, the NN-BP solution admits a unified form. Based on the NN-BP result, an equivalent capacity expression in terms of the image vector $\bar{\mathbf{X}}$ is derived. We then establish a lower bound on the channel capacity by constructing a compound exponential distribution and upper bounds by carefully designing auxiliary output distributions. The proposed capacity bounds tighten the existing bounds in the literature and are asymptotically optimal in the high-SNR regime, thereby closing the gap left by prior studies. We further extend the results to the case where the number of transmit apertures is smaller than that of receive apertures, for which capacity bounds and the high-SNR asymptotic capacity are also derived. For future research, it would be interesting to extend the analysis to the low-SNR regime and characterize the corresponding capacity slope.

\appendices
\section{Derivation of the NN-BP Solution}\label{app: proof of lem 1}
\subsection{Proof of Lemma~\ref{lem 1}}
To facilitate the proof of Lemma~\ref{lem 1}, we first present the following auxiliary lemma.
\begin{lem}\label{lem6}
If \eqref{eq: gamma_{I,j} neq 1} is satisfied, i.e., $\forall \mathcal{I} \in \mathscr{R}(\mathbb{H})$ and $\forall j \in \mathcal{I}^c$, $\gamma_{\mathcal{I},j} \neq 1$, then $\forall\bar{\mathbf{x}}\in\mathcal{S}(\mathbb{H})$, except a subset of zero $n_\textnormal{R}$-dimensional volume, the optimal solution $\mathbf{x}^\star$ to the NN-BP problem belongs to the set $\cup_{\mathcal{I}\in\mathscr{S}(\mathbb{H})} \mathcal{X}_{\mathcal{I}}$, where 
\begin{IEEEeqnarray}{rCl}
	\mathcal{X}_{\mathcal{I}} =\{&&\mathbf{x}=(x_1,\cdots,x_{n_\textnormal{T}}):\nonumber\\
	&&\quad x_i\in(0,+\infty),i\in\mathcal{I};x_j=0,j\in\mathcal{I}^c\}.
\end{IEEEeqnarray}
\end{lem}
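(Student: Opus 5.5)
The NN-BP problem $\bm{P}$ is a linear program. I would therefore use LP theory, namely basic feasible solutions and reduced costs, and reinterpret the quantities $\gamma_{\mathcal{I},j}$ as dual reduced costs.

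\emph{Step 1 (basic optimal solution).} Fix $\bar{\mathbf{x}}\in\mathcal{S}(\mathbb{H})$. The feasible set $\{\mathbf{x}\in\mathfrak{R}^{n_\textnormal{T}}_+:\mathbb{H}\mathbf{x}=\bar{\mathbf{x}}\}$ is nonempty and contains no line. The objective $\mathbf{1}^\textsf{T}\mathbf{x}$ is bounded below by $0$. Hence an optimum is attained at a vertex, i.e., at a basic feasible solution. Such a solution has the form $\mathbf{x}_\mathcal{I}=\mathbb{H}_\mathcal{I}^{-1}\bar{\mathbf{x}}\geq 0$ and $x_j=0$ for $j\in\mathcal{I}^c$, for some basis $\mathcal{I}\in\mathscr{R}(\mathbb{H})$; this uses that $\mathbb{H}$ has full row rank.

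\emph{Step 2 (optimality via reduced costs).} For a basis $\mathcal{I}$, the dual candidate is $\mathbf{y}^\textsf{T}=\mathbf{1}^\textsf{T}\mathbb{H}_\mathcal{I}^{-1}$. The reduced cost of a nonbasic column $j$ is $1-\mathbf{y}^\textsf{T}\mathbf{h}_j=1-\gamma_{\mathcal{I},j}$. By standard LP optimality, a basic feasible solution is optimal whenever $\gamma_{\mathcal{I},j}\le 1$ for all $j\in\mathcal{I}^c$. Conversely, suppose the basic solution is nondegenerate, i.e., all components of $\mathbb{H}_\mathcal{I}^{-1}\bar{\mathbf{x}}$ are strictly positive, and some $\gamma_{\mathcal{I},j}>1$. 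Then moving along the direction $\mathbf{e}_j-(\mathbf{a}_{\mathcal{I},j}$ placed on $\mathcal{I})$ with a small step keeps feasibility and strictly lowers the $\ell_1$-norm by $(\gamma_{\mathcal{I},j}-1)$ per unit step. So the vertex is not optimal.

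\emph{Step 3 (genericity and conclusion).} Let $N$ be the set of $\bar{\mathbf{x}}$ lying on some hyperplane spanned by $n_\textnormal{R}-1$ columns of $\mathbb{H}$. This is a finite union of hyperplanes and therefore has zero $n_\textnormal{R}$-volume. For $\bar{\mathbf{x}}\notin N$, every basic feasible solution is nondegenerate, with all $n_\textnormal{R}$ basic components in $(0,+\infty)$. Since every $\gamma_{\mathcal{I},j}\ne1$ by \eqref{eq: gamma_{I,j} neq 1}, Step 2 gives the following. A basic solution with basis $\mathcal{I}$ is optimal iff $\gamma_{\mathcal{I},j}<1$ for all $j\in\mathcal{I}^c$, i.e., iff $\mathcal{I}\in\mathscr{S}(\mathbb{H})$ as defined in \eqref{eq: mathscr{S}(H)}. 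An optimal vertex exists by Step 1, so it lies in $\mathcal{X}_\mathcal{I}$ for some $\mathcal{I}\in\mathscr{S}(\mathbb{H})$.

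It remains to rule out optimal solutions that are not vertices; this is the main obstacle, if the claim is read as saying that \emph{the} optimizer lies in that union. If there were two optimal solutions, the whole segment between them would be optimal. Then an optimal face of dimension at least one would contain two adjacent optimal vertices. Along the edge joining them, a nonbasic column $j$ enters with zero cost change, which forces $1-\gamma_{\mathcal{I},j}=0$ (the step length is positive by nondegeneracy). This contradicts \eqref{eq: gamma_{I,j} neq 1}. Hence, off $N$, the optimizer is unique and equals the vertex found above, which lies in $\cup_{\mathcal{I}\in\mathscr{S}(\mathbb{H})}\mathcal{X}_\mathcal{I}$.
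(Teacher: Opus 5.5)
Your proof is correct, and it reaches the lemma by a different route from the paper's.

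\textbf{The paper's route.} It applies the KKT conditions to an \emph{arbitrary} optimizer $\mathbf{x}^\star$. It discards only those $\bar{\mathbf{x}}$ lying in $\mathcal{S}(\mathbb{H}_\mathcal{J})$ for some rank-deficient $\mathbb{H}_\mathcal{J}$, so the support $\mathcal{J}^\star$ of $\mathbf{x}^\star$ contains a basis $\mathcal{I}^\star\in\mathscr{R}(\mathbb{H})$. Complementary slackness ($\mu_i^\star=0$ on $\mathcal{J}^\star$) then forces $\bm{\lambda}^\star=-\mathbb{H}_{\mathcal{I}^\star}^{-\textsf{T}}\mathbf{1}_{n_\textnormal{R}}$, and hence $\mu_j^\star=1-\gamma_{\mathcal{I}^\star,j}$ for $j\notin\mathcal{I}^\star$. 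Dual feasibility gives $\gamma_{\mathcal{I}^\star,j}\le 1$. Any $k\in\mathcal{J}^\star\setminus\mathcal{I}^\star$ would have $\gamma_{\mathcal{I}^\star,k}=1$, which the hypothesis rules out. Therefore $\mathcal{J}^\star=\mathcal{I}^\star\in\mathscr{S}(\mathbb{H})$.

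\textbf{How your route differs.} The multiplier $\mu_j^\star$ is exactly your reduced cost, so the underlying duality is the same. Because the paper reasons about the support of any optimizer, the issue you call the main obstacle (non-vertex optimizers) never arises there: an extra support index is excluded by the same equation that certifies $\mathscr{S}(\mathbb{H})$-membership. You instead pass through an optimal basic feasible solution, a nondegeneracy null set (spans of $n_\textnormal{R}-1$ columns, larger than the paper's but still of zero volume), and a separate uniqueness step.

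\textbf{What your route buys.} It proves more than the lemma asks. You show that, off the null set, a basis is optimal if and only if it lies in $\mathscr{S}(\mathbb{H})$, and that the optimizer is unique. These are the facts Part~II of Lemma~\ref{lem 1} relies on, which the paper obtains there by a separate argument.

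\textbf{A shorter uniqueness step.} Let $\mathbf{v}$ be the optimal vertex with basis $\mathcal{I}\in\mathscr{S}(\mathbb{H})$. Every feasible $\mathbf{x}$ satisfies $\|\mathbf{x}\|_1=\|\mathbf{v}\|_1+\sum_{j\in\mathcal{I}^c}(1-\gamma_{\mathcal{I},j})x_j$. Since all these coefficients are strictly positive, $\|\mathbf{x}\|_1>\|\mathbf{v}\|_1$ unless $\mathbf{x}=\mathbf{v}$. This avoids adjacent vertices altogether. As you wrote it, the adjacency argument tacitly assumes the optimal face is bounded. That holds here, because any recession direction $\mathbf{d}$ of that face satisfies $\mathbf{d}\ge\mathbf{0}$ and $\mathbf{1}^\textsf{T}\mathbf{d}=0$, forcing $\mathbf{d}=\mathbf{0}$.
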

\begin{IEEEproof}
Denote the Lagrangian function for the NN-BP problem by 
\begin{align}
	L(\mathbf{x},\bm{\mu},\bm{\lambda}) = \mathbf{1}_{n_\textnormal{T}}^\textsf{T} \mathbf{x} - \bm{\mu}^\textsf{T} \mathbf{x} + \bm{\lambda}^\textsf{T} (\mathbb{H}\mathbf{x}-\bar{\mathbf{x}}),
\end{align}
where $\bm{\mu}$ and $\bm{\lambda}$ are dual variables in $\mathfrak{R}^{n_\textnormal{T}}$.
By the Karush-Kuhn-Tucker conditions, the NN-BP solution $\mathbf{x}^\star$ satisfies
\begin{IEEEeqnarray}{rCl}
	\subnumberinglabel{}
	\mathbf{1}_{n_\textnormal{T}}  - \bm{\mu}^\star + \mathbb{H}^\textsf{T}\bm{\lambda}^\star &=& \mathbf{0}_{n_\textnormal{T}}, \label{eq: stability}\\	
	\mathbb{H}\mathbf{x}-\bar{\mathbf{x}}&=&\mathbf{0}_{n_\textnormal{R}},\\
	\mathbf{x}^\star&\geq& \mathbf{0}_{n_\textnormal{T}},\\
	\bm{\mu}^\star &\geq&  \mathbf{0}_{n_\textnormal{T}}, \label{eq: dual feasible}\\
	\mu_i^\star x_i^\star &=& 0, \quad\forall i\in\{1,\cdots,n_\textnormal{T}\}, \label{eq: hubu}
\end{IEEEeqnarray}	
where $\bm{\mu}^\star$ and $\bm{\lambda}^\star$ are the optimal dual variables. Define the index set $\mathcal{J}^\star$ as
\begin{IEEEeqnarray}{rCl}
	\mathcal{J}^\star = \{i\in\{1,\cdots,n_\textnormal{T}\}:x_i\in(0,+\infty)\}. \label{eq: H_J def}
\end{IEEEeqnarray}
Note that $\textnormal{Vol}_{n_\textnormal{R}} (\mathcal{S}(\mathbb{H}_{\mathcal{J}^\star}))=0$ if $\textnormal{rank} (\mathbb{H}_{\mathcal{J}^\star}) <n_\textnormal{R}$. Thus, for the purpose of this proof, we assume 
\begin{align}
	\textnormal{rank} (\mathbb{H}_{\mathcal{J}^\star}) =n_\textnormal{R}.
\end{align}

We first prove that there exists an index set $\mathcal{I}^\star\subseteq\mathcal{J}^\star$ that belongs to $\mathscr{S}(\mathbb{H})$. Recall that $\mathscr{R}(\mathbb{H})$ contains all $\mathcal{K}$s such that $\textnormal{rank} (\mathbb{H}_\mathcal{K}) =n_\textnormal{R}$. Hence, there exists $\mathcal{I}^\star\in\mathscr{R}(\mathbb{H})$ such that $\mathcal{I}^\star\subseteq\mathcal{J}^\star$. It remains to show that $\gamma_{\mathcal{I}^\star,j} <1$ for all $j\in(\mathcal{I}^\star)^c$. From \eqref{eq: hubu} and \eqref{eq: H_J def}, we have
\begin{align}
	\mu_i^\star =0,\quad \forall i\in\mathcal{J}^\star. \label{eq: m_istar=0}
\end{align}
By \eqref{eq: stability}, we extract the rows related to $\mathcal{I}^\star$ and $(\mathcal{I}^\star)^c$, and obtain
\begin{align}
	\mathbf{1}_{n_\textnormal{R}} + \bm{\mu}_{\mathcal{I}^\star}^\star + \mathbb{H}_{\mathcal{I}^\star}^\textsf{T}\bm{\lambda}^\star &= \mathbf{0}_{n_\textnormal{R}}, \label{eq: stability I}\\
	1-\mu_j^\star + \mathbf{h}_j^\textsf{T} \bm{\lambda}^\star &=0, \qquad \forall j\in(\mathcal{I}^\star)^c, \label{eq: stability I_c}
\end{align}
where $\bm{\mu}_{\mathcal{I}^\star}^\star=(\mu_{i_1},\cdots,\mu_{i_{n_\textnormal{R}}})^\textsf{T}$ with $\mathcal{I}^\star=\{i_1,\cdots,i_{n_\textnormal{R}}\}$. Substituting \eqref{eq: m_istar=0} into \eqref{eq: stability I}, we have
\begin{align}
	\bm{\lambda}^\star=-\mathbb{H}_{\mathcal{I}^\star}^{-\textsf{T}}\mathbf{1}_{n_\textnormal{R}}. \label{eq: lambda_star}
\end{align}
Combining \eqref{eq: stability I_c} with \eqref{eq: lambda_star} yields
\begin{align}
	1-\mu_j^\star 
	&= \mathbf{h}_j^\textsf{T} \mathbb{H}_{\mathcal{I}^\star}^{-\textsf{T}}\mathbf{1}_{n_\textnormal{R}}\\
	&= \gamma_{\mathcal{I}^\star,j}, \qquad\  \forall j\in(\mathcal{I}^\star)^c, \label{eq: 59}
\end{align}
where \eqref{eq: 59} is derived from the definition in \eqref{eq: gamma def}. By inserting \eqref{eq: dual feasible} into \eqref{eq: 59}, we obtain 
\begin{align}
	\gamma_{\mathcal{I}^\star,j} \leq 1,\quad \forall j\in(\mathcal{I}^\star)^c.
\end{align}
Under condition \eqref{eq: gamma_{I,j} neq 1}, it follows that $\gamma_{\mathcal{I}^\star,j} < 1$ for all $j\in(\mathcal{I}^\star)^c$, which indicates that $\mathcal{I}^\star\in\mathscr{S}(\mathbb{H})$.

We further prove the reverse inclusion, i.e., $\mathcal{I}^\star\supseteq\mathcal{J}^\star$.
To this end, suppose that there exists an element $k\in\mathcal{J}^\star \setminus \mathcal{I}^\star$. From \eqref{eq: m_istar=0}, we have 
\begin{align}
	\mu_k^\star =0.
\end{align}
Note that $k\in(\mathcal{I}^\star)^c$. Then, with \eqref{eq: stability I_c}, we have 
\begin{align}
	1 + \mathbf{h}_k^\textsf{T} \bm{\lambda}^\star &=0.
\end{align}
Substituting \eqref{eq: lambda_star} into it, we have
\begin{align}
	1 - \mathbf{h}_k^\textsf{T} \mathbb{H}_{\mathcal{I}^\star}^{-\textsf{T}}\mathbf{1}_{n_\textnormal{R}}=0,
\end{align}
which implies that $\gamma_{\mathcal{I}^\star,k} = 1$, contradicting condition \eqref{eq: gamma_{I,j} neq 1}. This contradiction shows that no such $k$ exists, and hence $\mathcal{I}^\star\supseteq\mathcal{J}^\star$.

Combining $\mathcal{I}^\star \subseteq \mathcal{J}^\star$ and $\mathcal{I}^\star \supseteq \mathcal{J}^\star$, we obtain $\mathcal{I}^\star=\mathcal{J}^\star$. Therefore, under condition \eqref{eq: gamma_{I,j} neq 1}, for any $ \bar{\mathbf{x}}\in\mathcal{S}(\mathbb{H})$, there exists an index set $\mathcal{I}^\star\in\mathscr{S}(\mathbb{H})$ such that the NN-BP solution $\mathbf{x}^\star$ lies in $\mathcal{X}_{\mathcal{I}^\star}$, which completes the proof.
\end{IEEEproof}

We now proceed to prove Lemma~\ref{lem 1}. For the first part of Lemma~\ref{lem 1}, we first consider the case where condition \eqref{eq: gamma_{I,j} neq 1} is satisfied. By Lemma~\ref{lem6} and Berge's maximum theorem \cite{Berge1997}, the NN-BP solution $\mathbf{x}^\star$ to \eqref{eq: minimum-energy problem} is contained in the closure $\operatorname{cl}\left(\cup_{\mathcal{I}\in\mathscr{S}(\mathbb{H})} \mathcal{X}_{\mathcal{I}}\right)$. Then, we can obtain  
\begin{align}
	\mathcal{S}(\mathbb{H})  \subseteq \Biggl\{ \mathbb{H} \mathbf{x}: \, \mathbf{x} \in \operatorname{cl}\Biggl(\bigcup\nolimits_{\mathcal{I}\in\mathscr{S}(\mathbb{H})} \mathcal{X}_{\mathcal{I}}\Biggr) \Biggr\}. \label{eq: subset}
\end{align}
Besides, note that $\operatorname{cl}\left(\cup_{\mathcal{I}\in\mathscr{S}(\mathbb{H})} \mathcal{X}_{\mathcal{I}}\right)\subseteq\mathfrak{R}_+^{n_\textnormal{T}}$, we have
\begin{align}
	\mathcal{S}(\mathbb{H})  \supseteq \Biggl\{ \mathbb{H} \mathbf{x}: \, \mathbf{x} \in \operatorname{cl}\Biggl(\bigcup\nolimits_{\mathcal{I}\in\mathscr{S}(\mathbb{H})} \mathcal{X}_{\mathcal{I}}\Biggr) \Biggr\}. \label{eq: supset}
\end{align}
Combining \eqref{eq: subset} and \eqref{eq: supset} gives
\begin{align}
	\mathcal{S}(\mathbb{H})  = \Biggl\{ \mathbb{H} \mathbf{x}: \, \mathbf{x} \in \operatorname{cl}\Biggl(\bigcup\nolimits_{\mathcal{I}\in\mathscr{S}(\mathbb{H})} \mathcal{X}_{\mathcal{I}}\Biggr) \Biggr\}.
\end{align}
Since the set $\mathscr{S}(\mathbb{H})$ is finite, we further obtain that $\operatorname{cl}( \cup_{\mathcal{I}\in\mathscr{S}(\mathbb{H})} \mathcal{X}_{\mathcal{I}}) =\cup_{\mathcal{I}\in\mathscr{S}(\mathbb{H})} \operatorname{cl}(\mathcal{X}_{\mathcal{I}})$, then
\begin{align}
	\mathcal{S}(\mathbb{H})  
	&= \Bigl\{ \mathbb{H} \mathbf{x}: \, \mathbf{x} \in \bigcup\nolimits_{\mathcal{I}\in\mathscr{S}(\mathbb{H})} \operatorname{cl}\bigl( \mathcal{X}_{\mathcal{I}}\bigr) \Bigr\}\\
	&= \bigcup\nolimits_{\mathcal{I}\in\mathscr{S}(\mathbb{H})} \mathcal{S}(\mathbb{H}_{\mathcal{I}}), \label{eq: 68}
\end{align}
where \eqref{eq: 68} holds since
\begin{align}
	\mathcal{S}(\mathbb{H}_{\mathcal{I}})=\{\mathbb{H}\mathbf{x}:\mathbf{x}\in\operatorname{cl}(\mathcal{X}_{\mathcal{I}}) \}. \label{eq: 67}
\end{align}
By \eqref{eq: 68}, it concludes \eqref{eq: lem5-1}. 

Furthermore, we show that \eqref{eq: lem5-2} holds true under condition~\eqref{eq: gamma_{I,j} neq 1}. For notational convenience, we denote
\begin{align}
	\mathcal{S}_\textnormal{unit} (\mathbb{M}) = \Bigl\{ \sum\nolimits_{i=1}^{s} a_i \mathbf{m}_i:\,\,	&a_1,\cdots,a_s\in[0,1],\nonumber\\
	&\sum\nolimits_{i=1}^{s} a_i \leq 1\Bigr\},
\end{align}
for any $r\times s$ matrix $\mathbb{M}=\left(\mathbf{m}_1,\cdots,\mathbf{m}_s\right)$ with full row rank. We then invoke the following lemma, whose proof is provided in Appendix~\ref{app: proof of lem7}.
\begin{lem}\label{lem7}
If condition \eqref{eq: gamma_{I,j} neq 1} is satisfied, then 
\begin{align}
	&\textnormal{Vol}_{n_\textnormal{R}} \left\{ \mathcal{S}_\textnormal{unit}(\mathbb{H}_\mathcal{I}) \cap \mathcal{S}_\textnormal{unit}(\mathbb{H}_\mathcal{J})\right\} =0, \nonumber\\
	&\quad\qquad\qquad\qquad\qquad\, \forall\mathcal{I},\mathcal{J}\in\mathscr{S}(\mathbb{H}),\, \mathcal{I}\neq\mathcal{J}. \label{eq: 71}
\end{align}
\end{lem}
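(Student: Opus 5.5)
I will argue by contradiction. Suppose $\mathcal{I}\neq\mathcal{J}$ are both in $\mathscr{S}(\mathbb{H})$ and the intersection $\mathcal{S}_\textnormal{unit}(\mathbb{H}_\mathcal{I})\cap\mathcal{S}_\textnormal{unit}(\mathbb{H}_\mathcal{J})$ has positive $n_\textnormal{R}$-dimensional volume. Both sets are convex polytopes, so the intersection is a convex set with positive volume and therefore has nonempty interior. The boundaries of the two simplices, and the images of all lower-rank column subsets, form finitely many hyperplane pieces of zero volume. Hence I can pick a point $\bar{\mathbf{x}}$ in the interior of both simplices that avoids this null set. For such a point,
\[
\bar{\mathbf{x}}=\mathbb{H}_\mathcal{I}\mathbf{u}=\mathbb{H}_\mathcal{J}\mathbf{v},\qquad \mathbf{u},\mathbf{v}>\mathbf{0}_{n_\textnormal{R}}.
\]
Lifting these by zero padding gives two feasible nonnegative inputs $\mathbf{x}^{\mathcal{I}}\in\mathcal{X}_\mathcal{I}$ and $\mathbf{x}^{\mathcal{J}}\in\mathcal{X}_\mathcal{J}$, both mapping to $\bar{\mathbf{x}}$. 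They are distinct, because their supports $\mathcal{I}$ and $\mathcal{J}$ differ.

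The key step is to show that membership of $\mathcal{I}$ in $\mathscr{S}(\mathbb{H})$ makes $\mathbf{x}^{\mathcal{I}}$ the \emph{unique} minimizer of NN-BP at $\bar{\mathbf{x}}$. I will use the dual certificate $\bm{\lambda}=-\mathbb{H}_\mathcal{I}^{-\textsf{T}}\mathbf{1}_{n_\textnormal{R}}$ from the proof of Lemma~\ref{lem6}. For any feasible $\mathbf{x}\geq\mathbf{0}$ with $\mathbb{H}\mathbf{x}=\bar{\mathbf{x}}$,
\[
\|\mathbf{x}\|_1=\mathbf{1}^\textsf{T}\mathbf{x}=-\bm{\lambda}^\textsf{T}\mathbb{H}\mathbf{x}+\sum_{j}(1+\mathbf{h}_j^\textsf{T}\bm{\lambda})x_j=\mathbf{1}^\textsf{T}\mathbb{H}_\mathcal{I}^{-1}\bar{\mathbf{x}}+\sum_{j\in\mathcal{I}^c}(1-\gamma_{\mathcal{I},j})x_j .
\]
In this identity the coefficients vanish for $j\in\mathcal{I}$, and they are strictly positive for $j\in\mathcal{I}^c$ because $\gamma_{\mathcal{I},j}<1$ by \eqref{eq: mathscr{S}(H)}. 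Therefore $\|\mathbf{x}\|_1\geq\|\mathbf{x}^{\mathcal{I}}\|_1$, with equality iff $x_j=0$ for all $j\in\mathcal{I}^c$. In the equality case $\mathbb{H}_\mathcal{I}\mathbf{x}_\mathcal{I}=\bar{\mathbf{x}}$, and invertibility of $\mathbb{H}_\mathcal{I}$ forces $\mathbf{x}=\mathbf{x}^{\mathcal{I}}$. So $\mathbf{x}^{\mathcal{I}}$ is the unique minimizer.

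The same argument applied to $\mathcal{J}$ shows that $\mathbf{x}^{\mathcal{J}}$ is also the unique minimizer. Since $\mathbf{x}^{\mathcal{I}}\neq\mathbf{x}^{\mathcal{J}}$, this is a contradiction, and hence the intersection has zero volume, which is \eqref{eq: 71}.

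The step I expect to be delicate is justifying that $\mathbf{x}^{\mathcal{I}}\neq\mathbf{x}^{\mathcal{J}}$. The supports of the lifts are exactly $\mathcal{I}$ and $\mathcal{J}$ only because $\mathbf{u},\mathbf{v}$ are strictly positive, which is why I choose $\bar{\mathbf{x}}$ in the interior of both simplices. I will verify this carefully; the zero-volume exclusion makes this choice available. Finally, the unit normalization $\sum a_i\leq 1$ plays no role beyond boundedness. The same argument yields \eqref{eq: lem5-2} for the full cones, either directly or by scaling, since the cones are unions of dilates of the unit simplices.
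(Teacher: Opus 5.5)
Your proof is correct, and it takes a different route from the paper.

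\textbf{The paper's route.} It argues geometrically. It first splits $\mathcal{I}$ and $\mathcal{J}$ into parts ``related to the intersection.'' It then finds $j\in\mathcal{J}\cap\mathcal{I}^c$ and $i\in\mathcal{I}\cap\mathcal{J}^c$. Next it reads $\gamma_{\mathcal{I},j}<1$ and $\gamma_{\mathcal{J},i}<1$ as saying that each of $\mathbf{h}_j,\mathbf{h}_i$ lies strictly inside the other simplex, and asserts that this contradicts the overlap. Neither that reading nor the final contradiction is actually derived; both are taken from the picture. On its own, $\gamma_{\mathcal{I},j}<1$ only places $\mathbf{h}_j$ on the origin side of the hyperplane through the columns of $\mathbb{H}_\mathcal{I}$, since $\mathbb{H}_\mathcal{I}^{-1}\mathbf{h}_j$ may have negative entries.

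\textbf{Your route.} You turn the KKT multiplier $\bm{\lambda}=-\mathbb{H}_\mathcal{I}^{-\textsf{T}}\mathbf{1}_{n_\textnormal{R}}$ from Lemma~\ref{lem6} into an exact weak-duality identity. That identity certifies the zero-padded $\mathbb{H}_\mathcal{I}^{-1}\bar{\mathbf{x}}$ as the \emph{unique} NN-BP minimizer at every $\bar{\mathbf{x}}\in\mathcal{S}(\mathbb{H}_\mathcal{I})$, not only at the interior point you chose. Two vectors with different supports therefore cannot both be the unique minimizer at one point.

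\textbf{What each buys.} The paper's version mainly offers geometric intuition. Yours is fully rigorous and gives more:
\begin{itemize}
\item It proves Part~II of Lemma~\ref{lem 1} directly, with no appeal to Berge's theorem.
\item It shows that every $\bar{\mathbf{x}}\in\mathcal{S}(\mathbb{H}_\mathcal{I})\cap\mathcal{S}(\mathbb{H}_\mathcal{J})$ has a common minimizer supported on $\mathcal{I}\cap\mathcal{J}$. Such a point lies in the span of fewer than $n_\textnormal{R}$ columns, which yields \eqref{eq: lem5-2} for the full cones without the normalization step.
\end{itemize}

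\textbf{A small simplification.} You need not remove any null set to get $\mathbf{u},\mathbf{v}>\mathbf{0}$. A convex set of positive volume has nonempty interior. Since $\mathbb{H}_\mathcal{I}$ is invertible, the interior of $\mathcal{S}_\textnormal{unit}(\mathbb{H}_\mathcal{I})$ is exactly $\{\mathbb{H}_\mathcal{I}\mathbf{u}:\mathbf{u}>\mathbf{0},\,\mathbf{1}^\textsf{T}\mathbf{u}<1\}$.
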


Since any vector in $\mathcal{S}(\mathbb{H})$ can be normalized to $\mathcal{S}_\textnormal{unit}(\mathbb{H})$, the geometric relationships within $\mathcal{S}(\mathbb{H})$ are preserved in $\mathcal{S}_\textnormal{unit}(\mathbb{H})$. Therefore, we can obtain \eqref{eq: lem5-2} from \eqref{eq: 71}.

We now turn to the proof of the second part of Lemma~\ref{lem 1}, i.e., \eqref{eq: opt x express}. We first consider the case where condition \eqref{eq: gamma_{I,j} neq 1} is satisfied. From \eqref{eq: lem5-2}, we obtain 
\begin{align}
	\operatorname{int} (\mathcal{S}(\mathbb{H}_\mathcal{I}) \cap \mathcal{S}(\mathbb{H}_\mathcal{J})) = \emptyset, \ \forall \mathcal{I}, \mathcal{J}\in\mathscr{S}(\mathbb{H}), \mathcal{I}\neq \mathcal{J}.
\end{align}
Combined with Lemma~\ref{lem6} and \eqref{eq: 67}, we obtain that, for any $\bar{\mathbf{x}}\in\operatorname{int}(\mathcal{S}(\mathbb{H}_\mathcal{I}))$, the NN-BP solution $\mathbf{x}^\star$ must lie in $\operatorname{cl}(\mathcal{S}(\mathbb{H}_\mathcal{I}))$ rather than in $\operatorname{cl}(\mathcal{S}(\mathbb{H}_\mathcal{J}))$. Hence, $\mathbf{x}^\star$ is unique, as given explicitly by \eqref{eq: opt x express}. Furthermore, combined with Berge's maximum theorem, we can extend \eqref{eq: opt x express} to the boundary of $\mathcal{S}(\mathbb{H}_\mathcal{I})$. We thereby establish the second part of Lemma~\ref{lem 1} under condition \eqref{eq: gamma_{I,j} neq 1}.

Finally, we address the case where condition \eqref{eq: gamma_{I,j} neq 1} does not hold. If there exist $\mathcal{I}\in\mathscr{R}(\mathbb{H})$ and $j\in \mathcal{I}^c$ such that $\gamma_{\mathcal{I},j}=1$, then the uniqueness of $\mathbf{x}^\star$ is no longer guaranteed; see Example~\ref{example3}. To resolve this, Algorithm~\ref{alg 1} can be viewed as introducing a slight perturbation to $\mathbb{H}$. Specifically, let $\delta_1>\cdots>\delta_{n_\textnormal{T}}$ be a sequence of sufficiently small perturbation parameters, and examine all coefficients $\gamma_{\mathcal{I},j}$ with $j=1,\cdots,n_\textnormal{T}$. Upon encountering the first equality $\gamma_{\mathcal{I},j}=1$, the associated vector $\mathbf{h}_j$ is perturbed by $(1+\delta_1)$ to break the tie. If a second tie occurs, then $(1+\delta_2)$ is applied, and the process continues similarly. The proof of Lemma~\ref{lem 1} is completed by taking the limits $\delta_1, \cdots, \delta_{n_\textnormal{T}}\rightarrow 0$
and utilizing Berge's maximum theorem, whose details are omitted here.

\subsection{Proof of Lemma~\ref{lem7}}\label{app: proof of lem7}
We prove Lemma~\ref{lem7} by contradiction.
Suppose that there exist two distinct index sets $\mathcal{I},\mathcal{J}\in\mathscr{S}(\mathbb{H})$ such that
\begin{align}
	\textnormal{Vol}_{n_\textnormal{R}} \left\{ \mathcal{S}_\textnormal{unit}(\mathbb{H}_\mathcal{I}) \cap \mathcal{S}_\textnormal{unit}(\mathbb{H}_\mathcal{J})\right\} \neq 0. \label{eq: S_I and S_J neq 0}
\end{align}
By utilizing the intersection of $\mathcal{S}_\textnormal{unit}(\mathbb{H}_\mathcal{I})$ and $\mathcal{S}_\textnormal{unit}(\mathbb{H}_\mathcal{I})$, we rewrite $\mathcal{I}$ and $\mathcal{J}$ as
\begin{IEEEeqnarray}{rCl}
	\subnumberinglabel{eq: 72}
	\mathcal{I}&=\{\mathcal{I}_\textnormal{int},\mathcal{I}_\textnormal{diff}\}, \quad \mathcal{I}_\textnormal{int}\cap\mathcal{I}_\textnormal{diff}=\emptyset, \\
	\mathcal{J}&=\{\mathcal{J}_\textnormal{int},\mathcal{J}_\textnormal{diff}\},\quad \mathcal{J}_\textnormal{int}\cap\mathcal{J}_\textnormal{diff}=\emptyset,
\end{IEEEeqnarray}
where $\mathcal{I}_\textnormal{int}$ and $\mathcal{J}_\textnormal{int}$
are related to the intersection of $\mathcal{S}_\textnormal{unit}(\mathbb{H}_\mathcal{I})$ and $\mathcal{S}_\textnormal{unit}(\mathbb{H}_\mathcal{J})$, $\mathcal{I}_\textnormal{diff}=\mathcal{I}\setminus\mathcal{I}_\textnormal{int}$, and $\mathcal{J}_\textnormal{diff}=\mathcal{J}\setminus\mathcal{J}_\textnormal{int}$. 
By \eqref{eq: S_I and S_J neq 0}, we can obtain 
\begin{align}
	\mathrm{Vol}_{n_\textnormal{R}} \bigl(\mathcal{S}_\textnormal{unit}(\mathbb{H}_{\{\mathcal{I}_\textnormal{int},\mathcal{J}_\textnormal{int}\}})\bigr)\neq 0,
\end{align}
which indicates that
\begin{align}
	\textnormal{rank}(\mathbb{H}_{\{\mathcal{I}_\textnormal{int},\mathcal{J}_\textnormal{int}\}})=n_\textnormal{R}. \label{eq: 75}
\end{align} 
Thus, there exists $j\in\mathcal{J}_\textnormal{int}$ such that $j\in\mathcal{I}^c$,\footnote{Otherwise, for any $j\in\mathcal{J}_\textnormal{int}$ we have $j\in\mathcal{I}$, i.e., $\mathcal{J}_\textnormal{int}\in\mathcal{I}$. Recall that $\textnormal{rank}(\mathbb{H}_{\{\mathcal{I}_\textnormal{int},\mathcal{J}_\textnormal{int}\}})=n_\textnormal{R}$ and $\textnormal{rank}(\mathbb{H}_\mathcal{I})=n_\textnormal{R}$. It follows that $\{\mathcal{I}_\textnormal{int},\mathcal{J}_\textnormal{int}\}
=
\mathcal{I}$. From this, there are two possibilities:
\begin{itemize}
	\item $\mathcal{S}(\mathbb{H}_{\mathcal{I}})$ is completely contained in $\mathcal{S}(\mathbb{H}_{\mathcal{J}})$. In this case, there exists $k\in\mathcal{J}_\textnormal{diff}$ with $k\in\mathcal{I}^c$ such that 
	$\gamma_{\mathcal{I},k}>1$, which violates \eqref{eq: mathscr{S}(H)};
	\item $\mathcal{S}(\mathbb{H}_{\mathcal{I}})$ in not completely contained in $\mathcal{S}(\mathbb{H}_{\mathcal{J}})$. In this case, there exists $k\in\mathcal{I}$ with $k\in\mathcal{J}^c$ such that 
	$\gamma_{\mathcal{J},k}>1$, which also violates \eqref{eq: mathscr{S}(H)}.
\end{itemize}} and there exists $i\in\mathcal{I}_\textnormal{int}$ such that $i\in\mathcal{J}^c$.

Note that, for any $\mathcal{K}\in\mathscr{S}(\mathbb{H})$ and $k\in\mathcal{K}^c$, we have $\gamma_{\mathcal{K},k}< 1$, which follows directly from \eqref{eq: mathscr{S}(H)}. Thus, we obtain
\begin{align}
	\gamma_{\mathcal{I},j} & < 1,\\
	\gamma_{\mathcal{J},i} & < 1,
\end{align} 
which shows that $\mathbf{h}_i$ lies strictly inside $\mathcal{S}_\textnormal{unit}(\mathbb{H}_\mathcal{J})$, and $\mathbf{h}_j$ lies strictly inside $\mathcal{S}_\textnormal{unit}(\mathbb{H}_\mathcal{I})$. This leads to a contradiction, since it violates the geometric relationship between the two subcones. We hence conclude that the assumption is invalid, finishing the proof of Lemma~\ref{lem7}.


\begin{figure}[h!]
	\centering
	\vspace{-3mm}
	\includegraphics[width=2.0in]{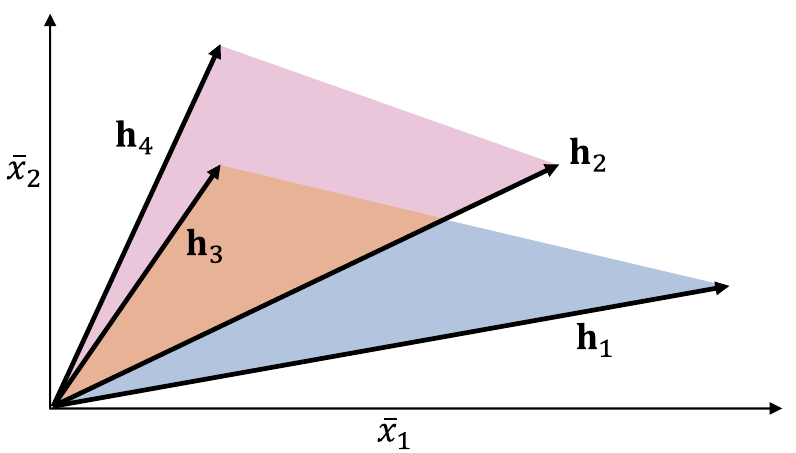}
	\caption{Geometry of $\mathcal{S}_\textnormal{unit}(\mathbb{H}_{\{1,3\}})$ and $\mathcal{S}_\textnormal{unit}(\mathbb{H}_{\{2,4\}})$.}
	\label{fig:ex4}
	\vspace{-1mm}
\end{figure}

We use the following example to provide an intuitive illustration of the above proof.
Consider a $2\times4$ channel $\mathbb{H}=\left(\mathbf{h}_1,\mathbf{h}_2,\mathbf{h}_3,\mathbf{h}_4\right)$, where $\mathscr{S}(\mathbb{H})=\{\mathcal{I}=\{1,3\},\mathcal{J}=\{2,4\}\}$. Assume that
\begin{align}
	\textnormal{Vol}_{2} \left\{ \mathcal{S}_\textnormal{unit}(\mathbb{H}_{\{1,3\}}) \cap \mathcal{S}_\textnormal{unit}(\mathbb{H}_{\{2,4\}})\right\} \neq 0. \label{eq: 77}
\end{align}
Fig.~\ref{fig:ex4} shows the geometries of $\mathcal{S}_\textnormal{unit}(\mathbb{H}_{\{1,3\}})$ and $\mathcal{S}_\textnormal{unit}(\mathbb{H}_{\{2,4\}})$, where the orange shaded region denotes their intersection. From the figure, we obtain $\mathcal{I}_\textnormal{int}=\{3\}$ and $\mathcal{J}_\textnormal{int}=\{2\}$. Then, with the relation in \eqref{eq: 72}, we further obtain that $\mathcal{I}_\textnormal{diff}=\{1\}$ and $\mathcal{J}_\textnormal{diff}=\{4\}$. 

Note that $\{2\}\in\mathcal{J}_\textnormal{int}$ and $\{2\}\in\mathcal{I}^c$, and $\{3\}\in\mathcal{I}_\textnormal{int}$ and $\{3\}\in\mathcal{J}^c$. Combined with the property of $\mathscr{S}(\mathbb{H})$ in \eqref{eq: mathscr{S}(H)}, we have 
\begin{align}
	\mathcal{S}(\mathbb{H}) = \mathcal{S}(\mathbb{H}_{\{1,3\}}) \cup \mathcal{S}(\mathbb{H}_{\{2,4\}}).\label{eq: 78}
\end{align}
To satisfy both \eqref{eq: 77} and \eqref{eq: 78}, it requires that 
\begin{align}	
	\gamma_{\{1,3\},2}& < 1,\\	
	\gamma_{\{2,4\},3}& < 1,
\end{align}
which shows that $\mathbf{h}_2$ lies strictly inside $\mathcal{S}_\textnormal{unit}(\mathbb{H}_{\{1,3\}})$, and $\mathbf{h}_3$ lies strictly inside $\mathcal{S}_\textnormal{unit}(\mathbb{H}_{\{2,4\}})$. This leads to a contradiction.

\section{Proof of Proposition~\ref{prop1}}\label{app: proof of prop}
First, from \eqref{eq: channel model new} and \eqref{eq: bar{x} and x}, we observe that $\mathbf{X}$, $\bar{\mathbf{X}}$, and $\mathbf{Y}$ form a Markov chain. Then, 
\begin{align}
	\const{I}(\mathbf{X};\mathbf{Y}) = 	\const{I}(\bar{\mathbf{X}};\mathbf{Y}), \label{}
\end{align}
which concludes the first part of Proposition~\ref{prop1}. 
Next, the analysis is confined to feasible distributions of $\bar{\mathbf{X}}$. Combined with Lemma~\ref{lem 1}, for any $\bar{\mathbf{x}}\in\mathcal{S}(\mathbb{H}_\mathcal{I})$, the NN-BP solution $\mathbf{x}^\star$ is given by \eqref{eq: opt x express} and the corresponding $\ell_1$-norm is
\begin{align}
	\| \mathbf{x}^\star \|_1 = \|\mathbb{H}_\mathcal{I}^{-1} \bar{\mathbf{x}}\|_1.
\end{align}
Since there is an one-to-one correspondence between $\bar{\mathbf{x}}$ and $\mathbf{x}^\star$, we can restrict the analysis to the input $\mathbf{x}^\star$ without loss of optimality. We can compute that
\begin{align}
	\textsf{E} \{ \| \mathbf{X}^\star \|_1 \} 
	&=\sum\limits_{\mathcal{I}\in\mathscr{S}(\mathsf{
	H})} p_\mathcal{I} \times  \textsf{E} \{ \|\mathbf{X}^\star|\widetilde{\mathcal{I}}=\mathcal{I}\|_1\}\\
	&=\sum\limits_{\mathcal{I}\in\mathscr{S}(\mathsf{
	H})} p_\mathcal{I} \times  \textsf{E} \{ \| \mathbb{H}_\mathcal{I}^{-1} \bar{\mathbf{X}} | \widetilde{\mathcal{I}}=\mathcal{I} \|_1\}\\
	&=\sum\limits_{\mathcal{I}\in\mathscr{S}(\mathsf{
	H})} p_\mathcal{I} \times \| \mathbb{H}_\mathcal{I}^{-1} \textsf{E} \{  \bar{\mathbf{X}} | \widetilde{\mathcal{I}}=\mathcal{I} \}\|_1 \label{eq: 90}\\
	&=\textsf{E}_{\widetilde{\mathcal{I}}}\Bigl\{ \bigl\|\mathbb{H}_{\widetilde{\mathcal{I}}}^{-1} \textsf{E}_{\bar{\mathbf{X}}} \bigl\{ \bar{\mathbf{X}}|\widetilde{\mathcal{I}} \bigr\} \bigr\|_1 \Bigl\},
\end{align}
where \eqref{eq: 90} holds since all elements of
$\mathbb{H}_\mathcal{I}^{-1} \bar{\mathbf{X}} $ are nonnegative. Finally, the second part of Proposition~\ref{prop1}  is a consequence of the fact that $\mathbf{X}^\star$ is a feasible input satisfying \eqref{eq: non-negative} and \eqref{eq: ave cons}.

\section{Derivation of Capacity Results for $n_\textnormal{T}\geq n_\textnormal{R}$}\label{app: proof of case i}

\subsection{Proof of Theorem~\ref{thm: lb}}\label{app: lower bound case i}
The lower bound follows from the mutual information induced by the compound exponential distribution in \eqref{eq: q=p} and \eqref{eq: exp pdf}, i.e.,
\begin{align}
	\const{C} 
	&\geq \const{I}(\bar{\mathbf{X}} ; \mathbf{Y}) \\
	&= \const{h}(\bar{\mathbf{X}}+\mathbf{Z})-\const{h}(\mathbf{Z}) \\
	& \geq \frac{n_\textnormal{R}}{2} \log \left\{\exp \left(\frac{2}{n_\textnormal{R}} \const{h}(\bar{\mathbf{X}})\right)+\exp \left(\frac{2}{n_\textnormal{R}} \const{h}(\mathbf{Z})\right)\right\}\nonumber\\
	&-\const{h}(\mathbf{Z})\qquad \label{eq: EPI}\\
	& \geq \frac{n_\textnormal{R}}{2} \log \left\{1+\frac{\exp \left(\frac{2}{n_\textnormal{R}} \const{h}(\bar{\mathbf{X}})\right)}{2 \pi e \sigma^2}\right\}, \label{eq: C lb com}
\end{align}
where \eqref{eq: EPI} follows from the EPI \cite{Shannon1948}. We further simplify $\const{h}(\bar{\mathbf{X}})$ as
\begin{align}
	&\const{h}(\bar{\mathbf{X}})\nonumber\\
	&= \const{I}(\bar{\mathbf{X}} ; \widetilde{\mathcal{I}})+\const{h}(\bar{\mathbf{X}} \mid \widetilde{\mathcal{I}}) \\
	&= \const{H}(\widetilde{\mathcal{I}})-\const{H}(\widetilde{\mathcal{I}} \mid \bar{\mathbf{X}})+\const{h}(\bar{\mathbf{X}} \mid \widetilde{\mathcal{I}}) \\
	&= \const{H}(\widetilde{\mathcal{I}})+\const{h}(\bar{\mathbf{X}} \mid \widetilde{\mathcal{I}}) \label{eq: H(I,X)=0}\\
	&= \const{H}(\mathbf{p})+\sum_{\mathcal{I} \in \mathscr{S}(\mathbb{H})} p_\mathcal{I} \times \const{h}(\bar{\mathbf{X}} \mid \widetilde{\mathcal{I}}=\mathcal{I}) \\
	&= \const{H}(\mathbf{p})+n_\textnormal{R} \log \left(\frac{\EE}{n_\textnormal{R}}\right)+n_\textnormal{R}\nonumber\\
	&\quad +\,\sum_{\mathcal{I} \in \mathscr{S}(\mathbb{H}) } p_\mathcal{I}\times \log \left(\left|\operatorname{det}\left(\mathbb{H}_\mathcal{I}\right)\right|\right)\label{eq: substitute h(X,I)} \\
	&= -\textsf{D}(\mathbf{p}\| \mathbf{q})+n_\textnormal{R} \log \left(\frac{\EE}{n_\textnormal{R}}\right)+n_\textnormal{R}\nonumber\\
	&\quad+\log \Biggl(\sum_{\mathcal{I} \in \mathscr{S}(\mathbb{H}) } \left|\operatorname{det}\left(\mathbb{H}_\mathcal{I}\right)\right|\Biggr) \label{eq: q def}\\
	&= n_\textnormal{R} \log \left(\frac{\EE}{n_\textnormal{R}}\right)+n_\textnormal{R}+\log \Biggl(\sum_{\mathcal{I} \in \mathscr{S}(\mathbb{H}) } \left|\operatorname{det}\left(\mathbb{H}_\mathcal{I}\right)\right|\Biggr),	\label{eq: h(bar{X}_te)}
\end{align}
where \eqref{eq: H(I,X)=0} follows from $\const{H}(\widetilde{\mathcal{I}}\mid \bar{\mathbf{X}})=0$, \eqref{eq: substitute h(X,I)} follows from $\const{h}(\bar{\mathbf{X}}\mid \widetilde{\mathcal{I}}=\mathcal{I})= n_\textnormal{R} \log \left( \frac{\EE}{n_\textnormal{R}}\right) + n_\textnormal{R} + \log(| \operatorname{det}(\mathbb{H}_\mathcal{I})|)$, and \eqref{eq: h(bar{X}_te)} follows from \eqref{eq: q=p}. Substituting \eqref{eq: h(bar{X}_te)} into \eqref{eq: C lb com}, we can obtain the lower bound in Theorem~\ref{thm: lb}.

\subsection{Proof of Theorem~\ref{thm: ub}}\label{app: upper bound case i}
Let $\bar{\mathbf{X}}^\star$ be the optimal $\bar{\mathbf{X}}$ in \eqref{eq: prop eq1}, with the corresponding $\widetilde{\mathcal{I}}$ and $\mathbf{p}$ denoted by $\widetilde{\mathcal{I}}^\star$ and $\mathbf{p}^\star$ , respectively. Then, an upper bound on channel capacity is given by
\begin{IEEEeqnarray}{rCl}
	\const{C}&
	=& \const{I} \left( \bar{\mathbf{X}}^\star ; \bar{\mathbf{X}}^\star +\mathbf{Z}\right) \\
	&\leq&  \const{I} (\bar{\mathbf{X}}^\star ; \bar{\mathbf{X}}^\star+\mathbf{Z}, \widetilde{\mathcal{I}}^\star ) \\
	&=& \const{I} (\bar{\mathbf{X}}^\star ; \widetilde{\mathcal{I}}^\star)+\const{I}(\bar{\mathbf{X}}^\star ; \bar{\mathbf{X}}^\star+\mathbf{Z} \mid \widetilde{\mathcal{I}}^\star) \\
	&=& \const{H} (\widetilde{\mathcal{I}}^\star)- \const{H} (\widetilde{\mathcal{I}}^\star \mid \bar{\mathbf{X}}^\star)+\const{I}(\bar{\mathbf{X}}^\star ; \bar{\mathbf{X}}^\star+\mathbf{Z} \mid \widetilde{\mathcal{I}}^\star) \\
	&=& \const{H} (\mathbf{p}^\star)+\const{I}(\bar{\mathbf{X}}^\star ; \bar{\mathbf{X}}^\star+\mathbf{Z} \mid \widetilde{\mathcal{I}}^\star)\\
	&=& \const{H} (\mathbf{p}^\star)+\sum\limits_{\mathcal{I}\in\mathscr{S}(\mathbb{H})} p^\star_\mathcal{I}\times \const{I}(\bar{\mathbf{X}}^\star ; \bar{\mathbf{X}}^\star+\mathbf{Z} \mid \widetilde{\mathcal{I}}^\star=\mathcal{I})\qquad\\
	&=& \const{H} (\mathbf{p}^\star)+\sum\limits_{\mathcal{I}\in\mathscr{S}(\mathbb{H})} p^\star_\mathcal{I}\times \const{I}(\mathbf{X}^\star_\mathcal{I}; \mathbf{X}_\mathcal{I}^\star+\mathbf{Z}_\mathcal{I}^\star),\quad\quad \label{eq: C ub 1}
\end{IEEEeqnarray}
where $\mathbf{X}_\mathcal{I}^\star
= \mathbb{H}_\mathcal{I}^{-1} \bar{\mathbf{X}}^\star$, and $\mathbf{Z}_\mathcal{I}^\star
= \mathbb{H}_\mathcal{I}^{-1} \bar{\mathbf{Z}}^\star$. Denote $\mathbf{X}_{\mathcal{I}}^\star=(X_{\mathcal{I},1}^\star,\cdots,X_{\mathcal{I},n_\textnormal{R}}^\star)^\textsf{T}$, and $\mathbf{Z}_{\mathcal{I}}^\star=(Z_{\mathcal{I},1}^\star,\cdots,Z_{\mathcal{I},n_\textnormal{R}}^\star)^\textsf{T}$. For any $\mathcal{I}\in\mathscr{S}(\mathbb{H})$, we define
\begin{align}
	\mathbf{Y}_\mathcal{I}^\star  =\mathbf{X}_\mathcal{I}^\star +\mathbf{Z}_\mathcal{I}^\star,
\end{align}
where $\mathbf{Y}_{\mathcal{I}}^\star=(Y_{\mathcal{I},1}^\star,\cdots,Y_{\mathcal{I},n_\textnormal{R}}^\star)^\textsf{T}$.
On the one hand, the input constraints in \eqref{eq: non-negative} and \eqref{eq: ave cons} imply that
\begin{align}
	&\operatorname{Pr}\{X_{\mathcal{I},l}^\star\in\mathfrak{R}_+\} =1,\quad l\in\{1,\cdots,n_\textnormal{R}\},\\
	&\sum\limits_{\mathcal{I}\in\mathscr{S}(\mathbb{H})} p^\star_\mathcal{I} \sum_{l=1}^{n_\textnormal{R}} \textsf{E}\{X_{\mathcal{I},l}^\star\} \leq \EE. \label{eq: X_I power cons}
\end{align} 
On the other hand, it can be shown that $\mathbf{Z}_\mathcal{I}^\star \sim\mathcal{N}(\mathbf{0}_{n_\textnormal{R}},\sigma^2 \mathbb{H}_{\mathcal{I}}^{-1}\mathbb{H}_{\mathcal{I}}^{-\textsf{T}})$. Since $\mathbf{X}_\mathcal{I}^\star$ and $\mathbf{Z}_\mathcal{I}^\star$ are independent, the channel capacity can be further upper-bounded using the duality-based approach \cite{Moser2004}. For any $\mathcal{I}\in\mathscr{S}(\mathbb{H})$, we choose the auxiliary output distribution of $\mathbf{Y}_\mathcal{I}^\star$ as the following product distribution:
\begin{align}
	R_{\mathbf{Y}_\mathcal{I}^\star}(\mathbf{y}_\mathcal{I}) = \prod\nolimits_{l=1}^{n_\textnormal{R}} R_{Y_{\mathcal{I},l}^\star}(y_{\mathcal{I},l}), \label{eq: R(Y)}
\end{align} 
where $\mathbf{y}_{\mathcal{I}}=(y_{\mathcal{I},1},\cdots,y_{\mathcal{I},n_\textnormal{R}})^\textsf{T}$. Let $W_{\mathcal{I}}(\cdot | \mathbf{X}_\mathcal{I}^\star)$ be the transition law between $\mathbf{X}_\mathcal{I}^\star$ and $\mathbf{Y}_\mathcal{I}^\star$, and let $W_{\mathcal{I},l}(\cdot | X^\star_{\mathcal{I},l})$ be the marginal law between $X_{\mathcal{I},l}^\star$ and $Y_{\mathcal{I},l}^\star$, $l\in\{1,\cdots,n_\textnormal{R}\}$. Then, $\const{I}(\mathbf{X}_\mathcal{I}^\star; \mathbf{X}_\mathcal{I}^\star+\mathbf{Z}_\mathcal{I}^\star)$ can be further upper-bounded as
\begin{IEEEeqnarray}{rCl}
	\IEEEeqnarraymulticol{3}{l}{%
		\const{I}(\mathbf{X}_\mathcal{I}^\star; \mathbf{X}_\mathcal{I}^\star+\mathbf{Z}_\mathcal{I}^\star )
	}\nonumber\\
	& \leq& \textsf{E}_{\mathbf{X}_\mathcal{I}^\star }\Bigl\{\textsf{D}\Bigl(W_\mathcal{I}\left(\mathbf{Y}_\mathcal{I}^\star \mid \mathbf{X}_\mathcal{I}^\star\right) \| R_{\mathbf{Y}_\mathcal{I}^\star}\left(\mathbf{Y}_\mathcal{I}^\star\right)\Bigr)\Bigr\} \\ 
	& =&-\const{h}\left(\mathbf{X}_\mathcal{I}^\star+\mathbf{Z}_\mathcal{I}^\star \mid \mathbf{X}_\mathcal{I}^\star\right)\nonumber\\
	&&-\textsf{E}_{\mathbf{X}_\mathcal{I}^\star}\left\{\textsf{E}_{W_\mathcal{I}\left(\mathbf{Y}_\mathcal{I}^\star \mid \mathbf{X}_\mathcal{I}^\star\right)}\left\{\log \left(R_{\mathbf{Y}_\mathcal{I}^\star}\left(\mathbf{Y}_\mathcal{I}^\star\right)\right)\right\}\right\} \\ 
	& =&-\const{h}\left(\mathbf{X}_\mathcal{I}^\star+\mathbf{Z}_\mathcal{I}^\star \mid \mathbf{X}_\mathcal{I}^\star\right) \nonumber\\
	&& -\sum_{l=1}^{n_\textnormal{R}} \textsf{E}_{X_{\mathcal{I}, l}^\star }\left\{\textsf{E}_{W_{\mathcal{I},l}\left(Y_{\mathcal{I},l}^\star \mid X_{\mathcal{I},l}^\star\right)}\left\{\log \bigl(R_{Y_{\mathcal{I},l}^\star}\left(Y_{\mathcal{I},l}^\star\right)\bigr)\right\}\right\}\qquad \label{eq: susbtite R(y)} \\
	& =&-\frac{n_\textnormal{R}}{2} \log \left(2 \pi e \sigma^2\right)+\log \left(\left|\operatorname{det}\left(\mathbb{H}_\mathcal{I}\right)\right|\right) \nonumber\\
	&& -\sum_{l=1}^{n_\textnormal{R}} \textsf{E}_{X_{\mathcal{I},l}^\star  }\left\{\textsf{E}_{W_{\mathcal{I},l}\left(Y_{\mathcal{I},l}^\star \mid X_{\mathcal{I},l}^\star\right)}\left\{\log \left(R_{Y_{\mathcal{I},l}^\star}\left(Y_{\mathcal{I},l}^\star\right)\right)\right\}\right\},\nonumber\\ \label{eq: 98}	
\end{IEEEeqnarray}
where \eqref{eq: susbtite R(y)} follows from \eqref{eq: R(Y)}. 

First, we choose the output distribution:
\begin{small}
	\begin{IEEEeqnarray}{rCl}		
		&&R_{Y_{\mathcal{I},l}^\star}(y_{\mathcal{I},l})=\nonumber\\
		&&\left\{\begin{array}{ll} 
			\frac{1}{\beta e^{-\frac{\delta^2}{2 \bar{\sigma}_{\mathcal{I},l}^2}}+\sqrt{2 \pi} \bar{\sigma}_{\mathcal{I}, l} Q\left(\frac{\delta}{\bar{\sigma}_{\mathcal{I}, l}}\right)} e^{-\frac{y_{\mathcal{I},l}^2}{2 \bar{\sigma}_{\mathcal{I}, l}^2}}, &y_{\mathcal{I},l} \leq-\delta, \\
			\frac{1}{\beta e^{-\frac{\delta^2}{2 \bar{\sigma}_{\mathcal{I}, l}^2}}+\sqrt{2 \pi} \bar{\sigma}_{\mathcal{I}, l} Q\left(\frac{\delta}{\bar{\sigma}_{\mathcal{I}, l}}\right)} e^{ -\frac{\delta^2}{2 \bar{\sigma}_{\mathcal{I}, l}^2}-\frac{y_{\mathcal{I},l}+\delta}{\beta}}, &y_{\mathcal{I},l}>-\delta,
		\end{array}\right.\qquad \label{eq: R_Y}
	\end{IEEEeqnarray}
\end{small}where $\delta>0$ and $\beta>0$. Then, following steps similar to those in \cite[App. B-F]{Lapidoth2009} and applying the bound $1-Q(\xi)\leq 1$ for $\xi\in\mathfrak{R}$, we obtain
\begin{align}
	& -\textsf{E}_{X_{\mathcal{I},l}^\star }\left\{\textsf{E}_{W_{\mathcal{I},l}\left(Y_{\mathcal{I},l}^\star \mid X_{\mathcal{I},l}^\star \right) } \left\{\log \bigl(R_{Y_{\mathcal{I},l}^\star} \left(Y_{\mathcal{I},l}^\star\right)\bigr)\right\}\right\} \nonumber\\
	& \leq \log \left(\beta e^{-\frac{\delta^2}{2 \bar{\sigma}_{\mathcal{I},l}^2}}+\sqrt{2 \pi} \bar{\sigma}_{\mathcal{I},l} Q\left(\frac{\delta}{\bar{\sigma}_{\mathcal{I},l}}\right)\right)+\frac{1}{2} Q\left(\frac{\delta}{\bar{\sigma}_{\mathcal{I},l}}\right)\nonumber\\
	&\quad+\frac{\delta}{2 \sqrt{2 \pi} \bar{\sigma}_{\mathcal{I},l}} e^{-\frac{\delta^2}{2 \bar{\sigma}_{\mathcal{I}, l}^2}}+\frac{\delta^2}{2 \bar{\sigma}_{\mathcal{I}, l}^2}+\frac{\delta+\textsf{E}\left\{X_{\mathcal{I}, l}^\star\right\}}{\beta}\nonumber\\
	&\quad+\frac{\bar{\sigma}_{\mathcal{I}, l}}{\sqrt{2 \pi} \beta} e^{-\frac{\delta^2}{2 \bar{\sigma}_{\mathcal{I}, l}^2}}. \label{eq: -E_X|I}
\end{align}
Substituting it into \eqref{eq: C ub 1} and \eqref{eq: 98} yields
\begin{align}
	\const{C}
	&\leq 
	\const{H}(\mathbf{p}^\star)-\frac{n_\textnormal{R}}{2} \log \left(2 \pi e \sigma^2\right)  +\sum\limits_{\mathcal{I}\in\mathscr{S}(\mathbb{H})} p^\star_\mathcal{I}	 \log(|\operatorname{det}(\mathbb{H}_\mathcal{I})|)\nonumber\\
	&\quad+\sum\limits_{\mathcal{I}\in\mathscr{S}(\mathbb{H})} p^\star_\mathcal{I} \sum_{l=1}^{n_\textnormal{R}} \Biggl[\log \biggl(\beta e^{-\frac{\delta^2}{2 \bar{\sigma}_{\mathcal{I},l}^2}}+\sqrt{2 \pi} \bar{\sigma}_{\mathcal{I},l} Q\Bigl(\frac{\delta}{\bar{\sigma}_{\mathcal{I},l}}\Bigr)\biggr)\nonumber\\
	&\quad+ \frac{1}{2} Q\left(\frac{\delta}{\bar{\sigma}_{\mathcal{I},l}}\right) +\frac{\delta}{2 \sqrt{2 \pi} \bar{\sigma}_{\mathcal{I},l}}e^{-\frac{\delta^2}{2 \bar{\sigma}_{\mathcal{I}, l}^2}} 
	+ \frac{\delta^2}{2 \bar{\sigma}_{\mathcal{I}, l}^2}
	\nonumber\\
	&\quad+ \frac{\delta+\textsf{E}\bigl\{X_{\mathcal{I}, l}^\star\bigr\}}{\beta}
	+ \frac{\bar{\sigma}_{\mathcal{I}, l}}{\sqrt{2 \pi} \beta} e^{-\frac{\delta^2}{2 \bar{\sigma}_{\mathcal{I}, l}^2}} \Biggr]. \label{eq: 100}	
\end{align}
Combining it with \eqref{eq: X_I power cons}, we can obtain the upper bound \eqref{eq: thm ub1}. 

Second, we choose the output distribution:
\begin{small}
	\begin{IEEEeqnarray}{rCl}		
		R_{Y_{\mathcal{I},l}^\star}(y_{\mathcal{I},l})
		&=&\left\{\begin{array}{ll} 
			\frac{\sqrt{2e}}{\sqrt{2} \nu + \sqrt{\pi e }  \bar{\sigma}_{\mathcal{I}, l} } e^{-\frac{y_{\mathcal{I},l}^2}{2 \bar{\sigma}_{\mathcal{I}, l}^2 } }, &y_{\mathcal{I},l} \leq 0, \\
			\frac{\sqrt{2} }{\sqrt{2}\nu + \sqrt{\pi e }\bar{\sigma}_{\mathcal{I}, l} }
			e^{ -\frac{y_{\mathcal{I},l}}{\nu}}, &y_{\mathcal{I},l}>0,
		\end{array}\right. \qquad \label{}
	\end{IEEEeqnarray}
\end{small}where $\nu> 0$. Then, following steps similar to those in \cite[Sec. II-A]{Jiang2024}, we can obtain
\begin{align}
	& -\textsf{E}_{X_{\mathcal{I},l}^\star }\left\{\textsf{E}_{W_{\mathcal{I},l}\left(Y_{\mathcal{I},l}^\star \mid X_{\mathcal{I},l}^\star\right)} \left\{\log \bigl(R_{Y_{\mathcal{I},l}^\star} \left(Y_{\mathcal{I},l}^\star\right)\bigr)\right\}\right\} \nonumber\\	
	& \leq \frac{\textsf{E} \{ X_{\mathcal{I},l}^\star \} }{\nu} + \frac{\bar{\sigma}_{\mathcal{I}, l}}{\sqrt{2\pi}\nu} + \log\left(\nu+\sqrt{\frac{\pi e}{2}} \bar{\sigma}_{\mathcal{I}, l} \right).
\end{align}
Substituting it into \eqref{eq: C ub 1} and \eqref{eq: 98}, we have
\begin{align}
	\const{C}
	&\leq 
	\const{H}(\mathbf{p}^\star)-\frac{n_\textnormal{R}}{2} \log \left(2 \pi e \sigma^2\right)  +\sum\limits_{\mathcal{I}\in\mathscr{S}(\mathbb{H})} p^\star_\mathcal{I}	 \log(|\operatorname{det}(\mathbb{H}_\mathcal{I})|)\nonumber\\
	&\quad+\sum\limits_{\mathcal{I}\in\mathscr{S}(\mathbb{H})} p^\star_\mathcal{I} \sum_{l=1}^{n_\textnormal{R}} \Biggl[\frac{\textsf{E} \{ X_{\mathcal{I},l}^\star \} }{\nu} + \frac{\bar{\sigma}_{\mathcal{I}, l}}{\sqrt{2\pi}\nu} \nonumber\\
	&\quad+ \log\left(\nu+\sqrt{\frac{\pi e}{2}} \bar{\sigma}_{\mathcal{I}, l} \right) \Biggr].
\end{align}
Combining it with \eqref{eq: X_I power cons}, we can obtain the upper bound \eqref{eq: thm ub2}. 

\subsection{Proof of Theorem~\ref{thm: hsnr}}\label{app: high snr capacity case i}
The achievability follows directly from the lower bound in Theorem~\ref{thm: lb}. Therefore, we mainly focus on proving the converse. For any $\mathcal{I}\in\mathscr{S}(\mathbb{H})$ and $l\in\{1,\cdots,n_\textnormal{R}\}$, we choose $\beta=\beta_{\mathcal{I},l}$ and $\delta=\delta_{\mathcal{I},l}$ in \eqref{eq: R_Y}. Furthermore, we let
\begin{align}
	\beta_{\mathcal{I},l} &= \frac{\EE}{n_\textnormal{R}},\\
	\delta_{\mathcal{I},l} &= \bar{\sigma}_{\mathcal{I},l} \sqrt{\log\left( \frac{\EE}{ \bar{\sigma}_{\mathcal{I},l} }\right)}.	
\end{align}
Substituting them into \eqref{eq: -E_X|I}, we have
\begin{align}
	& -\textsf{E}_{X_{\mathcal{I},l}^\star }\left\{\textsf{E}_{W_{\mathcal{I},l}\left(Y_{\mathcal{I},l}^\star \mid X_{\mathcal{I},l}^\star\right)}\left\{\log \left(R_{Y_{\mathcal{I},l}^\star} \left(Y_{\mathcal{I},l}^\star\right)\right)\right\}\right\} \nonumber\\
	&\leq \log \left(\frac{\EE}{n_\textnormal{R}} e^{-\frac{1}{2} \log \left(\frac{\EE}{\bar{\sigma}_{\mathcal{I}, l}}\right)}+\sqrt{2 \pi} \bar{\sigma}_{\mathcal{I}, l} Q\left(\sqrt{\log \left(\frac{\EE}{\bar{\sigma}_{\mathcal{I}, l}}\right)}\right)\right)\nonumber\\
	&\quad+\frac{1}{2} Q\left(\sqrt{\log \left(\frac{\EE}{\bar{\sigma}_{\mathcal{I}, l}}\right)}\right) \nonumber\\
	&\quad +\frac{1}{2 \sqrt{2 \pi}} \sqrt{\log \left(\frac{\EE}{\bar{\sigma}_{\mathcal{I}, l}}\right)} e^{-\frac{1}{2} \log \left(\frac{\EE}{\bar{\sigma}_{\mathcal{I}, l}}\right)}	\nonumber\\
	&\quad+\frac{1}{2} \log\left(\frac{\EE}{\bar{\sigma}_{\mathcal{I},l} }\right)
	+ \frac{n_\textnormal{R}\bar{\sigma}_{\mathcal{I},l}}{\EE} \sqrt{\log\left(\frac{\EE}{\bar{\sigma}_{\mathcal{I},l}}\right)}	\nonumber\\
	&\quad+\frac{n_\textnormal{R}\textsf{E}\{X_{\mathcal{I},l}^\star\}}{\EE}
	+\frac{n_\textnormal{R}\bar{\sigma}_{\mathcal{I},l}}{\sqrt{2\pi}\EE} 
	e^{-\frac{1}{2} \log\left( \frac{\EE}{\bar{\sigma}_{\mathcal{I},l}}\right)}\\
	&\doteq \log\left(\frac{\EE}{n_\textnormal{R}} \sqrt{\frac{\bar{\sigma}_{\mathcal{I}, l}}{\EE}}\right) 
	+ \frac{1}{2} \log\left(\frac{\EE}{\bar{\sigma}_{\mathcal{I}, l}}\right) 
	+ \frac{n_\textnormal{R}\textsf{E}\{X_{\mathcal{I},l}^\star\}}{\EE}\\
	&= \log(\EE) -\log(n_\textnormal{R}) +  \frac{n_\textnormal{R}\textsf{E}\{X_{\mathcal{I},l}^\star\}}{\EE}.
\end{align}
Combined with \eqref{eq: C ub 1} and \eqref{eq: 98}, we can obtain that at high SNR,
\begin{align}
	\const{C}
	& \leq \const{H}(\mathbf{p}^\star) -\frac{n_\textnormal{R}}{2} \log(2\pi e\sigma^2) 
	+\sum_{\mathcal{I}\in\mathscr{S}(\mathbb{H})} p_\mathcal{I}^\star  \log(|\operatorname{det}(\mathbb{H}_\mathcal{I})|) \nonumber\\
	&\quad + n_\textnormal{R}\log(\EE) - n_\textnormal{R}\log(n_\textnormal{R})  + \frac{n_\textnormal{R}}{\EE} \sum_{\mathcal{I}\in\mathscr{S}(\mathbb{H})} p_\mathcal{I}^\star \sum_{l=1}^{n_\textnormal{R}} \textsf{E}\{X_{\mathcal{I},l}^\star\}\\
	&\leq -\textsf{D}(\mathbf{p}^\star\| \mathbf{q}) + \log\Bigl(\sum_{\mathcal{I}\in\mathscr{S}(\mathbb{H})} |\operatorname{det}(\mathbb{H}_\mathcal{I})|\Bigr) - \frac{n_\textnormal{R}}{2} \log(2\pi e\sigma^2)   \nonumber\\
	&\quad + n_\textnormal{R}\log(\EE) - n_\textnormal{R}\log(n_\textnormal{R}) + n_\textnormal{R}\\
	&\leq \log\Bigl(\sum_{\mathcal{I}\in\mathscr{S}(\mathbb{H})} |\operatorname{det}(\mathbb{H}_\mathcal{I})|\Bigr) - \frac{n_\textnormal{R}}{2} \log(2\pi e\sigma^2) \nonumber\\
	&\quad + n_\textnormal{R}\log(\EE) - n_\textnormal{R}\log(n_\textnormal{R}) + n_\textnormal{R},
\end{align}
where the last inequality holds since $\textsf{D}(\mathbf{p}^\star\| \mathbf{q})\geq 0$. Thus, we complete the converse proof of Theorem~\ref{thm: hsnr}. \footnote{Alternatively, by setting $\mu=\frac{\EE}{n_\textnormal{R}}$ and substituting it into \eqref{eq: thm ub2}, we can also complete the converse proof of Theorem~\ref{thm: hsnr}.}


\section{Derivation of Capacity Results for $n_\textnormal{T}< n_\textnormal{R}$}\label{app: proof of case ii}

Since $\widetilde{\mathbb{H}}$ is invertible and by Lemma~\ref{lem 1}, we have
\begin{align}
	\mathscr{S}(\widetilde{\mathbb{H}})
	=\mathscr{R}(\widetilde{\mathbb{H}})
	=\{1,\cdots,n_\textnormal{T}\}. \label{eq: S(H) equal R(H)}
\end{align} 
Thus, $\mathbf{p}$ and $\mathbf{q}$ reduce to scalars. Note that 
\begin{align}	
	\widetilde{\mathbb{H}}^\textsf{T}\widetilde{\mathbb{H}}	
	&=\mathbb{V}\mathbb{S}_{1:n_\textnormal{T}}^\textsf{T}  \mathbb{S}_{1:n_\textnormal{T}} \mathbb{V}^\textsf{T}\\
	&=\mathbb{V}\mathbb{S}^\textsf{T}  \mathbb{S} \mathbb{V}^\textsf{T}\\
	&=\mathbb{V}\mathbb{S}^\textsf{T} \mathbb{U}^\textsf{T} \mathbb{U} \mathbb{S} \mathbb{V}^\textsf{T}\\
	&=\mathbb{H}^\textsf{T} \mathbb{H}.
\end{align}
Thus, we can obtain
\begin{align}
	|\operatorname{det}(\widetilde{\mathbb{H}})|
	&= |\operatorname{det}(\widetilde{\mathbb{H}}^\textsf{T}\widetilde{\mathbb{H}})|^{\frac{1}{2}}\\
	&=|\operatorname{det}(\mathbb{H}^\textsf{T} \mathbb{H})|^{\frac{1}{2}}, \label{eq: 138}
\end{align}
and
\begin{align}
	\widetilde{\mathbb{H}}^{-1} \widetilde{\mathbb{H}}^{-\textsf{T}}
	&=( \widetilde{\mathbb{H}}^{\textsf{T}} \widetilde{\mathbb{H}})^{-1}\\
	&=( \mathbb{H}^{\textsf{T}} \mathbb{H})^{-1}.\label{eq: 140}
\end{align}
Substituting these, together with $\widetilde{\mathbb{H}}$, into Theorems~\ref{thm: lb}, \ref{thm: ub}, and \ref{thm: hsnr}, we can obtain the results in Theorem~\ref{thm: case II}.


\footnotesize
\begin{thebibliography}{00}
%
%
%
%
\bibitem{Kaushal2017} H. Kaushal and G. Kaddoum, ``Optical communication in space: Challenges and mitigation techniques,'' \emph{IEEE Commun. Surv. Tutor.}, vol. 19, no. 1, pp. 57-96, Firstquarter 2017.

\bibitem{Pathak2015} P. H. Pathak, X. Feng, P. Hu and P. Mohapatra, ``Visible light communication, networking, and sensing: A survey, potential and challenges,'' \emph{IEEE Commun. Surv. Tutor.}, vol. 17, no. 4, pp. 2047-2077, Fourthquarter 2015.

\bibitem{Khalighi2014} M. A. Khalighi and M. Uysal, ``Survey on free space optical communication: A communication theory perspective,'' \emph{IEEE Commun. Surv. Tutor.}, vol. 16, no. 4, pp. 2231-2258, Fourthquarter 2014.


\bibitem{Karunatilaka2015} D. Karunatilaka, F. Zafar, V. Kalavally, and R. Parthiban, ``LED based indoor visible light communications: State of the art,'' \emph{IEEE Commun. Surv. Tutor.}, vol. 17, no. 3, pp. 1649–1678, 3rd Quart., 2015.





	
\bibitem{Zhu2002} X. Zhu and J. M. Kahn, ``Free-space optical communication through atmospheric turbulence channels,'' \emph{IEEE Trans. Commun.}, vol. 50, no. 8, pp. 1293–1300, Aug. 2002.

%

\bibitem{Tavan2012} M. Tavan, E. Agrell, and J. Karout, ``Bandlimited intensity modulation,'' \emph{IEEE Trans. Commun.}, vol. 60, no. 11, pp. 3429–3439, Nov. 2012.

\bibitem{Elzanaty2020} A. Elzanaty and M.-S. Alouini, ``Adaptive coded modulation for IM/DD free-space optical backhauling: A probabilistic shaping approach,'' \emph{IEEE Trans. Commun.}, vol. 68, no. 10, pp. 6388–6402, Oct. 2020.
	
\bibitem{Shannon1949} C. E. Shannon, ``Communication in the presence of noise,'' \emph{Proc. IRE}, vol. 37, no. 1, pp. 10-21, Jan. 1949.

\bibitem{Lapidoth2009} A. Lapidoth, S. M. Moser and M. A. Wigger, ``On the capacity of free-space optical intensity channels,'' \emph{IEEE Trans. Inf. Theory}, vol. 55, no. 10, pp. 4449-4461, Oct. 2009.

\bibitem{Li2025} L. Li, ``Low-SNR asymptotic capacity of two types of optical wireless channels under average-intensity constraints,'' \emph{IEEE Trans. Inf. Theory}, vol. 71, no. 10, pp. 7504-7517, Oct. 2025.
\bibitem{Jiang2024} L. Jiang and L. Li, ``On the capacity of optical wireless channels based on IM-DD,'' \emph{J. Beijing Univ. Post. Telecommun.}, vol. 47, pp. 137-142, 2024.
%
%
%
%
%
%
%
%
%
%
%


\bibitem{Hranilovic2004} S. Hranilovic and F. R. Kschischang, ``Capacity bounds for power- and band-limited optical intensity channels corrupted by Gaussian noise,'' \emph{IEEE Trans. Inf. Theory}, vol. 50, no. 5, pp. 784–795, May 2004. 

\bibitem{Farid2010} A. A. Farid and S. Hranilovic, ``Capacity bounds for wireless optical intensity channels with Gaussian noise,'' \emph{IEEE Trans. Inf. Theory}, vol. 56, no. 12, pp. 6066-6077, Dec. 2010.


\bibitem{Wang2013} J. -B. Wang, Q. -S. Hu, J. Wang, M. Chen and J. -Y. Wang, ``Tight bounds on channel capacity for dimmable visible light communications,'' \emph{J. Light. Technol.}, vol. 31, no. 23, pp. 3771-3779, Dec. 2013.


\bibitem{Chaaban2016} A. Chaaban, J.-M. Morvan, and M.-S. Alouini, ``Free-space optical communications: Capacity bounds, approximations, and a new sphere-packing perspective,'' \emph{IEEE Trans. Commun.}, vol. 64, no. 3, pp. 1176-1191, Mar. 2016. 


\bibitem{Jiang2016} R. Jiang, Z. Wang, Q. Wang and L. Dai, ``A tight upper bound on channel capacity for visible light communications,'' \emph{IEEE Commun. Lett.}, vol. 20, no. 1, pp. 97-100, Jan. 2016.

\bibitem{Moser2017-ISIT} S. M. Moser, M. Mylonakis, L. Wang and M. Wigger, ``Asymptotic capacity results for MIMO wireless optical communication,'' in \emph{Proc. IEEE Int. Symp. Inf. Theory (ISIT)}, Aachen, Germany, 2017, pp. 536-540.

\bibitem{Chaaban2018-hsnr} A. Chaaban, Z. Rezki and M. -S. Alouini, ``Capacity bounds and high-SNR capacity of MIMO intensity-modulation optical channels,'' \emph{IEEE Trans. Wireless Commun.}, vol. 17, no. 5, pp. 3003-3017, May 2018.

\bibitem{Li2020} L. Li, S. M. Moser, L. Wang and M. Wigger, ``On the capacity of MIMO optical wireless channels,'' \emph{IEEE Trans. Inf. Theory}, vol. 66, no. 9, pp. 5660-5682, Sept. 2020.




%
%
%
%
%
%
%
%
%
%
%
%
%
%
%
%
%
%
%
%
%
%
%
%
%
%
%
%
%
%
%
%
%
%
%
%



%
%
%
%
%
%
%
%
\bibitem{Zhou2017} J. Zhou and W. Zhang, ``On the capacity of bandlimited optical intensity channels with Gaussian noise,'' \emph{IEEE Trans. Commun.}, vol. 65, no. 6, pp. 2481–2493, Jun. 2017.

\bibitem{Chaaban2017} A. Chaaban, Z. Rezki, and M.-S. Alouini, ``Fundamental limits of parallel optical wireless channels: Capacity results and outage formulation,'' \emph{IEEE Trans. Commun.}, vol. 65, no. 1, pp. 296–311, Jan. 2017.

%
\bibitem{Cover2006} T. M. Cover and J. A. Thomas, \emph{Elements of Information Theory}, 2nd ed. New York, NY, USA: Wiley, 2006.

\bibitem{Chen1998} S. S. Chen, D. L. Donoho, and M. A. Saunders, ``Atomic decomposition by basis pursuit,'' \emph{SIAM J. Sci. Comput.}, vol. 20, no. 1, pp. 33–61, 1998.
\bibitem{Hyder2016} M. M. Hyder and K. Mahata, ``A sparse recovery method for initial uplink synchronization in OFDMA systems,'' \emph{IEEE Trans. Commun.}, vol. 64, no. 1, pp. 377–386, Jan. 2016.

\bibitem{Fath2013} T. Fath and H. Haas, ``Performance comparison of MIMO techniques for optical wireless communications in indoor environments,'' \emph{IEEE Trans. Commun.}, vol. 61, no. 2, pp. 733-742, Feb. 2013.

\bibitem{Farid2007} A. A. Farid and S. Hranilovic, ``Outage capacity optimization for free-space optical links with pointing errors,'' \emph{J. Lightw. Technol.}, vol. 25, no. 7, pp. 1702-1710, Jul. 2007.

%
%
%
%
%
%
%
%
%
%
%
%
%
%
%
%
%
%
%
%
%
%
%
%
%
%
%
%
%
%
%
%
%
%
%
%
%
%
%
%
%
%
%
%
%
%
%
%
%
%
%
%
%
%
%
%
%
%
%
%
%
%
%
%
%
%
%
%
%
%

\bibitem{Berge1997} C. Berge, \emph{Topological Spaces: Including a Treatment Multi-Valued Functions, Vector Spaces, Convexity}. Chelmsford, MA, USA: Courier Corporation, 1997.


\bibitem{Shannon1948} C. E. Shannon, ``A mathematical theory of communication,'' \emph{Bell System Tech. J.}, vol. 27, no. 3, pp. 379-423, Jul. 1948.
\bibitem{Moser2004} S. M. Moser, ``Duality-based bounds on channel capacity,'' Ph. D. dissertation, ETH Zurich, Zurich, Switzerland, 2004.









\end{thebibliography}
\end{document}